\documentclass[reqno]{amsart}

\usepackage{amsaddr}
\usepackage{amsmath}
\usepackage{amssymb}
\usepackage{amsthm}
\usepackage{eucal}
\usepackage{mathrsfs}
\usepackage{fullpage}
\usepackage{setspace}
\usepackage{bm}
\usepackage{color}
\usepackage{tabularx}
\usepackage{mathabx}
\usepackage{graphicx}
\usepackage[numbers,square,sort&compress]{natbib}
\usepackage[all,cmtip]{xy}
\usepackage{lineno}
\usepackage{subfig}
\usepackage{floatrow}

\title{autocratic strategies for alternating games}
\author{alex mcavoy and christoph hauert}

\theoremstyle{definition}

\newtheorem{lemma}{Lemma}
\newtheorem{proposition}{Proposition}

\newtheorem*{unlabeledRemark}{Remark}
\newtheorem{theorem}{Theorem}

\newtheorem*{strictXTheorem}{Theorem \ref{thm:strictX}}
\newtheorem*{strictYTheorem}{Theorem \ref{thm:strictY}}
\newtheorem*{randomTheorem}{Theorem \ref{thm:random}}

\begin{document}

\begin{abstract}
Repeated games have a long tradition in the behavioral sciences and evolutionary biology. Recently, strategies were discovered that permit an unprecedented level of control over repeated interactions by enabling a player to unilaterally enforce linear constraints on payoffs. Here, we extend this theory of ``zero-determinant" (or, more generally, ``autocratic") strategies to alternating games, which are often biologically more relevant than traditional synchronous games. Alternating games naturally result in asymmetries between players because the first move matters or because players might not move with equal probabilities. In a strictly-alternating game with two players, $X$ and $Y$, we give conditions for the existence of autocratic strategies for player $X$ when (i) $X$ moves first and (ii) $Y$ moves first. Furthermore, we show that autocratic strategies exist even for (iii) games with randomly-alternating moves. Particularly important categories of autocratic strategies are extortionate and generous strategies, which enforce unfavorable and favorable outcomes for the opponent, respectively. We illustrate these strategies using the continuous Donation Game, in which a player pays a cost to provide a benefit to the opponent according to a continuous cooperative investment level. Asymmetries due to alternating moves could easily arise from dominance hierarchies, and we show that they can endow subordinate players with more autocratic strategies than dominant players.
\end{abstract}

\maketitle

\section{Introduction}

Repeated games, and, in particular, the repeated Prisoner's Dilemma, have been used extensively to study the reciprocation of cooperative behaviors in social dilemmas \citep{trivers:TQRB:1971,axelrod:Science:1981,axelrod:BB:1984,nowak:Science:2006}. These games traditionally involve a sequence of interactions in which two players act simultaneously (or, at least without knowing the opponent's move) and condition their decisions on the history of their previous encounters. Even though such synchronized decisions seem often contrived in realistic social interactions, the biologically more realistic and relevant scenario with asynchronous interactions has received surprisingly little attention. In asynchronous games, players take turns and alternate moves in either a strict or random fashion \citep{nowak:JTB:1994,wedekind:PNAS:1996}.

A classic example of an asynchronous game with alternating moves is blood donation in vampire bats \citep{wilkinson:Nature:1984}. When a well-fed bat donates blood to a hungry fellow, the recipient has the opportunity to return the favor at a later time. Similarly, social grooming between two primates is not always performed simultaneously; instead, one animal grooms another, who then has the opportunity to reciprocate in the future \citep{muroyama:B:1991}. On a smaller scale, the biosynthesis of iron-scavenging compounds by microorganisms through quorum sensing can result in asynchronous responses to fellow ``players" in the population \citep{stintzi:FEMSML:1998,miller:ARM:2001,iliopoulos:PLoSCB:2010}. Even for interactions that appear to involve simultaneous decisions, such as in acts of predator inspection by fish \citep{milinski:Nature:1987}, it remains difficult to rule out that these interactions are not instead based on rapid, non-synchronous decisions \citep{frean:PRSB:1994}.

The iterated Prisoner's Dilemma game, which involves a choice to either cooperate, $C$, or defect, $D$, in each round, has played a central role in the study of reciprocal altruism \citep{axelrod:Science:1981,axelrod:BB:1984,nowak:Science:2006}. Rather unexpectedly, after decades of intense study of iterated games, \citet{press:PNAS:2012} showed that a player can unilaterally enforce linear payoff relationships in synchronous games. For example, if $\pi_{X}$ and $\pi_{Y}$ are the expected payoffs to players $X$ and $Y$, respectively, and $\chi\geqslant 1$ is an extortion factor, then player $X$ can ensure that $\pi_{X}=\chi\pi_{Y}$, regardless of the strategy of player $Y$. Moreover, such linear relationships may be enforced using merely memory-one strategies, which condition the next move on the outcome of just the previous round.

The discovery of these so-called ``zero-determinant" strategies triggered a flurry of follow-up studies. Most notably, from an evolutionary perspective, extortionate strategies fare poorly \citep{hilbe:PNAS:2013} but can be stable provided that extortioners recognize one another \citep{adami:NC:2013}. However, generous counterparts of extortionate strategies perform much better in evolving populations \citep{stewart:PNAS:2012,stewart:PNAS:2013} and constitute Nash equilibria for the repeated Prisoner's Dilemma \citep{hilbe:GEB:2015} (but generally only if there are just two discrete levels of cooperation \citep{mcavoy:PNAS:2016}). Against humans, extortionate strategies typically underperform generous strategies when the extortioner is also a human \citep{hilbe:NC:2014} but can outperform generous strategies when the extortioner is a computer \citep{wang:NC:2016}. Thus, for the settings in which zero-determinant strategies are known to exist, their performance is sensitive to the context in which they arise. Our focus here is on extending these strategies further into the domain of alternating interactions from a classical, non-evolutionary viewpoint. In particular, we establish the existence of zero-determinant strategies for several types of alternating interactions between two players.

Recently, autocratic strategies were introduced as a generalization of zero-determinant strategies to simultaneous games with arbitrary action spaces \citep{mcavoy:PNAS:2016}. An autocratic strategy for player $X$ is any strategy that, for some constants $\alpha$, $\beta$, and $\gamma$ (not all zero), enforces the linear relationship
\begin{linenomath}
\begin{align}
\label{eq:autocratic}
\alpha\pi_{X} + \beta\pi_{Y} + \gamma &= 0
\end{align}
\end{linenomath}
on expected payoffs every strategy of player $Y$. Here, we consider autocratic strategies in alternating games. In a strictly-alternating game, one player moves first (either $X$ or $Y$) and waits for the opponent's response before moving again. This process then repeats, with each player moving only after the opponent moved (see Fig. \ref{fig:strictIStrictIIRandom}(A),(B)). In contrast, in a randomly-alternating game, the player who moves in each round is chosen stochastically: at each time step, $X$ moves with probability $\omega_{X}$ and $Y$ moves with probability $1-\omega_{X}$ for some $0\leqslant\omega_{X}\leqslant 1$ (see Fig. \ref{fig:strictIStrictIIRandom}(C)). Note that only for $\omega_X=1/2$ is it the case that both players move, on average, equally often.

\newcolumntype{C}{>{\centering\arraybackslash}p{3em}}
\floatsetup[figure]{style=plain,subcapbesideposition=top}
\begin{figure}[tp]
\centering
\sidesubfloat[]{
\begin{tabular}{ c | C C C C C C C }
\textrm{player }$X$ & ${\color{blue}C}$ &  & ${\color{red}D}$ & & ${\color{red}D}$ & & \\
\textrm{player }$Y$ &  & ${\color{red}D}$ &  & ${\color{blue}C}$ & & ${\color{blue}C}$ & \\
\hline
\textrm{round} & $0$ & $1$ & $2$ & $3$ & $4$ & $5$ & $\cdots$ \\
 & $\downarrow$ & $\downarrow$ & $\downarrow$ & $\downarrow$ & $\downarrow$ & $\downarrow$ & \\
$\left(X\textrm{'s payoff},Y\textrm{'s payoff}\right)$ & $\left(-c,b\right)$ & $\left(0,0\right)$ & $\left(0,0\right)$ & $\left(b,-c\right)$ & $\left(0,0\right)$ & $\left(b,-c\right)$ &
\end{tabular}
}%
\newline\newline\newline
\sidesubfloat[]{
\begin{tabular}{ c | C C C C C C C }
\textrm{player }$X$ &  & ${\color{blue}C}$ &  & ${\color{red}D}$ & & ${\color{red}D}$ & \\
\textrm{player }$Y$ & ${\color{red}D}$ &  & ${\color{blue}C}$ & & ${\color{blue}C}$ & & \\
\hline
\textrm{round} & $0$ & $1$ & $2$ & $3$ & $4$ & $5$ & $\cdots$ \\
  & $\downarrow$ & $\downarrow$ & $\downarrow$ & $\downarrow$ & $\downarrow$ & $\downarrow$ & \\
$\left(X\textrm{'s payoff},Y\textrm{'s payoff}\right)$ & $\left(0,0\right)$ & $\left(-c,b\right)$ & $\left(b,-c\right)$ & $\left(0,0\right)$ & $\left(b,-c\right)$ & $\left(0,0\right)$ &
\end{tabular}
}%
\newline\newline\newline
\sidesubfloat[]{
\begin{tabular}{ c | C C C C C C C }
\textrm{player }$X$ & ${\color{blue}C}$ &  & ${\color{red}D}$ & ${\color{red}D}$ & ${\color{blue}C}$ & & \\
\textrm{player }$Y$ &  & ${\color{red}D}$ &  & & & ${\color{blue}C}$ & \\
\hline
\textrm{round} & $0$ & $1$ & $2$ & $3$ & $4$ & $5$ & $\cdots$ \\
  & $\downarrow$ & $\downarrow$ & $\downarrow$ & $\downarrow$ & $\downarrow$ & $\downarrow$ & \\
$\left(X\textrm{'s payoff},Y\textrm{'s payoff}\right)$ & $\left(-c,b\right)$ & $\left(0,0\right)$ & $\left(0,0\right)$ & $\left(0,0\right)$ & $\left(-c,b\right)$ & $\left(b,-c\right)$ &
\end{tabular}
}%
\newline
\caption{Three types of interactions in the alternating Donation Game: (A) strictly-alternating game in which player $X$ moves first; (B) strictly-alternating game in which player $Y$ moves first; and (C) randomly-alternating game in which, in each round, player $X$ moves with probability $\omega_{X}$ and player $Y$ with probability $1-\omega_{X}$. For each type of alternating game, a player moves either $C$ or $D$ (cooperate or defect) in each round and both players receive a payoff from this move. Unlike in strictly-alternating games, (A) and (B), a player might move several times in a row in a randomly-alternating game, (C).\label{fig:strictIStrictIIRandom}}
\end{figure}

Previous studies of zero-determinant strategies have focused on enforcing linear payoff relationships using conditional responses with short memories. A player using a memory-one strategy determines his or her response (stochastically) based on the outcome of just the previous round. Although strategies with longer memory length have been shown to help establish cooperation \citep{hauert:PRSLB:1997,stewart:SR:2016}, they are not always reliably implemented in players with limited memory capacity (including humans) \citep{milinski:PNAS:1998,stevens:FP:2011,baek:SR:2016}. Here, we follow the tradition of concentrating on shorter-memory strategies. In particular, we establish the existence of memory-one autocratic strategies for alternating games and give several simple examples that enforce linear payoff relationships for every strategy of the opponent (even those with unlimited memory).

In the classical Donation Game \citep{sigmund:PUP:2010}, a player either (i) cooperates and donates $b$ to the opponent at a cost of $c$ or (ii) defects and donates nothing and pays no cost, which yields the payoff matrix
\begin{linenomath}
\begin{align}\label{eq:donationPayoffMatrix}
\bordermatrix{%
 & C & D \cr
C &\ b-c & \ -c \cr
D &\ b & \ 0 \cr
}
\end{align}
\end{linenomath}
and represents an instance of the Prisoner's Dilemma provided that benefits exceed the costs, $b>c>0$. The continuous Donation Game extends this binary action space and allows for a continuous range of cooperation levels \citep{killingback:PRSB:1999,wahl:JTB:1999a,wahl:JTB:1999b,killingback:AN:2002}. An action in this game is an investment level, $s$, taken from an interval, $\left[0,K\right]$, where $K$ indicates an upper bound on investments. Based on its investment level, $s$, a player then pays a cost of $c\left(s\right)$ to donate $b\left(s\right)$ to the opponent where $b(s)$ and $c(s)$ are continuous non-decreasing functions with $b(s)>c(s)>0$ for $s>0$ and $b(0)=c(0)=0$; an investment of zero corresponds to defection, which neither generates benefits nor incurs costs \citep{killingback:AN:2002}. Biologically-relevant interpretations of continuous investment levels (as well as alternating moves) include (i) the effort expended in social grooming and ectoparasite removal by primates \citep{dunbar:FP:1991}; (ii) the quantity of blood donated by one vampire bat to another \citep{wilkinson:Nature:1984}; (iii) the amount of iron-binding agents (siderophores) produced by bacterial parasites \citep{west:PRSB:2003}; and (iv) the honesty level of a (human) party involved in a trade agreement \citep{verhoeff:CSN:1998}.

In alternating games, the assignment of payoffs to players deserves closer inspection \citep{hauert:JTB:1998}. Here, we focus on alternating games in which both players obtain payoffs after every move (like in the continuous Donation Game) \citep[see Fig.~\ref{fig:strictIStrictIIRandom};][]{nowak:JTB:1994}. Alternatively, payoffs could result from every pair of moves rather than every individual move \citep{frean:PRSB:1994}. While it is possible to construct a theory of autocratic strategies for strictly-alternating games in either setting, it becomes difficult to even define payoffs in the latter setup for randomly-alternating games because either player can move several times in a row (see Fig.~\ref{fig:strictIStrictIIRandom}(C)). Therefore, we follow \citet{nowak:JTB:1994} in order to include the particularly relevant and intriguing case of randomly-alternating games. 

Randomly-alternating games seem more relevant for modeling biological interactions because often strict alternation cannot be maintained or enforced, or the players find themselves in different roles, which translate into different propensities to move. To accommodate these situations, we consider a class of randomly-alternating games in which the probability that player $X$ moves in a given round, $\omega_{X}$, is not necessarily $1/2$. Any other value of $\omega_{X}$ results in asymmetric interactions -- even if the payoffs in each encounter are symmetric -- simply because one player moves more often than the other. For example, dominance hierarchies in primates naturally result in asymmetric behavioral patterns \citep{mehlman:P:1988,lazaro_perea:AB:2004,newton_fisher:AB:2011}. In male chimpanzees, dominance hierarchies require smaller, subordinate chimpanzees to groom larger, dominant chimpanzees more often than vice versa \citep{foster:AJP:2009}. Therefore, including such asymmetries significantly expands the scope of interactions to which the theory of autocratic strategies applies.

\section{Existence of autocratic strategies}

In every round of an alternating game, either player $X$ or player $Y$ moves. On player $X$'s turn, she chooses an action, $x$, from an action space, $S_X$, and gets a payoff $f_{X}\left(x\right)$ while her opponent gets $f_{Y}\left(x\right)$. Similarly, when player $Y$ moves, he chooses an action, $y$, from $S_Y$ and gets a payoff $g_{Y}\left(y\right)$ while his opponent gets $g_{X}\left(y\right)$. Future payoffs are discounted by a factor $\lambda$ (with $0<\lambda <1$), which can represent a time preference \citep{fudenberg:MIT:1991} that is derived, for example, from interest rates for monetary payoffs. Alternatively, $\lambda$ can be interpreted as the probability that there will be another round, which results in a finitely-repeated game with an average length of $1/(1-\lambda)$ rounds \citep{nowak:Science:2006}.

\subsection{Strictly-alternating games}

In a pair of rounds in which player $X$ moves before $Y$, the payoffs are $u_{X}\left(x,y\right) :=f_{X}\left(x\right) +\lambda g_{X}\left(y\right)$ and $u_{Y}\left(x,y\right) :=f_{Y}\left(x\right) +\lambda g_{Y}\left(y\right)$, respectively. Note that the payoffs from $Y$'s move are discounted by a factor of $\lambda$ because $Y$ moves one round after $X$. The payoff functions, $u_{X}$ and $u_{Y}$, satisfy the ``equal gains from switching" property \citep{nowak:AAM:1990}, which means the difference between $u_{X}\left(x,y\right)$ and $u_{X}\left(x',y\right)$ is independent of the opponent's move, $y$. This property follows immediately from the fact that $u_{X}$ (or $u_{Y}$) is obtained by adding the separate contributions based on the moves of $X$ and $Y$.

Thus, if player $X$ moves first and $\left(x_{0},y_{1},x_{2},y_{3},\dots\right)$ is the sequence of play, then her average payoff is
\begin{linenomath}
\begin{align}\label{eq:expectedXFirst}
\pi_{X} &= \left(1-\lambda\right)\left[\sum_{t=0}^{\infty}\lambda^{2t}f_{X}\left(x_{2t}\right) + \sum_{t=0}^{\infty}\lambda^{2t+1}g_{X}\left(y_{2t+1}\right)\right] = \left(1-\lambda\right)\sum_{t=0}^{\infty}\lambda^{2t}u_{X}\left(x_{2t},y_{2t+1}\right) .
\end{align}
\end{linenomath}
The second expression resembles the average payoff for a simultaneous-move game whose one-shot payoff function is $u_{X}$ \citep{fudenberg:MIT:1991}. Similarly, replacing $u_{X}$ with $u_{Y}$ yields player $Y$'s average payoff, $\pi_{Y}$.

For strictly-alternating games, we borrow the term ``memory-one strategy" from synchronous games to mean a conditional response based on the previous moves of both players. Even though this memory now covers two rounds of interactions, it remains meaningful because player $Y$ always moves after player $X$ (or vice versa). For an arbitrary action space, $S_{X}$, a memory-one strategy for player $X$ formally consists of an initial action, $\sigma_{X}^{0}$, and a memory-one action, $\sigma_{X}\left[x,y\right]$, which are both probability distributions on $S_{X}$. Since player $X$ moves first, she bases her initial action on $\sigma_{X}^{0}$ and subsequently uses the two previous moves, $x$ and $y$, to choose an action in the next round using $\sigma_{X}\left[x,y\right]$ (see Fig. \ref{fig:memoryOneStrategies} for graphical depictions).

\begin{figure}
\begin{center}
\includegraphics[scale=0.45]{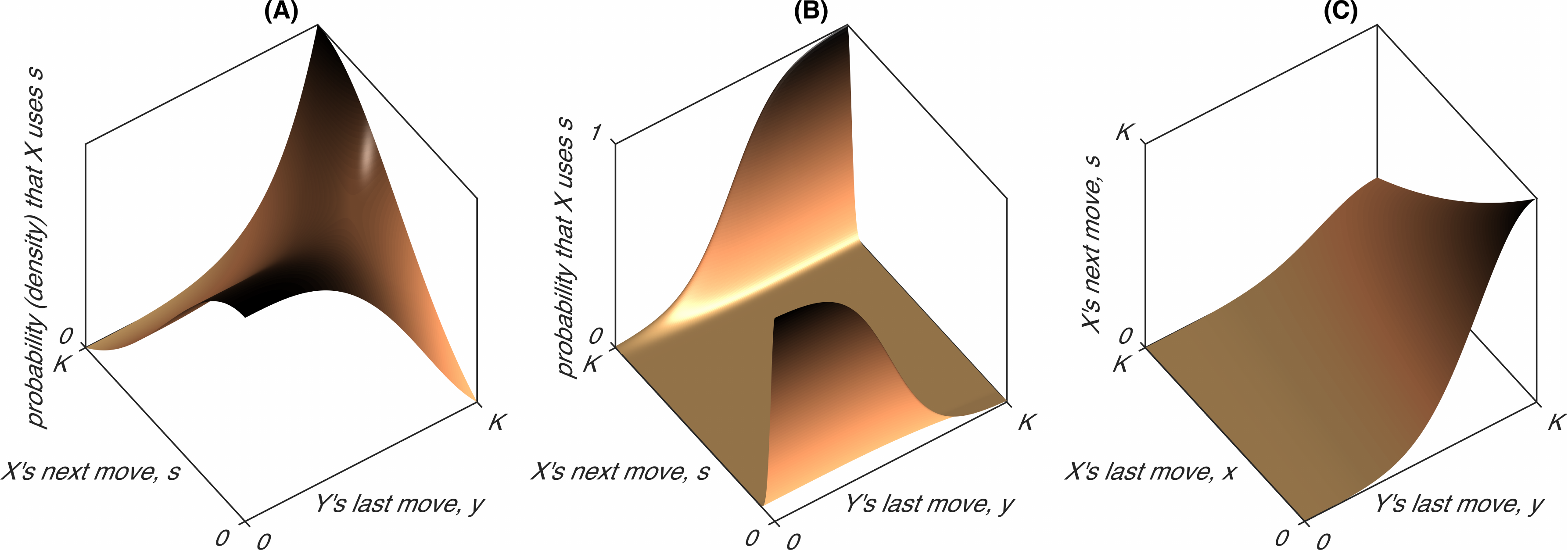}
\caption{Three examples of memory-one strategies for player $X$ in a strictly-alternating game whose action spaces are $S_{X}=S_{Y}=\left[0,K\right]$ for some $K>0$. (A) depicts a reactive stochastic strategy in which solely $Y$'s last move is used to determine the probability distribution with which $X$ chooses her next action. The mean of this distribution is an increasing function of $y$, which means that $X$ is more likely to invest more (play closer to $K$) as $y$ increases. (B) illustrates a reactive two-point strategy, i.e. a strategy that plays only two actions, $0$ (defect) or $K$ (fully cooperate). Player $Y$'s last move is used to determine the probability with which $X$ plays $K$ in the next round; if $X$ does not use $K$, then she uses $0$. As $Y$'s last action, $y$, increases, $X$ is more likely to reciprocate and use $K$ in response. (C) shows a strategy that gives $X$'s next move deterministically as a function of both of the players' last moves. Unlike in (A) and (B), $X$'s next move is uniquely determined by her own last move, $x$, and the last move of her opponent, $y$. If $Y$ used $y=0$ in the previous round, then $X$ responds by playing $0$ as well. $X$'s subsequent action is then an increasing function of $y$ whose rate of change is largest when $X$'s last move, $x$, is smallest. In particular, if $Y$ used $y>0$ in the previous round, then $X$'s next action is a decreasing function of her last move, $x$. Therefore, in (C), $X$ exploits players who are unconditional cooperators.
\label{fig:memoryOneStrategies}}
\end{center}
\end{figure}

\begin{theorem}[Autocratic strategies for strictly-alternating games in which $X$ moves first]\label{thm:strictX}
Suppose that
\begin{linenomath}
\begin{align}\label{eq:mainEquationStrictI}
\Big( \alpha &f_{X}\left(x\right) + \beta f_{Y}\left(x\right) + \gamma \Big) + \lambda\Big( \alpha g_{X}\left(y\right) + \beta g_{Y}\left(y\right) + \gamma \Big) \nonumber \\
&= \psi\left(x\right) - \lambda^{2}\int\limits_{s\in S_{X}}\psi\left(s\right)\,d\sigma_{X}\left[x,y\right]\left(s\right) - \left(1-\lambda^{2}\right)\int\limits_{s\in S_{X}}\psi\left(s\right)\,d\sigma_{X}^{0}\left(s\right)
\end{align}
\end{linenomath}
holds for some bounded $\psi$ and for each $x\in S_{X}$ and $y\in S_{Y}$. Then, if player $X$ moves first, the pair $\left(\sigma_{X}^{0},\sigma_{X}\left[x,y\right]\right)$ allows $X$ to enforce the equation $\alpha\pi_{X}+\beta\pi_{Y}+\gamma =0$ for every strategy of player $Y$.
\end{theorem}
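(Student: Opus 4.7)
The overall strategy is a telescoping argument along $X$'s (even-indexed) moves, driven by her memory-one update rule and enabled by the geometric weights $\lambda^{2t}$. Using the payoff formula $\pi_{X}=(1-\lambda)\sum_{t\geqslant 0}\lambda^{2t}u_{X}(x_{2t},y_{2t+1})$ from \eqref{eq:expectedXFirst} (and its $Y$ analogue), together with the identity $(1-\lambda)(1+\lambda)\sum_{t\geqslant 0}\lambda^{2t}=1$, I would first rewrite
$\alpha\pi_{X}+\beta\pi_{Y}+\gamma=(1-\lambda)\,\mathbb{E}\!\left[\sum_{t=0}^{\infty}\lambda^{2t}\bigl(\alpha u_{X}(x_{2t},y_{2t+1})+\beta u_{Y}(x_{2t},y_{2t+1})+(1+\lambda)\gamma\bigr)\right].$
Since $u_{X}(x,y)=f_{X}(x)+\lambda g_{X}(y)$ and $u_{Y}(x,y)=f_{Y}(x)+\lambda g_{Y}(y)$, the bracketed quantity splits cleanly as $\bigl[\alpha f_{X}(x_{2t})+\beta f_{Y}(x_{2t})+\gamma\bigr]+\lambda\bigl[\alpha g_{X}(y_{2t+1})+\beta g_{Y}(y_{2t+1})+\gamma\bigr],$ which is exactly the left-hand side of the hypothesis \eqref{eq:mainEquationStrictI} evaluated at $(x_{2t},y_{2t+1})$.

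Next, I would substitute the right-hand side of \eqref{eq:mainEquationStrictI} into each summand and take the expectation over the joint trajectory induced by $X$'s memory-one pair $(\sigma_{X}^{0},\sigma_{X}[x,y])$ and any (history-dependent) strategy of $Y$. The memory-one structure gives $\mathbb{E}\!\left[\int\psi(s)\,d\sigma_{X}[x_{2t},y_{2t+1}](s)\right]=\mathbb{E}[\psi(x_{2t+2})]$ via conditioning on the history through round $2t+1$, while $\int\psi(s)\,d\sigma_{X}^{0}(s)=\mathbb{E}[\psi(x_{0})]$. In expectation, each summand therefore equals $\mathbb{E}[\psi(x_{2t})]-\lambda^{2}\mathbb{E}[\psi(x_{2t+2})]-(1-\lambda^{2})\mathbb{E}[\psi(x_{0})]$. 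Weighting by $\lambda^{2t}$ and summing over $t$, the first two terms telescope to $\mathbb{E}[\psi(x_{0})]$, and the last term contributes $(1-\lambda^{2})\mathbb{E}[\psi(x_{0})]\sum_{t\geqslant 0}\lambda^{2t}=\mathbb{E}[\psi(x_{0})]$; the two cancel, and the overall value is zero. Multiplying by $1-\lambda$ yields $\alpha\pi_{X}+\beta\pi_{Y}+\gamma=0$, and this holds for every strategy of $Y$ because $Y$'s strategy entered only through the values of $y_{2t+1}$ at which the identity \eqref{eq:mainEquationStrictI} is applied pointwise.

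The main obstacle, and the only place care is needed, is justifying the interchange of expectation with the infinite sum and the telescoping of that infinite sum. Both steps are secured by the hypothesis that $\psi$ is bounded: with $\sup|\psi|<\infty$ and $0<\lambda<1$, the sums $\sum_{t}\lambda^{2t}\mathbb{E}[\psi(x_{2t})]$ and $\sum_{t}\lambda^{2t+2}\mathbb{E}[\psi(x_{2t+2})]$ converge absolutely and uniformly, Fubini applies, and the tail $\lambda^{2N}\mathbb{E}[\psi(x_{2N})]\to 0$ as $N\to\infty$. Conceptually, the point to highlight is that the initial-action correction $(1-\lambda^{2})\int\psi\,d\sigma_{X}^{0}$ appearing in \eqref{eq:mainEquationStrictI} is placed precisely so that it absorbs the $t=0$ boundary term of the telescope, leaving no residual. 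The rest of the argument is a transparent extension of the Press--Dyson identity, adapted so that the ``one-step'' update of $X$'s distribution really corresponds to two calendar rounds in the alternating game.
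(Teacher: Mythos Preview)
Your argument is correct and follows essentially the same route as the paper: the paper isolates the telescoping step as a separate lemma (an Akin-type identity showing $\sum_{t}\lambda^{2t}\!\int[\psi(x)-\lambda^{2}\!\int\psi\,d\sigma_{X}[x,y]]\,d\nu_{2t+1}^{1}=\int\psi\,d\sigma_{X}^{0}$, proved exactly by your telescope with vanishing tail), then plugs in the hypothesis and cancels the $(1-\lambda)\!\int\psi\,d\sigma_{X}^{0}$ term on both sides, while you perform the telescope inline in expectation notation. The only organizational difference is that factoring the telescoping out as a standalone proposition lets the paper reuse it verbatim for the $Y$-moves-first and randomly-alternating cases.
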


A proof of this result may be found in Supporting Information.

Note that the average payoff, $\pi_{X}$, of Eq.~(\ref{eq:expectedXFirst}) is the same as in a simultaneous-move game whose payoff function is $\frac{1}{1+\lambda}u_{X}$ and whose discounting factor is $\lambda^{2}$ \citep{mcavoy:PNAS:2016}. Hence, it is not so surprising that autocratic strategies exist in this case too (and under similar conditions).
However, the situation changes if $Y$ moves first and $\left(y_{0},x_{1},y_{2},x_{3},\dots\right)$ is the sequence of play. In this case, $X$'s average payoff is
\begin{linenomath}
\begin{align}\label{eq:expectedYFirst}
\pi_{X} &= \left(1-\lambda\right)\left[g_{X}\left(y_{0}\right) + \lambda\sum_{t=0}^{\infty}\lambda^{2t}u_{X}\left(x_{2t+1},y_{2t+2}\right)\right] .
\end{align}
\end{linenomath}
When $Y$ moves first, player $X$'s initial move, $\sigma_{X}^{0}\left[y_{0}\right]$, is now a function of $Y$'s first move, $y_{0}$. However, $X$'s lack of control over the first round does not (in general) preclude the existence of autocratic strategies:
\begin{theorem}[Autocratic strategies for strictly-alternating games in which $Y$ moves first]\label{thm:strictY}
Suppose that
\begin{linenomath}
\begin{align}\label{eq:mainEquationStrictII}
\Big( \alpha &f_{X}\left(x\right) + \beta f_{Y}\left(x\right) + \gamma \Big) + \lambda\Big( \alpha g_{X}\left(y\right) + \beta g_{Y}\left(y\right) + \gamma \Big) + \left(\frac{1-\lambda^{2}}{\lambda}\right)\Big( \alpha g_{X}\left(y_{0}\right) + \beta g_{Y}\left(y_{0}\right) + \gamma \Big) \nonumber \\
&= \psi\left(x\right) - \lambda^{2}\int\limits_{s\in S_{X}}\psi\left(s\right)\,d\sigma_{X}\left[x,y\right]\left(s\right) - \left(1-\lambda^{2}\right)\int\limits_{s\in S_{X}}\psi\left(s\right)\,d\sigma_{X}^{0}\left[y_{0}\right]\left(s\right)
\end{align}
\end{linenomath}
holds for some bounded $\psi$ and for each $x\in S_{X}$ and $y_{0},y\in S_{Y}$. Then, if player $X$ moves second, the pair $\left(\sigma_{X}^{0}\left[y_{0}\right],\sigma_{X}\left[x,y\right]\right)$ allows $X$ to enforce the equation $\alpha\pi_{X}+\beta\pi_{Y}+\gamma =0$ for every strategy of player $Y$.
\end{theorem}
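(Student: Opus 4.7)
The plan is to mirror the proof of Theorem~\ref{thm:strictX}, modified to accommodate the fact that player~$Y$ now opens the game and thus $y_{0}$ enters the payoffs asymmetrically. To begin, I introduce the one-pair combination
$\Phi(x,y):=\bigl(\alpha f_{X}(x)+\beta f_{Y}(x)+\gamma\bigr)+\lambda\bigl(\alpha g_{X}(y)+\beta g_{Y}(y)+\gamma\bigr)$
(the left side of~\eqref{eq:mainEquationStrictI}) and the opening-move combination
$\Gamma(y_{0}):=\alpha g_{X}(y_{0})+\beta g_{Y}(y_{0})+\gamma$.
Starting from~\eqref{eq:expectedYFirst} and the analogous expression for $\pi_{Y}$, distributing $\gamma$ through the elementary identity
$\gamma=(1-\lambda)\gamma+(1-\lambda)\lambda(1+\lambda)\gamma\sum_{t\geqslant 0}\lambda^{2t}$
into the opening-round piece and the pair-indexed tail yields the clean decomposition
\begin{linenomath}
\begin{align*}
\alpha\pi_{X}+\beta\pi_{Y}+\gamma \;=\; (1-\lambda)\Gamma(y_{0})+(1-\lambda)\lambda\sum_{t=0}^{\infty}\lambda^{2t}\Phi\bigl(x_{2t+1},y_{2t+2}\bigr).
\end{align*}
\end{linenomath}

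Next I would solve hypothesis~\eqref{eq:mainEquationStrictII} for $\Phi(x,y)$ and substitute into the sum, splitting the resulting expression into four pieces according to the four terms on the right-hand side of~\eqref{eq:mainEquationStrictII}. Two cancellations then close the argument. The first is purely algebraic and is precisely what motivates the new $\tfrac{1-\lambda^{2}}{\lambda}\Gamma(y_{0})$ correction in the hypothesis: summing this term against the weights $(1-\lambda)\lambda\lambda^{2t}$ reproduces $(1-\lambda)\Gamma(y_{0})$ exactly, cancelling the stand-alone opening contribution. The second is the standard memory-one telescoping: since $x_{2t+3}$ is distributed as $\sigma_{X}[x_{2t+1},y_{2t+2}]$ conditionally on the history through round~$2t+2$, taking total expectation in
$\sum_{t\geqslant 0}\lambda^{2t}\psi(x_{2t+1})-\sum_{t\geqslant 0}\lambda^{2t+2}\int\psi\,d\sigma_{X}[x_{2t+1},y_{2t+2}]$
collapses the series to $\psi(x_{1})$, whose expectation equals $\int\psi\,d\sigma_{X}^{0}[y_{0}]$ averaged over $y_{0}$. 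This exactly matches the residual $-(1-\lambda^{2})\int\psi\,d\sigma_{X}^{0}[y_{0}]$ piece after it is summed against the geometric weights, so the two contributions cancel and we obtain $\alpha\pi_{X}+\beta\pi_{Y}+\gamma=0$.

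The main obstacle I anticipate is bookkeeping rather than any single deep step, but it is a genuine one: since $y_{0}$ appears inside \emph{both} $\sigma_{X}^{0}[y_{0}]$ and $\Gamma(y_{0})$, the expectation over $Y$'s opening move must be deferred until \emph{after} the telescoping cancellation, otherwise the opening term and the $\Gamma(y_{0})$ correction will not visibly match. Boundedness of $\psi$ is also essential, since it justifies Fubini and dominated convergence to interchange the infinite sum with the total expectation, and it guarantees $\lambda^{2t}\,E\!\left[\psi(x_{2t+1})\right]\to 0$, which is what lets the telescoping actually close. Once these points are addressed, the conclusion follows for every strategy of player~$Y$, exactly as in Theorem~\ref{thm:strictX}.
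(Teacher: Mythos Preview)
Your proposal is correct and follows essentially the same route as the paper. The paper packages the telescoping step as a separate lemma/proposition (an analogue of Akin's lemma, stated in measure-theoretic notation with the marginals $\nu_{t}^{k}$), and it integrates the hypothesis over $y_{0}$ against $\sigma_{Y}^{0}$ \emph{before} invoking the telescoping identity rather than after; but the substance---split off the opening-round contribution, substitute the hypothesis into the pair-indexed tail, telescope $\psi(x_{2t+1})-\lambda^{2}\!\int\psi\,d\sigma_{X}[x_{2t+1},y_{2t+2}]$ down to the expectation of $\psi(x_{1})$, and watch the $\Gamma(y_{0})$ correction cancel the stand-alone opening term---is identical. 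Your concern that averaging over $y_{0}$ must be deferred is unnecessary (once the hypothesis is integrated against $d\sigma_{Y}^{0}(y_{0})$, both $\Gamma$ and $\sigma_{X}^{0}[\,\cdot\,]$ appear only through their $\sigma_{Y}^{0}$-averages and the cancellations still line up), but your ordering is equally valid.
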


For a proof of this statement, see Supporting Information.

Note that Eq.~(\ref{eq:mainEquationStrictII}) is slightly more restrictive than Eq.~(\ref{eq:mainEquationStrictI}) because player $X$ has no control over the outcome of the initial round. Evidently, for undiscounted (infinite) games ($\lambda =1$), it is irrelevant who moves first and hence the conditions for the existence of autocratic strategies coincide (c.f. Eqs. (\ref{eq:mainEquationStrictI}) and (\ref{eq:mainEquationStrictII})).

\subsection{Randomly-alternating games}

In a randomly-alternating game, the player who moves in any given round is determined probabilistically: player $X$ moves with probability $\omega_{X}$ and player $Y$ with probability $1-\omega_{X}$. Suppose that $X$ and $Y$ each make plans to play $x_{t}$ and $y_{t}$ at time $t$, respectively, assuming they move at time $t$. Then, in the repeated game, these strategies give player $X$ an average payoff of
\begin{linenomath}
\begin{align}\label{eq:payRandom}
\pi_{X} &= \left(1-\lambda\right)\sum_{t=0}^{\infty}\lambda^{t}\Big( \omega_{X}f_{X}\left(x_{t}\right) + \left(1-\omega_{X}\right) g_{X}\left(y_{t}\right) \Big) .
\end{align}
\end{linenomath}
$Y$'s average payoff, $\pi_{Y}$, is obtained from Eq. (\ref{eq:payRandom}) by replacing $f_{X}$ and $g_{X}$ by $f_{Y}$ and $g_{Y}$, respectively.

For randomly-alternating games, we need to reconsider the concept of memory-one strategies. If moves alternate randomly, then a logical extension is provided by a conditional response based on the previous move as well as on which player moved. In particular, $\sigma_{X}^{Y}\left[y\right]$ denotes a mixed action for player $X$ after player $Y$ uses $y$ in the previous round, and $\sigma_{X}^{X}\left[x\right]$ denotes a mixed action after playing $x$ herself. Note that the cognitive requirement in terms of memory capacity in strictly-alternating games remains the same as for simultaneous games, whereas for randomly-alternating games the requirements are significantly less demanding as reflected in two univariate response functions as compared to response functions involving two variables. For two-action games (such as the classical Donation Game), however, memory-one strategies for synchronous, strictly-alternating, and randomly-alternating games all reduce to four-tuples of probabilities (see \S\ref{sec:classicalDG} below).

Rather surprisingly, randomly-alternating games also admit autocratic strategies:
\begin{theorem}[Autocratic strategies for randomly-alternating games]\label{thm:random}
If, for some bounded $\psi$,
\begin{linenomath}
\begin{subequations}\label{eq:mainEquationRandom}
\begin{align}
\alpha f_{X}\left(x\right) + \beta f_{Y}\left(x\right) + \gamma &= \psi\left(x\right) - \lambda \omega_{X}\int\limits_{s\in S_{X}}\psi\left(s\right)\,d\sigma_{X}^{X}\left[x\right]\left(s\right) -\left(1-\lambda\right)\omega_{X}\int\limits_{s\in S_{X}}\psi\left(s\right)\,d\sigma_{X}^{0}\left(s\right) ; \\
\alpha g_{X}\left(y\right) + \beta g_{Y}\left(y\right) + \gamma &= -\lambda \omega_{X}\int\limits_{s\in S_{X}}\psi\left(s\right)\,\sigma_{X}^{Y}\left[y\right]\left(s\right) -\left(1-\lambda\right)\omega_{X}\int\limits_{s\in S_{X}}\psi\left(s\right)\,d\sigma_{X}^{0}\left(s\right) 
\end{align}
\end{subequations}
\end{linenomath}
for each $x\in S_{X}$ and $y\in S_{Y}$, then the strategy $\left(\sigma_{X}^{0},\sigma_{X}^{X}\left[x\right] ,\sigma_{X}^{Y}\left[y\right]\right)$ allows $X$ to enforce the equation $\alpha\pi_{X}+\beta\pi_{Y}+\gamma =0$ for every strategy of player $Y$.
\end{theorem}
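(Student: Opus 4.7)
My plan follows the pattern used for Theorems~\ref{thm:strictX} and~\ref{thm:strictY}: substitute the two hypothesized identities into the expression for $\alpha\pi_X+\beta\pi_Y+\gamma$, recognize that the result telescopes along $X$'s memory-one recursion, and verify that the remaining boundary terms cancel against the initial-round contributions.

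First, splitting $\gamma=\omega_X\gamma+(1-\omega_X)\gamma$ and applying Eq.~(\ref{eq:payRandom}) to both $\pi_X$ and the analogous formula for $\pi_Y$ yields
\begin{linenomath}
\begin{align*}
\alpha\pi_X+\beta\pi_Y+\gamma &= (1-\lambda)\sum_{t=0}^{\infty}\lambda^{t}\,\mathbb{E}\Big[\omega_X\bigl(\alpha f_X(x_t)+\beta f_Y(x_t)+\gamma\bigr) \\
&\qquad\qquad\qquad\qquad + (1-\omega_X)\bigl(\alpha g_X(y_t)+\beta g_Y(y_t)+\gamma\bigr)\Big] ,
\end{align*}
\end{linenomath}
where the expectation averages over the randomness in $X$'s plans, $Y$'s strategy, and the identity of the mover in each round. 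I would then replace the two integrands using Eqs.~(\ref{eq:mainEquationRandom}a) and~(\ref{eq:mainEquationRandom}b), weighted by $\omega_X$ and $1-\omega_X$ respectively.

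The pivotal observation is that at round $t$ player $X$ moves with probability $\omega_X$ and $Y$ with probability $1-\omega_X$, independently of the plans $x_t,y_t$; consequently $X$'s next plan $x_{t+1}$ is drawn from $\sigma_X^X[x_t]$ when $X$ moves and from $\sigma_X^Y[y_t]$ when $Y$ moves. Writing $\psi_t := \mathbb{E}[\psi(x_t)]$, the tower property gives
\begin{linenomath}
\begin{align*}
\omega_X\,\mathbb{E}\!\!\int\psi\,d\sigma_X^X[x_t] + (1-\omega_X)\,\mathbb{E}\!\!\int\psi\,d\sigma_X^Y[y_t] = \psi_{t+1} .
\end{align*}
\end{linenomath}
Combining the two hypotheses with the appropriate weights and taking total expectations therefore collapses each round's summand to $\omega_X\psi_t - \lambda\omega_X\psi_{t+1} - (1-\lambda)\omega_X\!\int\psi\,d\sigma_X^0$, because the $\omega_X^2$ and $\omega_X(1-\omega_X)$ contributions to the $\sigma_X^0$-term recombine via $\omega_X+(1-\omega_X)=1$.

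Finally, multiplying this expression by $(1-\lambda)\lambda^t$ and summing over $t\geqslant 0$, the constant ``$\int\psi\,d\sigma_X^0$'' terms collapse to $-(1-\lambda)\omega_X\!\int\psi\,d\sigma_X^0$ via $(1-\lambda)\sum_t\lambda^t=1$, while the remaining sum telescopes to $(1-\lambda)\omega_X\psi_0$ (the tail vanishes because $\psi$ is bounded, so $\lambda^{t+1}\psi_{t+1}\to 0$). Since $x_0\sim\sigma_X^0$ gives $\psi_0=\int\psi\,d\sigma_X^0$, the two boundary contributions cancel exactly and $\alpha\pi_X+\beta\pi_Y+\gamma=0$, as required. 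The main obstacle — really the only subtle step — is the conditional-expectation identification in the third paragraph, since one must carefully separate the randomness of \emph{which} player moves from the randomness of \emph{what} is planned; once that bookkeeping is set up, the telescoping and cancellation are mechanical.
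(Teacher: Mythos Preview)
Your proposal is correct and follows essentially the same telescoping argument as the paper's proof. The paper packages the identical computation in measure-theoretic language: it works with the marginal measures $\nu_{t}^{0}$ on the disjoint union $S_{X}\sqcup S_{Y}$, extends $\psi$ by zero on $S_{Y}$, and isolates the telescoping step as a separate Akin-type lemma (Lemma~\ref{lem:mainLemmaRandom}/Proposition~\ref{prop:mainPropRandom}) before substituting the hypotheses, whereas you use the ``plans'' variables $x_{t},y_{t}$ from Eq.~(\ref{eq:payRandom}) and the tower property directly; your identity $\omega_{X}\,\mathbb{E}\!\int\psi\,d\sigma_{X}^{X}[x_{t}]+(1-\omega_{X})\,\mathbb{E}\!\int\psi\,d\sigma_{X}^{Y}[y_{t}]=\psi_{t+1}$ is precisely the paper's $\int\omega_{X}\sigma_{X}[s](E)\,d\nu_{t}^{0}(s)=\nu_{t+1}^{0}(E)$ after integrating $\psi$ and using that the time-$t$ coin flip is independent of the plans.
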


For a proof of this result, we refer the reader to Supporting Information.

However, through examples we demonstrate in \S\ref{sec:classicalDG} and \S\ref{sec:CDG} that autocratic strategies do require that player $X$ moves sufficiently often, i.e. condition Eq.~(\ref{eq:mainEquationRandom}) implicitly puts a lower bound on $\omega_{X}$.

\begin{unlabeledRemark}
Theorems \ref{thm:strictX}, \ref{thm:strictY}, and \ref{thm:random} give conditions under which $X$ can enforce $\alpha\pi_{X}+\beta\pi_{Y}+\gamma =0$ for every strategy of player $Y$. Although $X$ is using a memory-one strategy to enforce this linear relationship, we make no assumptions on $Y$'s strategy; it can be any behavioral strategy with arbitrary (even infinite) memory. For two-action (and undiscounted) games with simultaneous moves, \citet{press:PNAS:2012} show that if $X$ uses a memory-one strategy, then the strategy of $Y$ may also be assumed to be memory-one. While this result is required for the use of their ``determinant trick" to establish the existence of zero-determinant strategies, it is not needed here due to a technique of \citet{akin:Games:2015}. Further details are in Supporting Information.
\end{unlabeledRemark}

\section{Example: classical Donation Game}\label{sec:classicalDG}

While our main results hold for alternating games with generic action spaces, we first illustrate their implications for the classical, two-action Donation Game. The classical Donation Game, whose payoff matrix is given by Eq.~(\ref{eq:donationPayoffMatrix}), is based on the discrete actions of cooperate, $C$, and defect, $D$. Without discounting, initial moves do not matter and hence a memory-one strategy for player $X$ is defined by a four-tuple, $\mathbf{p}:=\left(p_{CC},p_{CD},p_{DC},p_{DD}\right)$, where $p_{xy}$ is the probability that $X$ cooperates after $X$ plays $x$ and $Y$ plays $y$ for $x,y\in\left\{C,D\right\}$. In the simultaneous-move Donation Game, \citet{press:PNAS:2012} show that for $\chi\geqslant 1$,
\begin{linenomath}
\begin{align}
\label{eq:pdiscrete}
\mathbf{p} &= \left(1-\phi\left(\chi -1\right)\left(b-c\right),\ 1-\phi\left(\chi b+c\right),\ \phi\left(b+\chi c\right),\ 0\right)
\end{align}
\end{linenomath}
unilaterally enforces the extortionate relationship $\pi_{X}=\chi\pi_{Y}$ provided that a normalization factor, $\phi$, exists.

In undiscounted (infinite) and strictly-alternating games, we know from Eqs.~(\ref{eq:mainEquationStrictI}) and (\ref{eq:mainEquationStrictII}) that player $X$ does not need to take into account who moves first when devising an autocratic strategy and, moreover, the conditions become identical to those for simultaneous games \citep{mcavoy:PNAS:2016,press:PNAS:2012}. Therefore, player $X$ can use a single strategy to enforce $\pi_{X}=\chi\pi_{Y}$ in both simultaneous and strictly-alternating games. For discounted (finite) games, however, autocratic strategies depend on whether the moves are simultaneous or strictly-alternating, but the condition on the discounting factor guaranteeing their existence, $\lambda\geqslant\left(b+\chi c\right) / \left(\chi b+c\right)$, does not.

In the undiscounted but randomly-alternating Donation Game, player $X$ moves with probability $\omega_{X}$ and player $Y$ with probability $1-\omega_{X}$ in each round. A memory-one strategy for player $X$ is given by $\mathbf{p}^{X}=\left(p_{C}^{X},p_{D}^{X}\right)$ and $\mathbf{p}^{Y}=\left(p_{C}^{Y},p_{D}^{Y}\right)$, where $p_{x}^{X}$ (resp. $p_{y}^{Y}$) denotes the probability that $X$ plays $C$ if $X$ moved $x$ (resp. $Y$ moved $y$) in the preceding round. In this game, player $X$ can enforce $\pi_{X}=\chi\pi_{Y}$ with
\begin{linenomath}
\begin{align}
\label{eq:pdiscreteAlternating}
\mathbf{p}^{X}=\left(\frac{1}{\omega_{X}}\left( 1-\phi\left(\chi b+c\right) \right),\ 0\right),\qquad \mathbf{p}^{Y}=\left(\frac{1}{\omega_{X}}\phi\left(b+\chi c\right),\ 0\right)
\end{align}
\end{linenomath}
provided that the normalization factor, $\phi$, falls within the range
\begin{linenomath}
\begin{align}
\frac{1-\omega_{X}}{\chi b+c} \leqslant \phi \leqslant \min\left\{\frac{\omega_{X}}{b+\chi c},\frac{1}{\chi b+c}\right\} .
\end{align}
\end{linenomath}
The existence of such a $\phi$ in this range requires that $X$ moves sufficiently frequently, i.e.
\begin{linenomath}
\begin{align}
\omega_{X} &\geqslant \frac{b+\chi c}{\left(\chi +1\right)\left(b+c\right)} .
\end{align}
\end{linenomath}
Otherwise, player $X$ loses control over the outcome of the game and can no longer enforce a linear payoff relationship. The autocratic strategy defined by Eq.~(\ref{eq:pdiscreteAlternating}) is unforgiving and always responds to defection with defection but more readily cooperates than its counterpart for simultaneous or strictly-alternating Donation Games, defined by Eq.~(\ref{eq:pdiscrete}), since $p_{C}^{X}=\frac{1}{\omega_{X}}p_{CD}\geqslant p_{CD}$ and $p_{C}^{Y}=\frac{1}{\omega_{X}}p_{DC}\geqslant p_{DC}$.

\section{Example: continuous Donation Game}\label{sec:CDG}

In the continuous Donation Game, the action space available to players $X$ and $Y$ is an interval $\left[0,K\right]$, which indicates a continuous range of cooperative investment levels with an upper bound $K>0$. If $X$ plays $x\in\left[0,K\right]$, she donates $b(x)$ to her opponent at a cost $c(x)$ to herself with $b(x)>c(x)$ for $x>0$ and $b(0)=c(0)=0$ \citep{killingback:AN:2002}. This game is symmetric with $f_{X}\left(s\right) =g_{Y}\left(s\right) =-c\left(s\right)$ and $f_{Y}\left(s\right) =g_{X}\left(s\right) =b\left(s\right)$.

\subsection{Extortionate, generous, and equalizer strategies}

For each variant of alternating moves, we consider three particularly important classes of autocratic strategies for the continuous Donation Game: equalizer, extortionate, and generous. An equalizer strategy is an autocratic strategy that allows $X$ to unilaterally set either $\pi_{X}=\gamma$ (self-equalizing) or $\pi_{Y}=\gamma$ (opponent-equalizing) \citep{hilbe:PNAS:2013}. In all scenarios, we show that no self-equalizing strategies exist that allow player $X$ to set $\pi_{X}=\gamma$ for $\gamma >0$. However, player $X$ can typically set the score of her opponent. Equalizer strategies are defined in the same way for alternating and simultaneous-move games, whereas extortionate and generous strategies require slightly different definitions. In the simultaneous version of the continuous Donation Game, player $X$ can enforce the linear relationship $\pi_{X}-\kappa =\chi\left(\pi_{Y}-\kappa\right)$ for any $\chi\geqslant 1$ and $0\leqslant\kappa\leqslant b\left(K\right) -c\left(K\right)$, provided $\lambda$ is sufficiently close to $1$ \citep{mcavoy:PNAS:2016}. Note that the ``baseline payoff," $\kappa$, indicates the payoff of an autocratic strategy against itself \citep{hilbe:NC:2014}. If $\chi >1$ and $\kappa =0$, then such an autocratic strategy is called ``extortionate" since it ensures that the expected payoff of player $X$ is at least that of player $Y$. Conversely, if $\kappa =b\left(K\right) -c\left(K\right)$, then this strategy is called ``generous" (or ``compliant") since it ensures that the expected payoff of player $X$ is at most that of player $Y$ \citep{stewart:PNAS:2013,hilbe:PNAS:2013}.

The bounds on $\kappa$ arise from the payoffs for mutual cooperation and mutual defection in repeated games. Of course, in the simultaneous-move game, those bounds are the same as the payoffs for mutual cooperation and defection in one-shot interactions. Discounted (finite), alternating games, on the other hand, result in asymmetric payoffs even if the underlying one-shot interaction is symmetric. For example, if player $X$ moves first in the strictly-alternating, continuous Donation Game, and if both players are unconditional cooperators, then $\pi_{X}=\left(\lambda b\left(K\right) -c\left(K\right)\right) /\left(1+\lambda\right)$ but $\pi_{Y}=\left(b\left(K\right) -\lambda c\left(K\right)\right)/\left(1+\lambda\right)$, which are not equal for discounting factors $\lambda <1$.
Thus, rather than comparing both $\pi_{X}$ and $\pi_{Y}$ to the same payoff, $\kappa$, it makes more sense to compare $\pi_{X}$ to $\kappa_{X}$ and $\pi_{Y}$ to $\kappa_{Y}$ for some $\kappa_{X}$ and $\kappa_{Y}$. Therefore, we focus on conditions that allow player $X$ to enforce $\pi_{X}-\kappa_{X}=\chi\left(\pi_{Y}-\kappa_{Y}\right)$ for $\kappa_{X}$ and $\kappa_{Y}$ within a suitable range. Note that if player $X$ enforces this payoff relation and, conversely, player $Y$ enforces $\pi_{Y}-\kappa_{Y}=\chi\left(\pi_{X}-\kappa_{X}\right)$ for some $\chi >1$, then player $X$ gets $\kappa_{X}$ and $Y$ gets $\kappa_{Y}$, which preserves the original interpretation of $\kappa$ as the ``baseline payoff" \citep{hilbe:NC:2014}. Also note that the two strategies enforcing the respective payoff relation need not be the same due to the asymmetry in payoffs, which arises from the asymmetry induced by alternating moves.

For $s,s'\in\left\{0,K\right\}$, let $\kappa_{X}^{ss'}$ and $\kappa_{Y}^{ss'}$ be the baseline payoffs to players $X$ and $Y$, respectively, when $X$ uses $s$ unconditionally and $Y$ uses $s'$ unconditionally in the repeated game. For sufficiently weak discounting factors, $\lambda$, player $X$ can enforce $\pi_{X}-\kappa_{X}=\chi\left(\pi_{Y}-\kappa_{Y}\right)$ for any alternating game if and only if
\begin{linenomath}
\begin{align}\label{eq:kappaInequalities}
\kappa_{X} = \kappa_{X} + \left( \chi\kappa_{Y}^{00} - \kappa_{X}^{00} \right) \leqslant \chi\kappa_{Y} \leqslant \kappa_{X} + \left( \chi\kappa_{Y}^{KK} - \kappa_{X}^{KK} \right) ,
\end{align}
\end{linenomath}
where $\kappa_{X}^{00}=\kappa_{Y}^{00}=0$. Eq. (\ref{eq:kappaInequalities}) implies that if player $X$ attempts to minimize player $Y$'s baseline payoff, $\kappa_{Y}$, for a fixed $\kappa_{X}$, then $\chi\kappa_{Y}=\kappa_{X}$. Hence player $X$ enforces $\pi_{X}-\kappa_{X} =\chi\left(\pi_{Y}-\kappa_{Y}\right) =\chi\pi_{Y}-\kappa_{X}$, or $\pi_{X}=\chi\pi_{Y}$. Such an autocratic strategy is called ``extortionate" since it minimizes the baseline payoff of the opponent, or, equivalently, it minimizes the difference $\chi\kappa_{Y}-\kappa_{X}$. Conversely, if player $X$ tries to maximize $Y$'s baseline payoff, then $\chi\kappa_{Y}=\kappa_{X}+\left(\chi\kappa_{Y}^{KK} - \kappa_{X}^{KK}\right)$ and $X$ enforces the equation $\pi_{X}-\kappa_{X}^{KK}=\chi\left(\pi_{Y}-\kappa_{Y}^{KK}\right)$. This type of autocratic strategy is called ``generous" since it maximizes the baseline payoff of the opponent, or, equivalently, it maximizes the difference $\chi\kappa_{Y}-\kappa_{X}$. Therefore, qualitatively speaking, in spite of the more detailed considerations necessary for alternating games, the introduction of distinct baseline payoffs for players $X$ and $Y$ does not affect the spirit in which extortionate and generous strategies are defined.

Interestingly, and somewhat surprisingly, it is possible for player $X$ to devise an autocratic strategy based on merely two distinct actions, $s_{1}$ and $s_{2}$, despite the fact that her opponent may draw on a continuum of actions \citep{mcavoy:PNAS:2016}. In strictly-alternating games, such a ``two-point" strategy adjusts $p\left(x,y\right)$ (resp. $1-p\left(x,y\right)$), the probability of playing $s_{1}$ (resp. $s_{2}$), in response to the previous moves, $x$ and $y$, while the actions $s_{1}$ and $s_{2}$ themselves remain unchanged. These strategies are particularly illustrative because they admit analytical solutions and simpler intuitive interpretations. In the following we focus first on two-point autocratic strategies for player $X$ based on the two actions of full cooperation, $K$, and defection, $0$. For each variant of alternating game, we derive stochastic two-point strategies enforcing extortionate, generous, and equalizer payoff relationships. For the more interesting case of randomly-alternating moves, we also give deterministic analogues of these strategies that use infinitely many points in the action space.

\subsection{Strictly-alternating moves; player $X$ moves first}

The baseline payoffs for full, mutual cooperation if player $X$ moves first are
\begin{linenomath}
\begin{align}
\kappa_{X}^{KK} = \frac{\lambda b\left(K\right) -c\left(K\right)}{1+\lambda} ; \qquad
\kappa_{Y}^{KK} = \frac{b\left(K\right) -\lambda c\left(K\right)}{1+\lambda} ,
\end{align}
\end{linenomath}
while the baseline payoffs for mutual defection are always $\kappa_{X}^{00}=\kappa_{Y}^{00}=0$. The function $\psi\left(s\right) :=-\chi b\left(s\right) -c\left(s\right)$ conveniently eliminates $x$ from Eq.~(\ref{eq:mainEquationStrictI}). For sufficiently long interactions or weak discounting factors, i.e.
\begin{linenomath}
\begin{align}\label{eq:strictLambdaInequality}
\lambda &\geqslant \frac{b\left(K\right) + \chi c\left(K\right)}{\chi b\left(K\right) + c\left(K\right)} ,
\end{align}
\end{linenomath}
the two-point strategy defined by
\begin{linenomath}
\begin{align}\label{eq:strictPofY}
p\left(y\right) &= \frac{\lambda\left(b\left(y\right) +\chi c\left(y\right)\right) + \left(1+\lambda\right)\left(\chi\kappa_{Y}-\kappa_{X}\right)}{\lambda^{2}\left(\chi b\left(K\right) +c\left(K\right)\right)} - \frac{1-\lambda^{2}}{\lambda^{2}} p_{0} ,
\end{align}
\end{linenomath}
allows player $X$ to unilaterally enforce $\pi_{X}-\kappa_{X}=\chi\left(\pi_{Y}-\kappa_{Y}\right)$ as long as $p_{0}$ falls within a suitable range (see Eq. (\ref{sieq:extGenStrictInitial})) and $\kappa_{X}\leqslant\chi\kappa_{Y}\leqslant\kappa_{X}+\left( \chi\kappa_{Y}^{KK} - \kappa_{X}^{KK} \right)$. Whether the autocratic strategy defined by $\left(p_{0},p\left(y\right)\right)$ is extortionate or generous depends on the choice of $\chi$, $\kappa_{X}$, and $\kappa_{Y}$. Note that $p\left(y\right)$ does not depend on player $X$'s own previous move, $x$, and hence represents an instance of a reactive strategy \citep{nowak:AAM:1990}.

Similarly, choosing $\psi(s) :=b\left(s\right)$ again eliminates $x$ from Eq.~(\ref{eq:mainEquationStrictI}) but now enables player $X$ to adopt an equalizer strategy and set her opponent's score to $\pi_{Y}=\gamma$ with $0\leqslant\gamma\leqslant \left(b\left(K\right) -\lambda c\left(K\right)\right) /\left(1+\lambda\right)$ by using
\begin{linenomath}
\begin{align}
p\left(y\right) &= \frac{\lambda c\left(y\right) + \left(1+\lambda\right)\gamma}{\lambda^{2}b\left(K\right)} - \frac{1-\lambda^{2}}{\lambda^{2}} p_{0} ,
\end{align}
\end{linenomath}
provided the initial probability of cooperation, $p_{0}$, falls within a feasible range (see Eq. (\ref{sieq:equalizerStrictInitial})). However, just like in the simultaneous-move game, player $X$ can never set her own score \citep[see][]{mcavoy:PNAS:2016}.

\subsection{Strictly-alternating moves; player $Y$ moves first}

The baseline payoffs for full cooperation if player $Y$ moves first are
\begin{linenomath}
\begin{align}
\kappa_{X}^{KK} = \frac{b\left(K\right) -\lambda c\left(K\right)}{1+\lambda} ; \qquad
\kappa_{Y}^{KK} = \frac{\lambda b\left(K\right) -c\left(K\right)}{1+\lambda} ,
\end{align}
\end{linenomath}
and, again, the baseline payoffs for mutual defection are both $0$. The scaling function $\psi\left(s\right) :=-\chi b\left(s\right) -c\left(s\right) +\chi\kappa_{Y}-\kappa_{X}$ eliminates $x$ from Eq.~(\ref{eq:mainEquationStrictII}). For sufficiently weak discounting, i.e. if Eq.~(\ref{eq:strictLambdaInequality}) holds, the autocratic, reactive strategy, that cooperates (plays $K$) with probability
\begin{linenomath}
\begin{align}
p\left(y\right) &= \frac{b\left(y\right) +\chi c\left(y\right) +\left(1+\lambda\right)\left(\chi\kappa_{Y}-\kappa_{X}\right)}{\lambda\left(\chi b\left(K\right) +c\left(K\right)\right)}
\end{align}
\end{linenomath}
after player $Y$ moved $y$, then enables player $X$ to unilaterally enforce $\pi_{X}-\kappa_{X}=\chi\left(\pi_{Y}-\kappa_{Y}\right)$ whenever $\kappa_{X}\leqslant\chi\kappa_{Y}\leqslant\kappa_{X}+\left(\chi\kappa_{Y}^{KK}-\kappa_{X}^{KK}\right)$. Again, whether $p\left(y\right)$ translates into an extortionate or generous strategy depends on $\chi$, $\kappa_{X}$, and $\kappa_{Y}$. Note that the first move of $X$ depends on simply her opponent's initial move and hence does not need to be specified separately.

Similarly, setting $\psi\left(s\right) :=b\left(s\right) -\gamma$ also eliminates $x$ from Eq.~(\ref{eq:mainEquationStrictII}) but enables player $X$ to enforce $\pi_{Y}=\gamma$ with $0\leqslant\gamma\leqslant \left(\lambda b\left(K\right) -c\left(K\right)\right) /\left(1+\lambda\right)$, which implicitly requires $\lambda\geqslant c\left(K\right) / b\left(K\right)$. This equalizer strategy plays $K$ with probability
\begin{linenomath}
\begin{align}
p\left(y\right) &= \frac{c\left(y\right) +\left(1+\lambda\right)\gamma}{\lambda b\left(K\right)}
\end{align}
\end{linenomath}
after player $Y$ played $y$. Although player $X$ can set the score of player $Y$, she cannot set her own score.

\subsection{Randomly-alternating moves}\label{subsubsec:randomCDG}

In randomly-alternating games, the average payoffs, $\pi_{X}$ and $\pi_{Y}$, for players $X$ and $Y$, respectively, depend on the probability $\omega_{X}$ with which player $X$ moves in any given round; see Eq.~(\ref{eq:payRandom}). The region spanned by feasible payoff pairs, $\left(\pi_{Y}, \pi_{X}\right)$, not only depends on $\omega_{X}$ but also on the class of strategies considered; see Fig.~\ref{fig:feasibleRegion}. 
\begin{figure}
\begin{center}
\includegraphics[scale=0.55]{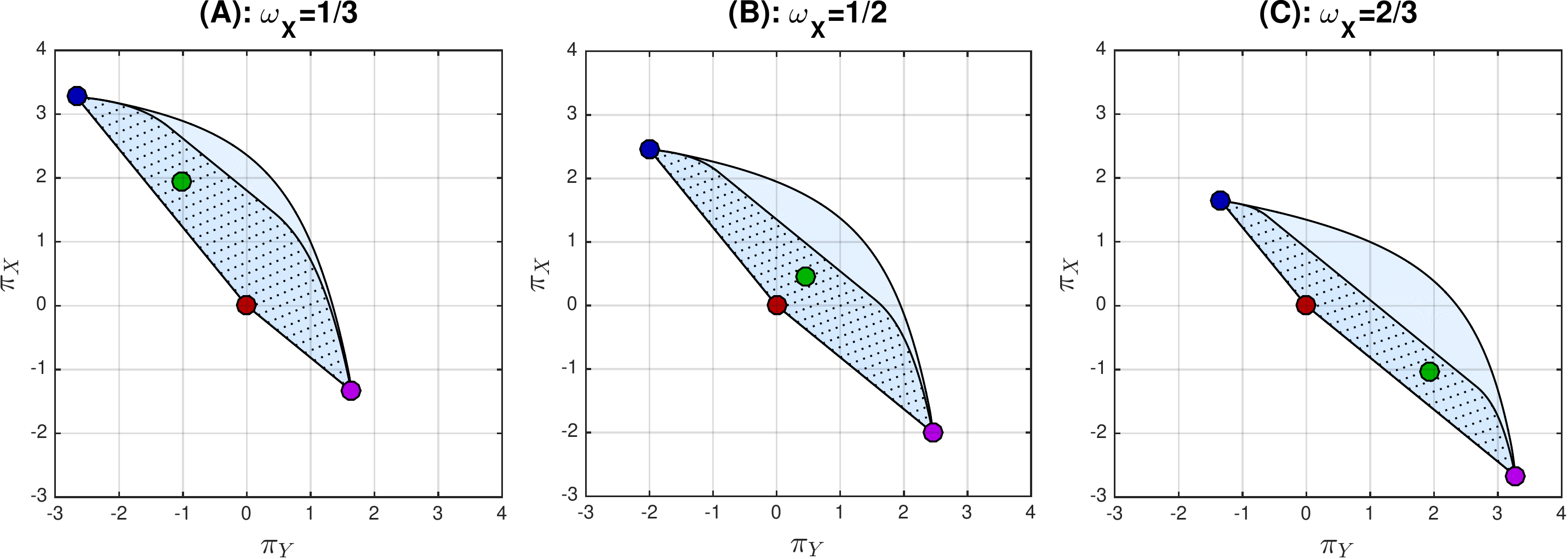}
\end{center}
\caption{Feasible payoff pairs, $\left(\pi_{Y},\pi_{X}\right)$, when $X$ uses a two-point strategy (hatched) and when $X$ uses the entire action space (light blue). The benefit function is $b\left(s\right) =5\left(1-e^{-2s}\right)$, the cost function is $c\left(s\right) =2s$, and the action spaces are $S_{X}=S_{Y}=\left[0,2\right]$ \citep[see][]{killingback:AN:2002}. The probability that player $X$ moves in any given round is (A) $\omega_{X}=1/3$, (B) $\omega_{X}=1/2$, and (C) $\omega_{X}=2/3$. In each figure, the payoffs for mutual defection ($0$) are indicated by a red point and for mutual full-cooperation ($K=2$) by a green point. The blue point marks the payoffs when $X$ defects and $Y$ fully cooperates, and the magenta point vice versa. From (A) and (C), we see that if $\omega_{X}\neq 1/2$, then the payoffs for the alternating game are typically asymmetric even though the one-shot game is symmetric.\label{fig:feasibleRegion}}
\end{figure}
In particular, the two-point strategies based on the extreme actions, $0$ and $K$, cover only a portion of the payoff region spanned by strategies utilizing the full action space, $\left[0,K\right]$. We use both two-point and deterministic autocratic strategies for illustrations as they are among the more straightforward ways in which to enforce linear payoff relationships.

\subsubsection{Two-point autocratic strategies}

Here, we focus on these two-point autocratic strategies for player $X$, which are concentrated on the two points $0$ and $K$ and are defined by (i) the probability that $X$ uses $K$ in the first round, $p_{0}$; (ii) the probability that $X$ uses $K$ following her own move in which she used $x$, $p^{X}\left(x\right)$; and (iii) the probability that $X$ uses $K$ following her opponent's move in which he used $y$, $p^{Y}\left(y\right)$.

For randomly-alternating moves, the baseline payoffs for full cooperation are
\begin{linenomath}
\begin{align}
\kappa_{X}^{KK} = \left(1-\omega_{X}\right) b\left(K\right) - \omega_{X}c\left(K\right) ; \qquad
\kappa_{Y}^{KK} = \omega_{X}b\left(K\right) - \left(1-\omega_{X}\right) c\left(K\right) ,
\end{align}
\end{linenomath}
while those for mutual defection remain both $0$. Suppose that discounting is sufficiently weak, or interactions cover sufficiently many rounds, i.e.
\begin{linenomath}
\begin{align}\label{eq:donationExtortionCondition}
\lambda &\geqslant \frac{1}{\omega_{X}}\cdot\frac{b\left(K\right) +\chi c\left(K\right)}{\left(\chi +1\right)\left(b\left(K\right) + c\left(K\right)\right)} ,
\end{align}
\end{linenomath}
and that $\kappa_{X}\leqslant\chi\kappa_{Y}\leqslant\kappa_{X}+\left(\chi\kappa_{Y}^{KK}-\kappa_{X}^{KK}\right)$. Then, the two-point autocratic strategy defined by
\begin{linenomath}
\begin{subequations}
\begin{align}
p^{X}\left(x\right) &= \frac{b\left(x\right) +\chi c\left(x\right) +\chi\kappa_{Y}-\kappa_{X}}{\lambda\omega_{X}\left(\chi +1\right)\!\left(b\left(K\right) +c\left(K\right)\!\right)} - \frac{1-\lambda}{\lambda} p_{0} ; \\
p^{Y}\left(y\right) &= p^{X}\left(y\right)
\end{align}
\end{subequations}
\end{linenomath}
enables player $X$ to unilaterally enforce $\pi_{X}-\kappa_{X}=\chi\left(\pi_{Y}-\kappa_{Y}\right)$ provided $p_{0}$ falls within a suitable range (see Eq. (\ref{sieq:extGenRandomInitial})). The scaling function $\psi\left(s\right) :=-\left(\chi +1\right)\left( b\left(s\right) + c\left(s\right) \right)$ was chosen such that $X$'s response depends on the previous action but not on which player used it.

If player $X$ is at least as likely to move in each round as is player $Y$, i.e. $\omega_{X}\geqslant 1/2$, then, for every $\chi\geqslant 1$, a sufficiently weak discounting factor exists that satisfies Eq.~(\ref{eq:donationExtortionCondition}) and $\lambda\leqslant 1$ and hence enables player $X$ to enforce $\pi_{X}-\kappa_{X}=\chi\left(\pi_{Y}-\kappa_{Y}\right)$. In particular, both extortionate and generous autocratic strategies exist for the randomly-alternating, continuous Donation Game; see Fig.~\ref{fig:twoPointSimulation}.
\begin{figure}
\begin{center}
\includegraphics[scale=0.55]{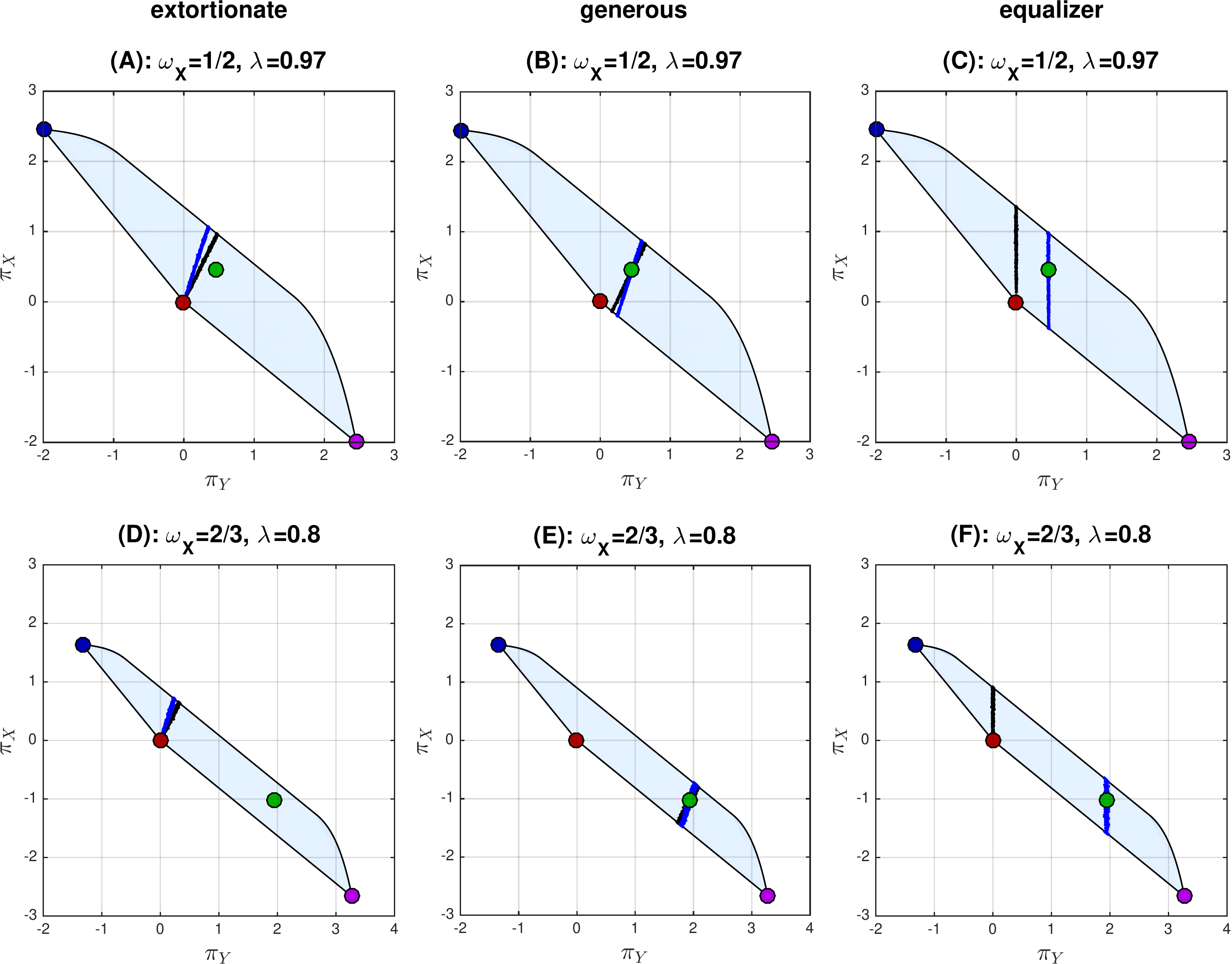}
\end{center}
\caption{Two-point extortionate, generous, and equalizer strategies for the randomly-alternating, continuous Donation Game. In each panel, the red (resp. green) point indicates the payoffs for mutual defection (resp. full cooperation). The blue point gives the payoffs when $X$ defects and $Y$ fully cooperates in every round, and the magenta point vice versa. In the top row, both players move with equal probability in a given round ($\omega_{X}=1/2$), whereas in the bottom row player $X$ moves twice as often as player $Y$ ($\omega_{X}=2/3$). The extortionate strategies in (A) and (D) enforce $\pi_{X}=\chi\pi_{Y}$, while the generous strategies in (B) and (E) enforce $\pi_{X}-\kappa_{X}^{KK}=\chi\left(\pi_{Y}-\kappa_{Y}^{KK}\right)$ with $\chi =2$ (black) and $\chi =3$ (blue). The equalizer strategies in (C) and (F) enforce $\pi_{Y}=\gamma$ with $\gamma =\kappa_{Y}^{00}=0$ (black) and $\gamma =\kappa_{Y}^{KK}$ (blue). The simulation data in each panel show the average payoffs, $\left(\pi_{Y},\pi_{X}\right)$, for player $X$'s two-point strategy against $1000$ random memory-one strategies for player $Y$. The benefit function is $b\left(s\right) =5\left(1-e^{-2s}\right)$ and the cost function is $c\left(s\right) =2s$ for action spaces $S_{X}=S_{Y}=\left[0,2\right]$. \label{fig:twoPointSimulation}}
\end{figure}

Similarly, an equalizing two-point strategy for player $X$ can ensure $\pi_{Y}=\gamma$ for any
\begin{linenomath}
\begin{align}
\label{eq:gammaRangeRandom}
0 &\leqslant \gamma \leqslant \omega_{X}b\left(K\right) -\left(1-\omega_{X}\right) c\left(K\right) .
\end{align}
\end{linenomath}
$X$ can enforce such a relationship using the two-point strategy defined by
\begin{linenomath}
\begin{subequations}
\begin{align}
p^{X}\left(x\right) &= \frac{c\left(x\right) + \gamma}{\lambda\,\omega_{X}\left( b\left(K\right) + c\left(K\right) \right)} - \frac{1-\lambda}{\lambda} p_{0} ; \\
p^{Y}\left(y\right) &= p^{X}\left(y\right)
\end{align}
\end{subequations}
\end{linenomath}
provided $p_{0}$ falls within a specified range (see Eq. (\ref{sieq:equalizerRandomInitial})). Note that player $X$ is unable to unilaterally set the payoff of player $Y$ to anything below unconditional defection, $0$, and beyond unconditional cooperation, $\omega_{X}b\left(K\right) -\left(1-\omega_{X}\right) c\left(K\right)$; see Supporting Information for further details. Moreover, it must be true that $\lambda\omega_{X}b\left(K\right)\geqslant\left(1-\lambda\omega_{X}\right) c\left(K\right)$. For $\omega_{X}=1/2$, the discounting factor, $\lambda$, must therefore satisfy $\lambda\geqslant 2c\left(K\right)/\left(b\left(K\right) +c\left(K\right)\right)$, which enables player $X$ to set $Y$'s score to anything between $0$ and $\left(b\left(K\right) -c\left(K\right)\right) /2$. 
In the limit where player $X$ moves exclusively, $\omega_{X}\rightarrow 1$, player $Y$'s score can be set to at most $b\left(K\right)$, which, itself, is clear from the definition of the continuous Donation Game.

Although player $X$ can unilaterally set $Y$'s score, she cannot set her own score to anything above $0$, and, for $\omega_{X}\leqslant 1/2$, she cannot set her own score to anything at all. For sufficiently large $\lambda\omega_{X}$, player $X$ can guarantee herself non-positive payoffs using an autocratic strategy; see Supporting Information. However, strategies enforcing a return that is at most $0$ may be of limited use since a player can always do at least as well through unconditional defection. In contrast, in the simultaneous version of the continuous Donation Game, player $X$ can never set her own score \citep{mcavoy:PNAS:2016}. This difference is not that surprising: even though player $X$ can exert control over randomly-alternating games for large $\omega_{X}$, the structure of the continuous Donation Game precludes her from providing herself positive payoffs through actions of her own.

\subsubsection{Deterministic autocratic strategies}

Deterministic strategies, for which $X$ reacts to the previous move by playing an action with certainty (rather than probabilistically), cover a broader range of feasible payoffs (see Fig. \ref{fig:feasibleRegion}) than do two-point strategies \citep[see also][]{mcavoy:PNAS:2016}. A simple example of a deterministic strategy is tit-for-tat, which cooperates in the first round and subsequently copies the opponent's previous move \citep{axelrod:BB:1984}.

A deterministic strategy for a randomly-alternating game consists of (i) an initial action, $x_{0}$; (ii) a reaction function to one's own move, $r^{X}:S_{X}\rightarrow S_{X}$; and (iii) a reaction function to the opponent's move, $r^{Y}:S_{Y}\rightarrow S_{X}$. Here, we give examples of deterministic extortionate, generous, and equalizer strategies for the continuous Donation Game. For example, player $X$ can enforce $\pi_{X}-\kappa_{X}=\chi\left(\pi_{Y}-\kappa_{Y}\right)$ by using
\begin{linenomath}
\begin{subequations}
\begin{align}
r^{X}\left(x\right) &= \left(b+c\right)^{-1}\left(\frac{b\left(x\right) +\chi c\left(x\right) + \chi\kappa_{Y}-\kappa_{X}-\left(1-\lambda\right)\omega_{X}\left(\chi +1\right)\Big(b\left(x_{0}\right) +c\left(x_{0}\right)\Big)}{\lambda\omega_{X}\left(\chi +1\right)}\right) ; \\
r^{Y}\left(y\right) &= r^{X}\left(y\right) ,
\end{align}
\end{subequations}
\end{linenomath}
where $\left(b+c\right)^{-1}\left(\cdots\right)$ denotes the inverse of the function $b\left(s\right) +c\left(s\right)$, provided the initial action, $x_{0}$, is chosen appropriately. For instance, if $\kappa_{X}=\kappa_{X}^{00}=0$ and $\kappa_{Y}=\kappa_{Y}^{00}=0$, then $X$ may use $x_{0}=0$ to enforce the extortionate relationship $\pi_{X}=\chi\pi_{Y}$. If $\kappa_{X}=\kappa_{X}^{KK}$ and $\kappa_{Y}=\kappa_{Y}^{KK}$, then $X$ may use $x_{0}=K$ to enforce the generous relationship $\pi_{X}-\kappa_{X}^{KK}=\chi\left(\pi_{Y}-\kappa_{Y}^{KK}\right)$. In both cases, $\lambda$ must satisfy Eq. (\ref{eq:donationExtortionCondition}) for $r^{X}$ and $r^{Y}$ to be well-defined reaction functions.
Similarly, $X$ can unilaterally equalize $Y$'s payoff to $\pi_{Y}=\gamma$ by using
\begin{linenomath}
\begin{subequations}
\begin{align}
r^{X}\left(x\right) &= \left(b+c\right)^{-1}\left(\frac{c\left(x\right) + \gamma - \left(1-\lambda\right)\omega_{X}\left( b\left(x_{0}\right) +c\left(x_{0}\right)\right)}{\lambda\omega_{X}}\right) ; \\
r^{Y}\left(y\right) &= r^{X}\left(y\right) .
\end{align}
\end{subequations}
\end{linenomath}
If $\gamma =\kappa_{Y}^{00}=0$, then player $X$ may use $x_{0}=0$; if $\gamma =\kappa_{Y}^{KK}$, then $X$ may use $x_{0}=K$.
The feasible payoff regions and simulation data for each of these classes of autocratic strategies are given in Fig. \ref{fig:deterministicSimulation}.

\begin{figure}
\begin{center}
\includegraphics[scale=0.55]{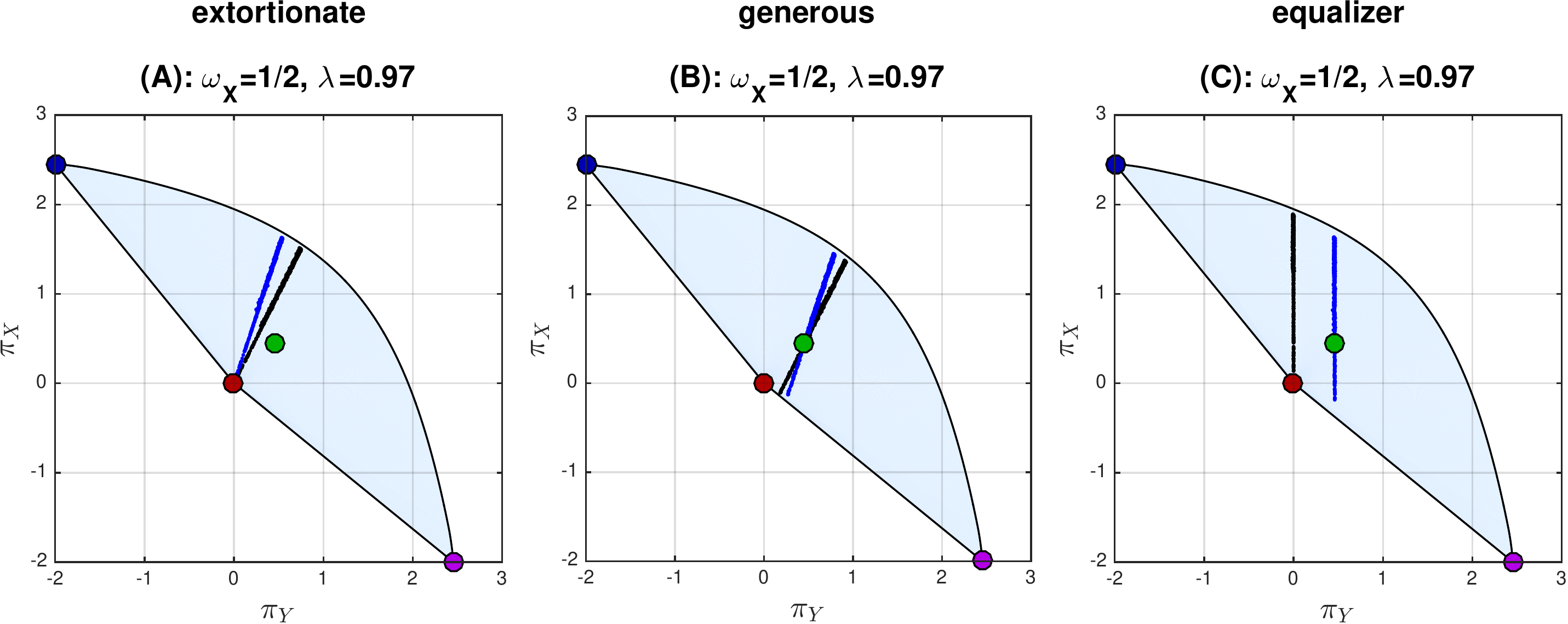}
\end{center}
\caption{Deterministic extortionate, generous, and equalizer strategies for the randomly-alternating, continuous Donation Game. In each round, both players have the same probability of moving ($\omega_{X}=1/2$). In (A), extortionate strategies enforce $\pi_{X}=\chi\pi_{Y}$ and in (B), generous strategies enforce $\pi_{X}-\kappa_{X}^{KK}=\chi\left(\pi_{Y}-\kappa_{Y}^{KK}\right)$ with $\chi =2$ (black) and $\chi =3$ (blue). In (C), equalizer strategies enforce $\pi_{Y}=\gamma$ with $\gamma =\kappa_{Y}^{00}=0$ (black) and $\gamma =\kappa_{Y}^{KK}$ (blue). Since deterministic strategies utilize a larger portion of the action space than two-point strategies, the players can attain a broader range of payoff pairs, $\left(\pi_{Y},\pi_{X}\right)$ (c.f. Fig. \ref{fig:twoPointSimulation}). The simulation data in each panel shows the average payoffs, $\left(\pi_{Y},\pi_{X}\right)$, for $X$'s deterministic strategy against $1000$ randomly-chosen, memory-one strategies of the opponent. The benefit function is $b\left(s\right) =5\left(1-e^{-2s}\right)$ and the cost function is $c\left(s\right) =2s$ for $s\in\left[0,2\right]$. \label{fig:deterministicSimulation}}
\end{figure}

What is noteworthy about these strategies is that they require only the last move and not who played it. In other words, a player using one of these strategies responds to a move by herself in exactly the same way as she responds to a move by her opponent. Although a player never knows in advance when she will move in a randomly-alternating game, she can still enforce extortionate, generous, and equalizer relationships on payoffs by playing an action that is uniquely determined by the most recent action of the game.

\section{Discussion}

Repeated games likely rank among the best-studied topics in game theory, and the resulting insights have been instrumental for our understanding of strategic behavioral patterns. For this reason, it came as all the more of a surprise when \citet{press:PNAS:2012} reported a new class of ``zero-determinant" strategies, which enable players to exert unprecedented control in repeated interactions. However, notwithstanding decades of extensive literature on repeated games, alternating interactions have received very little attention when compared to their simultaneous counterparts. This emphasis is particularly puzzling because many, if not most, social encounters among plants or animals (including humans) that unfold over several rounds seem better captured by alternating actions of the interacting agents. Moreover, even within the realm of alternating games, it is often assumed that individual turns alternate strictly rather than randomly \citep{frean:PRSB:1994,hauert:JTB:1998,neill:JTB:2001,zagorsky:PLOSONE:2013}.

Here, we introduce autocratic strategies, a generalization of zero-determinant strategies, for alternating games. Due to similarities with simultaneous-move games, it is perhaps unsurprising that autocratic strategies also exist for strictly-alternating games. However, even so, the continuous Donation Game demonstrates that the autocratic strategies themselves depend on the timing of the players' moves. What is more surprising, and even unexpected, is the fact that autocratic strategies exist for randomly-alternating games as well. This extension exemplifies the surprising robustness of autocratic strategies by relaxing the original assumptions in three important ways: (i) to allow for discounted payoffs, i.e. to consider finite numbers of rounds in each interaction \citep{hilbe:GEB:2015}; (ii) to extend the action set from two distinct actions to infinite action spaces \citep{mcavoy:PNAS:2016}; and now (iii) to admit asynchronous decisions and, in particular, randomly-alternating ones. The latter even includes asymmetric scenarios where one player moves, on average, more frequently than the other. Under this far more generic setup we demonstrate that autocratic strategies still exist and enable players to enforce extortionate, generous, and equalizer relationships with their opponent.

In the strictly-alternating, continuous Donation Game, autocratic strategies exist for player $X$ provided that the discounting factor, $\lambda$, is sufficiently weak, or, equivalently, that interactions span sufficiently many rounds; see Eq.~(\ref{eq:strictLambdaInequality}). Interestingly, the condition on $\lambda$ does not depend on whether player $X$ moves first or second and is even identical to the corresponding condition in the synchronous game \citep{mcavoy:PNAS:2016}. In the absence of discounting, $\lambda =1$, the same strategy enforces, for instance, an extortionate payoff relationship in simultaneous games as well as alternating games and regardless of whether or not player $X$ moved first. We demonstrate this phenomenon for the classical and continuous Donation Games in \S\ref{sec:classicalDG} and \S\ref{sec:CDG}, respectively.

The condition for the existence of autocratic strategies in the randomly-alternating game, Eq.~(\ref{eq:donationExtortionCondition}), is similar to that of the strictly-alternating (and simultaneous) games, although slightly stronger. Not surprisingly, this condition depends on the probability that player $X$ moves in a given round, $\omega_{X}$. For each type of alternating game, we give examples of simple two-point autocratic strategies in which player $X$'s actions are restricted to $0$ (defect) and $K$ (fully cooperate). Although $X$ can enforce any extortionate, generous, or equalizer payoff relationship in the continuous Donation Game using a two-point strategy, a larger region of feasible payoffs is attainable if $X$ uses a deterministic autocratic strategy (see Fig.~\ref{fig:feasibleRegion}).

While autocratic strategies undoubtedly mark important behavioral patterns, their importance in an evolutionary context is still debated: extortionate strategies perform poorly \citep{adami:NC:2013}, whereas generous strategies perform much better \citep{stewart:PNAS:2013}. In fact, a generous strategy against itself represents a Nash equilibrium in the simultaneous, two-action Prisoner's Dilemma \citep{hilbe:GEB:2015}. However, for extensions to continuous action spaces, such as the continuous Donation Game, even a generous strategy with full mutual cooperation is not necessarily a Nash equilibrium \citep{mcavoy:PNAS:2016}. Similar considerations for alternating games are further nuanced because they naturally introduce asymmetries in payoffs for the two players, even if the underlying interaction is symmetric and both players follow the same strategy. In fact, this asymmetry holds for any strictly-alternating game with discounting factor $\lambda<1$ because then it matters which player moved first. Similarly, in randomly-alternating games, the payoffs typically depend on the probability $\omega_X$ with which player $X$ moves and hence differ if $\omega_X\neq 1/2$. Consequently, even if player $X$ and $Y$ adopt the same autocratic strategy, then player $X$ does not necessarily enforce the same linear relationship on payoffs as player $Y$, which complicates the notion of equilibria both in the sense of Nash as well as rest points of the evolutionary dynamics.

Among alternating games, the randomly-alternating ones represent perhaps the most promising and relevant setup from a biological perspective \citep[see][]{nowak:JTB:1994}. In the continuous Donation Game, autocratic strategies exist even if the probability that player $X$ moves in a given round differs from that of player $Y$ (i.e. $\omega_{X}\neq 1/2$). Of course, $\omega_X$ must be large enough to ensure that player $X$ is capable of exerting sufficient control over the game to pursue an autocratic strategy. For $\omega_{X}>1/2$, this condition always holds in the continuous Donation Game but might also apply under weaker conditions. Interestingly, such asymmetries easily arise from dominance hierarchies. For example, in bouts of social grooming between primates \citep{foster:AJP:2009}, subordinate individuals, $X$, typically groom dominant individuals, $Y$, more frequently than vice versa and hence $\omega_{X}>1/2$. As a consequence, the subordinate player has more autocratic strategies available to impact social grooming than does the dominant player. Thus, autocratic strategies can be particularly useful for exerting control over asymmetric interactions. This observation marks not only an important distinction between autocratic strategies for synchronous and alternating games but also promises interesting applications to biologically-relevant interactions.

\section*{Acknowledgments}

The authors are grateful to Christian Hilbe for helpful conversations and useful comments. A. M. and C. H. thank the Max Planck Institute for Evolutionary Biology and A. M. the Program for Evolutionary Dynamics for their hospitality while this paper was written. The authors acknowledge financial support from the Natural Sciences and Engineering Research Council of Canada (NSERC), grant RGPIN-2015-05795.

The authors declare no competing financial interests.

\newpage

\setcounter{section}{0}
\setcounter{equation}{0}
\setcounter{figure}{0}
\renewcommand{\thesection}{SI.\arabic{section}}
\renewcommand{\theequation}{SI.\arabic{equation}}
\renewcommand{\thefigure}{SI.\arabic{figure}}

Here, we prove our main results for each type of alternating game (strictly- and randomly-alternating moves). By a measurable space, we mean a set, $\mathcal{X}$, equipped with a $\sigma$-algebra of subsets, $\mathcal{F}\left(\mathcal{X}\right)$, although we usually suppress $\mathcal{F}\left(\mathcal{X}\right)$. The notation $\Delta\left(\mathcal{X}\right)$ indicates the space of all probability measures on $\mathcal{X}$, i.e. the set of all measures, $\mu :\mathcal{F}\left(\mathcal{X}\right)\rightarrow\left[0,\infty\right)$, with $\mu\left(\mathcal{X}\right) =1$. All functions are bounded and measurable.

\section{Strictly-alternating games}\label{si:subsec:strict}

Let $S_{X}$ and $S_{Y}$ be the action spaces available to players $X$ and $Y$, respectively. We assume that these spaces are measurable, but otherwise we impose no restrictions on them. Let $f_{X}\left(x\right)$ and $f_{Y}\left(x\right)$ be the payoffs to players $X$ and $Y$, respectively, when $X$ moves $x\in S_{X}$. Similarly, let $g_{X}\left(y\right)$ and $g_{Y}\left(y\right)$ be the payoffs to players $X$ and $Y$, respectively, when $Y$ moves $y\in S_{Y}$. If $\lambda$ is the discounting factor, then one may compress a pair of rounds in which $X$ moves first and $Y$ moves second in order to form two-round payoff functions,
\begin{linenomath}
\begin{subequations}
\begin{align}
u_{X}\left(x,y\right) &:= f_{X}\left(x\right) + \lambda g_{X}\left(y\right) ; \\
u_{Y}\left(x,y\right) &:= f_{Y}\left(x\right) + \lambda g_{Y}\left(y\right) .
\end{align}
\end{subequations}
\end{linenomath}
In each of these two-round payoff functions, the payoff from player $Y$'s move is discounted by a factor of $\lambda$ to account for the time difference or, equivalently, for the probability that the interaction ends before player $Y$'s turn.

Due to the differences in the expressions for the average payoffs when $X$ moves first and when $Y$ moves first, respectively, we treat each of these cases separately in our study of autocratic strategies.

\subsection{$X$ moves first}\label{si:subsubsec:XmovesFirst}
If player $X$ moves first, then the entire sequence of play can be grouped into two-round pairs in which $X$ moves first and $Y$ moves second. More specifically, if $\left(x_{0},y_{1},x_{2},y_{3},\dots\right)$ is the sequence of play, then this sequence may be rewritten as $\left( \left(x_{0},y_{1}\right) , \left(x_{2},y_{3}\right) ,\dots \right)$. When written in this manner, one may use $u_{X}$ to express the average payoff to player $X$ for this sequence as
\begin{linenomath}
\begin{align}
\pi_{X} &= \left(1-\lambda\right)\left[\sum_{t=0}^{\infty}\lambda^{2t}f_{X}\left(x_{2t}\right) + \sum_{t=0}^{\infty}\lambda^{2t+1}g_{X}\left(y_{2t+1}\right)\right] \nonumber \\
&= \left(1-\lambda\right)\sum_{t=0}^{\infty}\lambda^{2t}\Big( f_{X}\left(x_{2t}\right) + \lambda g_{X}\left(y_{2t+1}\right) \Big) \nonumber \\
&= \left(1-\lambda\right)\sum_{t=0}^{\infty}\lambda^{2t}u_{X}\left(x_{2t},y_{2t+1}\right) .
\end{align}
\end{linenomath}
Similarly, the average payoff to player $Y$ is $\pi_{Y}=\left(1-\lambda\right)\sum_{t=0}^{\infty}\lambda^{2t}u_{Y}\left(x_{2t},y_{2t+1}\right)$.

A time-$T$ history indicates the sequence of play from time $t=0$ until (but not including) time $t=T$ and is an element of $\mathcal{H}^{T}:=\prod_{t=0}^{T-1}\mathcal{H}_{t}^{T}$, where
\begin{linenomath}
\begin{align}\label{sieq:historyStrictI}
\mathcal{H}_{t}^{T} &:= \begin{cases}S_{X} & t\textrm{ is even}, \\ S_{Y} & t\textrm{ is odd},\end{cases}
\end{align}
\end{linenomath}
for $0\leqslant t\leqslant T-1$. For $T=0$, we let $\mathcal{H}^{0}:=\left\{\varnothing\right\}$, where $\varnothing$ is the ``empty history," which indicates that the game has not yet begun. A behavioral strategy defines a player's actions (probabilistically) for any history of play leading up to the current move \citep[see][]{fudenberg:MIT:1991}. That is, behavioral strategies for players $X$ and $Y$, respectively, may be written in terms of the space of histories as maps,
\begin{linenomath}
\begin{subequations}
\begin{align}
\sigma_{X} &: \bigsqcup_{T\geqslant 0}\mathcal{H}^{2T} \longrightarrow \Delta\left(S_{X}\right) ; \label{eq:behavioralX} \\
\sigma_{Y} &: \bigsqcup_{T\geqslant 0}\mathcal{H}^{2T+1} \longrightarrow \Delta\left(S_{Y}\right) , \label{eq:behavioralY}
\end{align}
\end{subequations}
\end{linenomath}
where $\sqcup$ denotes the disjoint union operator, and $\Delta\left(S_{X}\right)$ and $\Delta\left(S_{Y}\right)$ denote the space of probability measures on $S_{X}$ and $S_{Y}$, respectively. These strategies may be written together more compactly as a map
\begin{linenomath}
\begin{align}
\sigma &: \mathcal{H} := \bigsqcup_{T\geqslant 0}\mathcal{H}^{T} \longrightarrow \Delta\left(S_{X}\right)\sqcup\Delta\left(S_{Y}\right) \nonumber \\
&: h^{T} \longmapsto \begin{cases}\sigma_{X}\left[h^{T}\right] & T\textrm{ is even}, \\ \sigma_{Y}\left[h^{T}\right] & T\textrm{ is odd}.\end{cases}
\end{align}
\end{linenomath}
Using $\sigma$, we define a sequence of measures, $\left\{\mu_{t}\right\}_{t\geqslant 0}$, on $\mathcal{H}^{t+1}$ as follows: For $h^{T}=\left(h_{0}^{T},h_{1}^{T},\dots ,h_{T-1}^{T}\right)\in\mathcal{H}^{T}$ and $0\leqslant t\leqslant T-1$, let $h_{\leqslant t}^{T}=\left(h_{0}^{T},h_{1}^{T},\dots ,h_{t}^{T}\right)\in\mathcal{H}^{t+1}$. For $E'\in\mathcal{F}\left(\mathcal{H}^{t}\right)$ and $E\in\mathcal{F}\left(\mathcal{H}_{t}^{t+1}\right)$, let
\begin{linenomath}
\begin{align}\label{sieq:muDef}
\mu_{t}\left(E'\times E\right) &:= \int\limits_{h^{t}\in E'} \sigma\left(h^{t},E\right) \,d\sigma\left(h_{\leqslant t-2}^{t},h_{t-1}^{t}\right)\cdots\,d\sigma\left(h_{\leqslant 0}^{t},h_{1}^{t}\right)\,d\sigma\left(\varnothing ,h_{0}^{t}\right) .
\end{align}
\end{linenomath}
For $0\leqslant k\leqslant t$, let $\nu_{t}^{k}$ be the measure on $\prod_{i=t-k}^{t}\mathcal{H}_{i}^{i+1}$, which, for $E\in\mathcal{F}\left(\prod_{i=t-k}^{t}\mathcal{H}_{i}^{i+1}\right)$, is defined as
\begin{linenomath}
\begin{align}\label{sieq:nuDef}
\nu_{t}^{k}\left(E\right) &:= \mu_{t}\left(\mathcal{H}^{t-k}\times E\right) .
\end{align}
\end{linenomath}

In a $\left(2T+2\right)$-round game (rounds $0$ through $2T+1$), the expected payoff to player $X$ is
\begin{linenomath}
\begin{align}
\pi_{X}^{2T+2} &:= \int\limits_{h^{2T+2}\in\mathcal{H}^{2T+2}}\left[\left(\frac{1-\lambda}{1-\lambda^{2T+1}}\right)\sum_{t=0}^{T}\lambda^{2t}u_{X}\left(h_{2t}^{2T+2},h_{2t+1}^{2T+2}\right)\right] \nonumber \\
&\quad\quad\quad\quad \,d\sigma\left(h_{\leqslant 2T}^{2T+2},h_{2T+1}^{2T+2}\right)\cdots\,d\sigma\left(h_{\leqslant 0}^{2T+2},h_{1}^{2T+2}\right)\,d\sigma\left(\varnothing, h_{0}^{2T+2}\right) \nonumber \\
&= \left(\frac{1-\lambda}{1-\lambda^{2T+1}}\right)\sum_{t=0}^{T}\lambda^{2t}\int\limits_{h^{2T+2}\in\mathcal{H}^{2T+2}} u_{X}\left(h_{2t}^{2T+2},h_{2t+1}^{2T+2}\right) \nonumber \\
&\quad\quad\quad\quad \,d\sigma\left(h_{\leqslant 2T}^{2T+2},h_{2T+1}^{2T+2}\right)\cdots\,d\sigma\left(h_{\leqslant 0}^{2T+2},h_{1}^{2T+2}\right)\,d\sigma\left(\varnothing, h_{0}^{2T+2}\right) \nonumber \\
&= \left(\frac{1-\lambda}{1-\lambda^{2T+1}}\right)\sum_{t=0}^{T}\lambda^{2t}\int\limits_{h^{2t+2}\in\mathcal{H}^{2t+2}} u_{X}\left(h_{2t}^{2t+2},h_{2t+1}^{2t+2}\right) \nonumber \\
&\quad\quad\quad\quad \,d\sigma\left(h_{\leqslant 2t}^{2t+2},h_{2t+1}^{2t+2}\right)\cdots\,d\sigma\left(h_{\leqslant 0}^{2t+2},h_{1}^{2t+2}\right)\,d\sigma\left(\varnothing, h_{0}^{2t+2}\right) \nonumber \\
&= \left(\frac{1-\lambda}{1-\lambda^{2T+1}}\right)\sum_{t=0}^{T}\lambda^{2t}\int\limits_{\left(h_{2t}^{2t+2},h_{2t+1}^{2t+2}\right)\in\mathcal{H}_{2t}^{2t+2}\times\mathcal{H}_{2t+1}^{2t+2}} u_{X}\left(h_{2t}^{2t+2},h_{2t+1}^{2t+2}\right)\,d\nu_{2t+1}^{1}\left(h_{2t}^{2t+2},h_{2t+1}^{2t+2}\right) \nonumber \\
&= \left(\frac{1-\lambda}{1-\lambda^{2T+1}}\right)\sum_{t=0}^{T}\lambda^{2t}\int\limits_{\left(x,y\right)\in S_{X}\times S_{Y}} u_{X}\left(x,y\right)\,d\nu_{2t+1}^{1}\left(x,y\right) .
\end{align}
\end{linenomath}
In particular, the limit
\begin{linenomath}
\begin{align}\label{sieq:objectiveXStrictI}
\pi_{X} &:= \lim_{T\rightarrow\infty}\pi_{X}^{2T+2} = \left(1-\lambda\right)\sum_{t=0}^{\infty}\lambda^{2t}\int\limits_{\left(x,y\right)\in S_{X}\times S_{Y}} u_{X}\left(x,y\right)\,d\nu_{2t+1}^{1}\left(x,y\right)
\end{align}
\end{linenomath}
exists since $f_{X}$ and $g_{X}$ (and hence $u_{X}$) are bounded. Similarly, we define
\begin{linenomath}
\begin{align}\label{sieq:objectiveYStrictI}
\pi_{Y} &:= \left(1-\lambda\right)\sum_{t=0}^{\infty}\lambda^{2t}\int\limits_{\left(x,y\right)\in S_{X}\times S_{Y}} u_{Y}\left(x,y\right)\,d\nu_{2t+1}^{1}\left(x,y\right) .
\end{align}
\end{linenomath}

Our main technical lemma is an analogue of Lemma 3.1 of \citet{akin:Games:2015}:
\begin{lemma}\label{lem:mainLemmaStrictI}
For any memory-one strategy, $\sigma_{X}\left[x,y\right]$, and any $E\in\mathcal{F}\left(S_{X}\right)$,
\begin{linenomath}
\begin{align}
\sum_{t=0}^{\infty}\lambda^{2t}\int\limits_{\left(x,y\right)\in S_{X}\times S_{Y}}\Big[ \chi_{E\times S_{Y}}\left(x,y\right) - \lambda^{2}\sigma_{X}\left[x,y\right]\left(E\right)\Big]\,d\nu_{2t+1}^{1}\left(x,y\right) &= \sigma_{X}^{0}\left(E\right) ,
\end{align}
\end{linenomath}
where $\sigma_{X}^{0}:=\sigma_{X}\left[\varnothing\right]$ is the initial action of player $X$.
\end{lemma}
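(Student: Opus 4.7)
The plan is to recognize the sum as telescoping: each summand equals the difference between two consecutive marginal probabilities that player $X$'s action lies in $E$, at even times $2t$ and $2t+2$. Because $\lambda<1$, all but the $t=0$ term cancel, and that term is exactly $\sigma_X^{0}(E)$.

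First I would interpret the two pieces of the integrand as marginal probabilities. Directly from Eq.~(\ref{sieq:nuDef}), the first piece gives
\[
A_t \;:=\; \int_{S_X\times S_Y}\chi_{E\times S_Y}(x,y)\,d\nu_{2t+1}^{1}(x,y) \;=\; \mu_{2t+1}\!\left(\mathcal{H}^{2t}\times E\times S_Y\right),
\]
which is the probability, under the distribution induced by $\sigma$, that $x_{2t}\in E$. For the second piece, I would unwind the recursion Eq.~(\ref{sieq:muDef}) one step further: since $2t+2$ is even, the extra factor appended to $\mu_{2t+1}$ to form $\mu_{2t+2}$ is $\sigma(h^{2t+2},\cdot)$, which the memory-one hypothesis collapses to $\sigma_X[h_{2t}^{2t+2},h_{2t+1}^{2t+2}]$. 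Pushing forward onto the last two coordinates then yields
\[
B_t \;:=\; \int_{S_X\times S_Y}\sigma_X[x,y](E)\,d\nu_{2t+1}^{1}(x,y) \;=\; \mu_{2t+2}\!\left(\mathcal{H}^{2t+2}\times E\right) \;=\; \Pr(x_{2t+2}\in E).
\]

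With these identifications in hand, set $a_t:=\Pr(x_{2t}\in E)\in[0,1]$. The left-hand side of the lemma becomes
\[
\sum_{t=0}^{\infty}\lambda^{2t}\!\left(a_t-\lambda^{2}a_{t+1}\right) \;=\; \sum_{t=0}^{\infty}\lambda^{2t}a_t-\sum_{t=1}^{\infty}\lambda^{2t}a_t \;=\; a_0,
\]
where the rearrangement is justified by absolute convergence ($|a_t|\leq 1$ and $\sum\lambda^{2t}<\infty$) and the tail $\lambda^{2(T+1)}a_{T+1}\to 0$ since $\lambda<1$. The identity $a_0=\sigma_X^{0}(E)$ is immediate from $\mu_{0}=\sigma(\varnothing,\cdot)=\sigma_X^{0}$.

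The main obstacle I anticipate is not the telescoping but the measure-theoretic bookkeeping that identifies $B_t$ with $\Pr(x_{2t+2}\in E)$. The memory-one assumption is precisely what makes the integrand $\sigma_X[x,y](E)$ depend only on the variables that survive the marginalization defining $\nu_{2t+1}^{1}$; with longer memory this factor would depend on earlier coordinates and the reduction to a marginal would fail. Once this identification is secured, the remainder is routine and produces an alternating-game analogue of Akin's telescoping identity for synchronous games.
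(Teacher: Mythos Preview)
Your proposal is correct and follows essentially the same route as the paper. The paper writes your $A_t$ and $B_t$ in the notation $\nu_{2t}^{0}(E)$ and $\nu_{2t+2}^{0}(E)$ (which are exactly your marginal probabilities $\Pr(x_{2t}\in E)$ and $\Pr(x_{2t+2}\in E)$), and then performs the identical telescoping sum, observing that $\lambda^{2t+2}\nu_{2t+2}^{0}(E)\to 0$ and $\nu_{0}^{0}(E)=\sigma_X^{0}(E)$.
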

\begin{proof}
By the definition of the measures $\left\{\nu_{t}^{k}\right\}_{t\geqslant 0}^{0\leqslant k\leqslant t}$, we have
\begin{linenomath}
\begin{subequations}
\begin{align}
\int\limits_{\left(x,y\right)\in S_{X}\times S_{Y}}\chi_{E\times S_{Y}}\left(x,y\right)\,d\nu_{2t+1}^{1}\left(x,y\right) &= \nu_{2t}^{0}\left(E\right) ; \\
\int\limits_{\left(x,y\right)\in S_{X}\times S_{Y}}\sigma_{X}\left[x,y\right]\left(E\right)\,d\nu_{2t+1}^{1}\left(x,y\right) &= \nu_{2t+2}^{0}\left(E\right) .
\end{align}
\end{subequations}
\end{linenomath}
Therefore, it follows that
\begin{linenomath}
\begin{align}
\sum_{t=0}^{\infty}&\lambda^{2t}\int\limits_{\left(x,y\right)\in S_{X}\times S_{Y}}\Big[ \chi_{E\times S_{Y}}\left(x,y\right) - \lambda^{2}\sigma_{X}\left[x,y\right]\left(E\right)\Big]\,d\nu_{2t+1}^{1}\left(x,y\right) \nonumber \\
&= \sum_{t=0}^{\infty}\lambda^{2t}\Big(\nu_{2t}^{0}\left(E\right) -\lambda^{2}\nu_{2t+2}^{0}\left(E\right)\Big) \nonumber \\
&= \nu_{0}^{0}\left(E\right) - \lim_{t\rightarrow\infty}\lambda^{2t+2}\nu_{2t+2}^{0}\left(E\right) \nonumber \\
&= \nu_{0}^{0}\left(E\right) \nonumber \\
&= \sigma_{X}^{0}\left(E\right) ,
\end{align}
\end{linenomath}
which completes the proof.
\end{proof}

\begin{proposition}\label{prop:mainPropStrictI}
For any bounded, measurable function, $\psi :S_{X}\rightarrow\mathbb{R}$,
\begin{linenomath}
\begin{align}
\sum_{t=0}^{\infty}\lambda^{2t}\int\limits_{\left(x,y\right)\in S_{X}\times S_{Y}}\left[ \psi\left(x\right) - \lambda^{2}\int\limits_{s\in S_{X}}\psi\left(s\right)\,d\sigma_{X}\left[x,y\right]\left(s\right)\right]\,d\nu_{2t+1}^{1}\left(x,y\right) &= \int\limits_{s\in S_{X}}\psi\left(s\right)\,d\sigma_{X}^{0}\left(s\right) .
\end{align}
\end{linenomath}
\end{proposition}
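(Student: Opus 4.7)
The plan is to bootstrap from Lemma~\ref{lem:mainLemmaStrictI} via the standard measure-theoretic ladder: indicator functions, then simple functions by linearity, then bounded measurable functions by approximation. Lemma~\ref{lem:mainLemmaStrictI} is precisely the desired identity in the special case $\psi=\chi_{E}$, since $\int_{s\in S_X}\chi_E(s)\,d\sigma_X[x,y](s)=\sigma_X[x,y](E)$, $\int_{s\in S_X}\chi_E(s)\,d\sigma_X^0(s)=\sigma_X^0(E)$, and the integrand $\chi_{E\times S_Y}(x,y)$ coincides with $\chi_E(x)$ viewed as a function on $S_X\times S_Y$. Both sides of the claimed identity depend linearly on $\psi$, so the result extends immediately to any simple function $\psi=\sum_{i=1}^{n}c_i\chi_{E_i}$ by taking the corresponding linear combination of the lemma's identities over the sets $E_1,\dots,E_n$.

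Next, I would appeal to the standard fact that every bounded measurable function $\psi:S_X\to\mathbb{R}$ is the uniform limit of a sequence of simple functions $\{\psi_n\}_{n\geqslant 1}$ with $\|\psi_n\|_\infty\leqslant\|\psi\|_\infty$. Denote by $L(\psi)$ and $R(\psi)$ the left- and right-hand sides of the desired identity, both viewed as functionals in $\psi$. From the previous step we know $L(\psi_n)=R(\psi_n)$ for each $n$, and so it suffices to show $L(\psi_n)\to L(\psi)$ and $R(\psi_n)\to R(\psi)$. For the right-hand side, $|R(\psi)-R(\psi_n)|\leqslant\|\psi-\psi_n\|_\infty$ because $\sigma_X^0$ is a probability measure, so convergence is immediate.

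For the left-hand side, the inner integrand is bounded in absolute value by $(1+\lambda^2)\|\psi\|_\infty$ pointwise, using the fact that each $\sigma_X[x,y]$ is a probability measure; the whole summand is therefore bounded by $(1+\lambda^2)\|\psi\|_\infty\lambda^{2t}$, which is summable. Hence Fubini/Tonelli and dominated convergence apply: the difference $|L(\psi)-L(\psi_n)|$ is bounded by $(1+\lambda^2)\|\psi-\psi_n\|_\infty\sum_{t=0}^{\infty}\lambda^{2t}=\frac{1+\lambda^2}{1-\lambda^2}\|\psi-\psi_n\|_\infty$, which tends to $0$. Combining the two convergences with the equality at each stage of the approximation yields $L(\psi)=R(\psi)$ for every bounded measurable $\psi$.

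The only subtle step is the uniform control needed to pass the approximation through the infinite sum, but this is automatic from $\lambda<1$ and the boundedness of $\psi$; no pointwise subtlety arises because everything is dominated by a summable series of constants. I do not expect any substantive obstacle beyond being careful that the measures $\nu_{2t+1}^{1}$ are probability measures (which follows directly from the construction in Eqs.~(\ref{sieq:muDef})--(\ref{sieq:nuDef})) so that the bounds above are valid.
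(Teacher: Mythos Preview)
Your proposal is correct and follows essentially the same approach as the paper: the paper simply states that the result follows from Lemma~\ref{lem:mainLemmaStrictI} and the dominated convergence theorem, deferring details to the analogous Proposition~1 of \cite{mcavoy:PNAS:2016}. Your write-up supplies exactly those omitted details---the passage from indicators to simple functions by linearity and then to bounded measurable functions via uniform approximation with a summable dominating sequence---so there is nothing to add.
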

\begin{proof}
The result follows from Lemma \ref{lem:mainLemmaStrictI} and the dominated convergence theorem. We do not include the details here; the argument is the same as the proof of Proposition 1 of \citep{mcavoy:PNAS:2016}.
\end{proof}

Using Proposition \ref{prop:mainPropStrictI}, we now prove the first of our main results for strictly-alternating games:
\begin{strictXTheorem}[Autocratic strategies for strictly-alternating games in which $X$ moves first]
Suppose that
\begin{linenomath}
\begin{align}\label{sieq:conditionStrictI}
\Big( \alpha &f_{X}\left(x\right) + \beta f_{Y}\left(x\right) + \gamma \Big) + \lambda\Big( \alpha g_{X}\left(y\right) + \beta g_{Y}\left(y\right) + \gamma \Big) \nonumber \\
&= \psi\left(x\right) - \lambda^{2}\int\limits_{s\in S_{X}}\psi\left(s\right)\,d\sigma_{X}\left[x,y\right]\left(s\right) - \left(1-\lambda^{2}\right)\int\limits_{s\in S_{X}}\psi\left(s\right)\,d\sigma_{X}^{0}\left(s\right)
\end{align}
\end{linenomath}
holds for some bounded $\psi$ and for each $x\in S_{X}$ and $y\in S_{Y}$. Then, if player $X$ moves first, the pair $\left(\sigma_{X}^{0},\sigma_{X}\left[x,y\right]\right)$ allows $X$ to enforce the equation $\alpha\pi_{X}+\beta\pi_{Y}+\gamma =0$ for every strategy of player $Y$.
\end{strictXTheorem}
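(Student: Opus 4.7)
The plan is to compute $\alpha\pi_{X}+\beta\pi_{Y}+\gamma$ directly from the series representations in Eqs.~(\ref{sieq:objectiveXStrictI}) and (\ref{sieq:objectiveYStrictI}), absorb the constant $\gamma$ into the geometric sum, recognize the resulting integrand as the left-hand side of the hypothesis Eq.~(\ref{sieq:conditionStrictI}), and then invoke Proposition~\ref{prop:mainPropStrictI}. The first observation is the algebraic identity
\begin{align*}
\gamma &= (1-\lambda)\sum_{t=0}^{\infty}\lambda^{2t}(1+\lambda)\gamma = (1-\lambda)\sum_{t=0}^{\infty}\lambda^{2t}\int_{S_{X}\times S_{Y}}(1+\lambda)\gamma\,d\nu_{2t+1}^{1}(x,y),
\end{align*}
which uses $\sum_{t\geqslant 0}\lambda^{2t}=1/(1-\lambda^{2})$ together with the fact that each $\nu_{2t+1}^{1}$ is a probability measure.

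Combining this identity with Eqs.~(\ref{sieq:objectiveXStrictI}) and (\ref{sieq:objectiveYStrictI}) and expanding $u_{X}=f_{X}+\lambda g_{X}$ and $u_{Y}=f_{Y}+\lambda g_{Y}$, the combined integrand becomes
\begin{align*}
\alpha u_{X}(x,y) + \beta u_{Y}(x,y) + (1+\lambda)\gamma = \Big(\alpha f_{X}(x)+\beta f_{Y}(x)+\gamma\Big) + \lambda\Big(\alpha g_{X}(y)+\beta g_{Y}(y)+\gamma\Big),
\end{align*}
which, by hypothesis Eq.~(\ref{sieq:conditionStrictI}), equals $\psi(x)-\lambda^{2}\int_{S_X}\psi\,d\sigma_{X}[x,y] - (1-\lambda^{2})\int_{S_X}\psi\,d\sigma_{X}^{0}$. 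Writing $I_{0}:=\int_{S_X}\psi(s)\,d\sigma_{X}^{0}(s)$, I would then split the resulting sum into two pieces:
\begin{align*}
\alpha\pi_{X}+\beta\pi_{Y}+\gamma &= (1-\lambda)\sum_{t=0}^{\infty}\lambda^{2t}\int_{S_{X}\times S_{Y}}\Big[\psi(x)-\lambda^{2}\int_{S_X}\psi\,d\sigma_{X}[x,y]\Big]d\nu_{2t+1}^{1}(x,y) \\
&\quad - (1-\lambda)(1-\lambda^{2})\,I_{0}\sum_{t=0}^{\infty}\lambda^{2t}.
\end{align*}
Proposition~\ref{prop:mainPropStrictI} evaluates the first sum as $I_{0}$, while the geometric sum in the second term equals $1/(1-\lambda^{2})$, making that term $-(1-\lambda)I_{0}$. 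The two contributions cancel exactly, yielding $\alpha\pi_{X}+\beta\pi_{Y}+\gamma=0$, as required.

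Exchanging summation with integration throughout is justified because $\psi$ and $f_{X},f_{Y},g_{X},g_{Y}$ are bounded and $|\lambda|<1$, so dominated convergence applies exactly as in Proposition~\ref{prop:mainPropStrictI}. The main obstacle is therefore largely administrative: the substantive content — that the $\lambda^{2}$-weighted differences of the successive one-dimensional marginals of $\{\nu_{t}^{0}\}$ telescope to $\sigma_{X}^{0}$ — has already been packaged into Lemma~\ref{lem:mainLemmaStrictI} and Proposition~\ref{prop:mainPropStrictI}. The one genuinely new step is spotting the identity that absorbs the constant $\gamma$ into the geometric series so that the bracketed integrand matches the left-hand side of Eq.~(\ref{sieq:conditionStrictI}); once that trick is seen, the rest is bookkeeping.
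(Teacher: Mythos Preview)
Your proposal is correct and is essentially the paper's own argument. The paper presents the same computation slightly more compactly by moving the constant term to the left, writing $\alpha\pi_{X}+\beta\pi_{Y}+\gamma+(1-\lambda)I_{0}$ and showing this equals $(1-\lambda)I_{0}$ via Proposition~\ref{prop:mainPropStrictI}; your version keeps the $-(1-\lambda^{2})I_{0}$ piece on the right and sums it separately via the geometric series, but the content is identical.
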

\begin{proof}
If Eq. (\ref{sieq:conditionStrictI}) holds, then, by Proposition \ref{prop:mainPropStrictI} and Eqs. (\ref{sieq:objectiveXStrictI}) and (\ref{sieq:objectiveYStrictI}),
\begin{linenomath}
\begin{align}
\alpha\pi_{X}+&\beta\pi_{Y}+\gamma +\left(1-\lambda\right)\int\limits_{s\in S_{X}}\psi\left(s\right)\,d\sigma_{X}^{0}\left(s\right) \nonumber \\
&= \left(1-\lambda\right)\sum_{t=0}^{\infty}\lambda^{2t}\int\limits_{\left(x,y\right)\in S_{X}\times S_{Y}}\left[ \psi\left(x\right) - \lambda^{2}\int\limits_{s\in S_{X}}\psi\left(s\right)\,d\sigma_{X}\left[x,y\right]\left(s\right)\right]\,d\nu_{2t+1}^{1}\left(x,y\right) \nonumber \\
&= \left(1-\lambda\right)\int\limits_{s\in S_{X}}\psi\left(s\right)\,d\sigma_{X}^{0}\left(s\right) ,
\end{align}
\end{linenomath}
and it follows that $\alpha\pi_{X}+\beta\pi_{Y}+\gamma =0$.
\end{proof}

\subsection{$Y$ moves first}\label{si:subsubsec:YmovesFirst}
If $Y$ moves first, then a sequence of moves, $\left(y_{0},x_{1},y_{2},x_{3},y_{4},\dots\right)$, may be rewritten as $\left(y_{0},\left(x_{1},y_{2}\right) ,\left(x_{3},y_{4}\right) ,\dots\right)$, consisting of an initial move by $Y$ followed by a sequence of two-round pairs in which $X$ moves first and $Y$ moves second. The average payoff to player $X$ for this sequence of play is then
\begin{linenomath}
\begin{align}
\pi_{X} &= \left(1-\lambda\right)\left[\sum_{t=0}^{\infty}\lambda^{2t+1}f_{X}\left(x_{2t+1}\right) + \sum_{t=0}^{\infty}\lambda^{2t}g_{X}\left(y_{2t}\right)\right] \nonumber \\
&= \left(1-\lambda\right)\left[g_{X}\left(y_{0}\right) + \sum_{t=0}^{\infty}\lambda^{2t+1}f_{X}\left(x_{2t+1}\right) + \sum_{t=0}^{\infty}\lambda^{2t+2}g_{X}\left(y_{2t+2}\right)\right] \nonumber \\
&= \left(1-\lambda\right)\left[g_{X}\left(y_{0}\right) + \sum_{t=0}^{\infty}\lambda^{2t+1}\Big( f_{X}\left(x_{2t+1}\right) + \lambda g_{X}\left(y_{2t+2}\right)\Big) \right] \nonumber \\
&= \left(1-\lambda\right)\left[g_{X}\left(y_{0}\right) + \sum_{t=0}^{\infty}\lambda^{2t+1}u_{X}\left(x_{2t+1},y_{2t+2}\right) \right] .
\end{align}
\end{linenomath}
Similarly, player $Y$ has an average payoff of $\pi_{Y}=\left(1-\lambda\right)\left[g_{Y}\left(y_{0}\right) + \sum_{t=0}^{\infty}\lambda^{2t+1}u_{Y}\left(x_{2t+1},y_{2t+2}\right) \right]$.

The set of time-$T$ histories is then given by $\mathcal{H}^{T}:=\prod_{t=0}^{T-1}\mathcal{H}_{t}^{T}$, where
\begin{linenomath}
\begin{align}
\mathcal{H}_{t}^{T} &:= \begin{cases}S_{X} & t\textrm{ is odd}, \\ S_{Y} & t\textrm{ is even}\end{cases}
\end{align}
\end{linenomath}
for $0\leqslant t\leqslant T-1$, i.e. obtained from Eq. (\ref{sieq:historyStrictI}) by swapping $S_{X}$ and $S_{Y}$. Similarly, behavioral strategies for players $X$ and $Y$, respectively, are defined as maps,
\begin{linenomath}
\begin{subequations}
\begin{align}
\sigma_{X} &: \bigsqcup_{T\geqslant 0}\mathcal{H}^{2T+1} \longrightarrow \Delta\left(S_{X}\right) ; \\
\sigma_{Y} &: \bigsqcup_{T\geqslant 0}\mathcal{H}^{2T} \longrightarrow \Delta\left(S_{Y}\right) ,
\end{align}
\end{subequations}
\end{linenomath}
where, again, $\mathcal{H}^{0}:=\left\{\varnothing\right\}$ denotes the ``empty" history. In this case, we define
\begin{linenomath}
\begin{align}
\sigma &: \mathcal{H} := \bigsqcup_{T\geqslant 0}\mathcal{H}^{T} \longrightarrow \Delta\left(S_{X}\right)\sqcup\Delta\left(S_{Y}\right) \nonumber \\
&: h^{T} \longmapsto \begin{cases}\sigma_{X}\left[h^{T}\right] & T\textrm{ is odd}, \\ \sigma_{Y}\left[h^{T}\right] & T\textrm{ is even}.\end{cases}
\end{align}
\end{linenomath}
In terms of $\sigma$, the measures $\left\{\mu_{t}\right\}_{t\geqslant 0}$ and $\left\{\nu_{t}^{k}\right\}_{t\geqslant 0}^{0\leqslant k\leqslant t}$ are defined in the same way as they were in \S\ref{si:subsubsec:XmovesFirst}.

In a $\left(2T+1\right)$-round game (rounds $0$ through $2T$), the expected payoff to player $X$ is
\begin{linenomath}
\begin{align}
\pi_{X}^{2T+1} &:= \int\limits_{h^{2T+1}\in\mathcal{H}^{2T+1}} \left[ \left(\frac{1-\lambda}{1-\lambda^{2T}}\right)\left(g_{X}\left(h_{0}^{2T+1}\right) + \sum_{t=0}^{T-1}\lambda^{2t+1}u_{X}\left(h_{2t+1}^{2T+1},h_{2t+2}^{2T+1}\right)\right) \right] \nonumber \\
&\quad\quad\quad\quad \,d\sigma\left(h_{\leqslant 2T-1}^{2T+1},h_{2T}^{2T+1}\right)\cdots\,d\sigma\left(h_{\leqslant 0}^{2T+1},h_{1}^{2T+1}\right)\,d\sigma\left(\varnothing, h_{0}^{2T+1}\right) \nonumber \\
&= \left(\frac{1-\lambda}{1-\lambda^{2T}}\right)\int\limits_{h^{2T+1}\in\mathcal{H}^{2T+1}} g_{X}\left(h_{0}^{2T+1}\right) \,d\sigma\left(h_{\leqslant 2T-1}^{2T+1},h_{2T}^{2T+1}\right)\cdots\,d\sigma\left(h_{\leqslant 0}^{2T+1},h_{1}^{2T+1}\right)\,d\sigma\left(\varnothing, h_{0}^{2T+1}\right) \nonumber \\
&\quad +\left(\frac{1-\lambda}{1-\lambda^{2T}}\right)\sum_{t=0}^{T-1}\lambda^{2t+1}\int\limits_{h^{2T+1}\in\mathcal{H}^{2T+1}} u_{X}\left(h_{2t+1}^{2T+1},h_{2t+2}^{2T+1}\right) \nonumber \\
&\quad\quad\quad\quad \,d\sigma\left(h_{\leqslant 2T-1}^{2T+1},h_{2T}^{2T+1}\right)\cdots\,d\sigma\left(h_{\leqslant 0}^{2T+1},h_{1}^{2T+1}\right)\,d\sigma\left(\varnothing, h_{0}^{2T+1}\right) \nonumber \\\
&= \left(\frac{1-\lambda}{1-\lambda^{2T}}\right)\int\limits_{h^{1}\in\mathcal{H}^{1}} g_{X}\left(h_{0}^{1}\right) \,d\sigma\left(\varnothing, h_{0}^{1}\right) \nonumber \\
&\quad +\left(\frac{1-\lambda}{1-\lambda^{2T}}\right)\sum_{t=0}^{T-1}\lambda^{2t+1}\int\limits_{\left(h_{2t+1}^{2t+3},h_{2t+2}^{2t+3}\right)\in\mathcal{H}_{2t+1}^{2t+3}\times\mathcal{H}_{2t+2}^{2t+3}} u_{X}\left(h_{2t+1}^{2t+3},h_{2t+2}^{2t+3}\right) \,d\nu_{2t+2}^{1}\left(h_{2t+1}^{2t+3},h_{2t+2}^{2t+3}\right) \nonumber \\
&= \left(\frac{1-\lambda}{1-\lambda^{2T}}\right)\int\limits_{y_{0}\in S_{Y}} g_{X}\left(y_{0}\right) \,d\sigma_{Y}^{0}\left(y_{0}\right) \nonumber \\
&\quad +\left(\frac{1-\lambda}{1-\lambda^{2T}}\right)\sum_{t=0}^{T-1}\lambda^{2t+1}\int\limits_{\left(x,y\right)\in S_{X}\times S_{Y}} u_{X}\left(x,y\right) \,d\nu_{2t+2}^{1}\left(x,y\right) ,
\end{align}
\end{linenomath}
where $\sigma_{Y}^{0}:=\sigma_{Y}\left[\varnothing\right]$ is the initial action of player $Y$. Thus, we define player $X$'s average payoff as
\begin{linenomath}
\begin{align}\label{sieq:objectiveXStrictII}
\pi_{X} &:= \lim_{T\rightarrow\infty}\pi_{X}^{2T+1} \nonumber \\
&= \left(1-\lambda\right)\left[\int\limits_{y_{0}\in S_{Y}} g_{X}\left(y_{0}\right) \,d\sigma_{Y}^{0}\left(y_{0}\right) +\sum_{t=0}^{\infty}\lambda^{2t+1}\int\limits_{\left(x,y\right)\in S_{X}\times S_{Y}} u_{X}\left(x,y\right) \,d\nu_{2t+2}^{1}\left(x,y\right)\right] .
\end{align}
\end{linenomath}
Similarly, the expected payoff to player $Y$ is
\begin{linenomath}
\begin{align}\label{sieq:objectiveYStrictII}
\pi_{Y} &:= \left(1-\lambda\right)\left[\int\limits_{y_{0}\in S_{Y}} g_{Y}\left(y_{0}\right) \,d\sigma_{Y}^{0}\left(y_{0}\right) +\sum_{t=0}^{\infty}\lambda^{2t+1}\int\limits_{\left(x,y\right)\in S_{X}\times S_{Y}} u_{Y}\left(x,y\right) \,d\nu_{2t+2}^{1}\left(x,y\right)\right] .
\end{align}
\end{linenomath}

Once again, our main technical lemma is an analogue of Lemma 3.1 of \citet{akin:Games:2015}:
\begin{lemma}\label{lem:mainLemmaStrictII}
For any memory-one strategy, $\sigma_{X}\left[x,y\right]$, and any $E\in\mathcal{F}\left(S_{X}\right)$,
\begin{linenomath}
\begin{align}
\sum_{t=0}^{\infty}\lambda^{2t+1}\int\limits_{\left(x,y\right)\in S_{X}\times S_{Y}}\Big[ \chi_{E\times S_{Y}}\left(x,y\right) - \lambda^{2}\sigma_{X}\left[x,y\right]\left(E\right)\Big]\,d\nu_{2t+2}^{1}\left(x,y\right) &= \lambda\int\limits_{y_{0}\in S_{Y}}\sigma_{X}\left[y_{0}\right]\left(E\right)\,d\sigma_{Y}^{0}\left(y_{0}\right) ,
\end{align}
\end{linenomath}
where $\sigma_{X}^{0}\left[y_{0}\right]$ is the initial action of player $X$.
\end{lemma}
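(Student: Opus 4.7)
My plan is to mimic the proof of Lemma \ref{lem:mainLemmaStrictI} almost verbatim, but with indices shifted by one (since $Y$ now moves in round $0$) and with an extra factor of $\lambda$ because the first pair $(x,y)$ now occurs at rounds $(1,2)$ rather than $(0,1)$. The key point is that, although the disintegration formula for $\mu_t$ in \S\ref{si:subsubsec:YmovesFirst} swaps the roles of $X$ and $Y$, the measures $\nu_{2t+2}^{1}$ still live on $\mathcal{H}_{2t+1}^{2t+2}\times\mathcal{H}_{2t+2}^{2t+3} = S_{X}\times S_{Y}$, so the same telescoping identity will apply.

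Concretely, I would first observe that marginalizing $\nu_{2t+2}^{1}$ over its second ($Y$) coordinate gives
\begin{align*}
\int_{(x,y)\in S_X\times S_Y}\chi_{E\times S_Y}(x,y)\,d\nu_{2t+2}^{1}(x,y) &= \nu_{2t+1}^{0}(E),
\end{align*}
the marginal distribution of $X$'s action at round $2t+1$. For the second integrand, I would unfold the definition of $\mu_{2t+3}$ in Eq.~(\ref{sieq:muDef}) and use the memory-one property of $\sigma_X[x,y]$ (which depends only on coordinates $h_{2t+1}$ and $h_{2t+2}$) to recognize that
\begin{align*}
\int_{(x,y)\in S_X\times S_Y}\sigma_X[x,y](E)\,d\nu_{2t+2}^{1}(x,y) &= \nu_{2t+3}^{0}(E).
\end{align*}

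With these two identities in hand, the sum collapses telescopically:
\begin{align*}
\sum_{t=0}^{\infty}\lambda^{2t+1}\Bigl[\nu_{2t+1}^{0}(E)-\lambda^{2}\nu_{2t+3}^{0}(E)\Bigr] &= \lambda\,\nu_{1}^{0}(E) - \lim_{t\to\infty}\lambda^{2t+3}\nu_{2t+3}^{0}(E) = \lambda\,\nu_{1}^{0}(E),
\end{align*}
where the limit vanishes because $\nu_{2t+3}^{0}$ is a probability measure and $0<\lambda<1$. The last step is to identify $\nu_{1}^{0}(E)$: by Eq.~(\ref{sieq:muDef}) with $t=0$, since the map $\sigma$ at the empty history returns $\sigma_Y^{0}$ (as $Y$ moves first) and at a length-one history $(y_0)$ returns $\sigma_X[y_0]$ (as round $1$ is $X$'s turn), we get
\begin{align*}
\nu_{1}^{0}(E) &= \int_{y_0\in S_Y}\sigma_X[y_0](E)\,d\sigma_Y^{0}(y_0),
\end{align*}
which yields the claimed right-hand side after multiplying by $\lambda$.

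The only step requiring genuine care is the second identity above: the measures $\nu$ are defined as marginals of $\mu_t$ rather than directly through a Markov-style recursion, so I would need to verify it by expanding $\mu_{2t+3}(\mathcal{H}^{2t+3}\times E)$ using Eq.~(\ref{sieq:muDef}) and then peeling off the rounds after $2t+2$ using Fubini; this is exactly the step where the memory-one hypothesis on $\sigma_X$ is used to replace $\sigma(h^{2t+3},E)$ with $\sigma_X[h_{2t+1}^{2t+3},h_{2t+2}^{2t+3}](E)$ before integrating against $\nu_{2t+2}^{1}$. Beyond that bookkeeping, the argument is routine, and boundedness/measurability of $\sigma_X[\cdot,\cdot](E)$ as a function of $(x,y)$ ensures all integrals are well-defined so that dominated convergence justifies passage to the infinite sum.
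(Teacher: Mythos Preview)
Your proposal is correct and follows essentially the same approach as the paper: you establish the same two identities $\int\chi_{E\times S_Y}\,d\nu_{2t+2}^{1}=\nu_{2t+1}^{0}(E)$ and $\int\sigma_X[x,y](E)\,d\nu_{2t+2}^{1}=\nu_{2t+3}^{0}(E)$, telescope the sum to $\lambda\,\nu_1^0(E)$, and identify $\nu_1^0(E)=\int_{S_Y}\sigma_X[y_0](E)\,d\sigma_Y^0(y_0)$, exactly as the paper does. If anything, you supply more justification than the paper (which states the two identities without further comment), and your final remark about dominated convergence is unnecessary here since the telescoping sum converges directly from $0<\lambda<1$ and $\nu_t^0(E)\leqslant 1$.
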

\begin{proof}
By the definition of $\left\{\nu_{t}^{k}\right\}_{t\geqslant 0}^{0\leqslant k\leqslant t}$, we see that
\begin{linenomath}
\begin{subequations}
\begin{align}
\int\limits_{\left(x,y\right)\in S_{X}\times S_{Y}}\chi_{E\times S_{Y}}\left(x,y\right)\,d\nu_{2t+2}^{1}\left(x,y\right) &= \nu_{2t+1}^{0}\left(E\right) ; \\
\int\limits_{\left(x,y\right)\in S_{X}\times S_{Y}}\sigma_{X}\left[x,y\right]\left(E\right)\,d\nu_{2t+2}^{1}\left(x,y\right) &= \nu_{2t+3}^{0}\left(E\right) .\end{align}
\end{subequations}
\end{linenomath}
Therefore, it follows that
\begin{linenomath}
\begin{align}
\sum_{t=0}^{\infty}&\lambda^{2t+1}\int\limits_{\left(x,y\right)\in S_{X}\times S_{Y}}\Big[ \chi_{E\times S_{Y}}\left(x,y\right) - \lambda^{2}\sigma_{X}\left[x,y\right]\left(E\right)\Big]\,d\nu_{2t+2}^{1}\left(x,y\right) \nonumber \\
&= \sum_{t=0}^{\infty}\lambda^{2t+1}\Big(\nu_{2t+1}^{0}\left(E\right) -\lambda^{2}\nu_{2t+3}^{0}\left(E\right)\Big) \nonumber \\
&= \lambda\nu_{1}^{0}\left(E\right) - \lim_{t\rightarrow\infty}\lambda^{2t+3}\nu_{2t+3}^{0}\left(E\right) \nonumber \\
&= \lambda\nu_{1}^{0}\left(E\right) \nonumber \\
&= \lambda\int\limits_{y_{0}\in S_{Y}}\sigma_{X}\left[y_{0}\right]\left(E\right)\,d\sigma_{Y}^{0}\left(y_{0}\right) ,
\end{align}
\end{linenomath}
which completes the proof.
\end{proof}

\begin{proposition}\label{prop:mainPropStrictII}
For any bounded, measurable function, $\psi :S_{X}\rightarrow\mathbb{R}$,
\begin{linenomath}
\begin{align}
\sum_{t=0}^{\infty} &\lambda^{2t+1}\int\limits_{\left(x,y\right)\in S_{X}\times S_{Y}} \left[ \psi\left(x\right) - \lambda^{2}\int\limits_{s\in S_{X}}\psi\left(s\right)\,d\sigma_{X}\left[x,y\right]\left(s\right) \right] \,d\nu_{2t+2}^{1}\left(x,y\right) \nonumber \\
&= \lambda\int\limits_{y_{0}\in S_{Y}}\int\limits_{s\in S_{X}}\psi\left(s\right)\,d\sigma_{X}^{0}\left[y_{0}\right]\left(s\right)\,d\sigma_{Y}^{0}\left(y_{0}\right) .
\end{align}
\end{linenomath}
\end{proposition}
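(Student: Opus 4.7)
The plan is to obtain the proposition as the standard bounded-measurable extension of Lemma \ref{lem:mainLemmaStrictII}, which is already the identity for $\psi=\chi_E$ with $E\in\mathcal{F}(S_X)$, once one notes that $\int_{s\in S_X}\chi_E(s)\,d\sigma_X^0[y_0](s)=\sigma_X^0[y_0](E)$. This mirrors exactly how Proposition \ref{prop:mainPropStrictI} is obtained from Lemma \ref{lem:mainLemmaStrictI}, via the argument in Proposition 1 of \citep{mcavoy:PNAS:2016}; the only task here is to verify that the analytic bounds go through with $\nu_{2t+1}^1$ replaced by $\nu_{2t+2}^1$ and $\sigma_X^0$ replaced by the kernel $y_0\mapsto\sigma_X^0[y_0]$ averaged against $\sigma_Y^0$.

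The first step is linearity: both sides of the claimed identity depend linearly on $\psi$, so the lemma extends from $\{\chi_E:E\in\mathcal{F}(S_X)\}$ to all simple real-valued functions by taking finite linear combinations. The second step is approximation. Given a bounded measurable $\psi$ with $M:=\sup|\psi|<\infty$, I would choose simple functions $\psi_n$ with $|\psi_n|\leq M$ and $\psi_n\to\psi$ pointwise (e.g., dyadic truncation of $\psi$), and set
\begin{align*}
\Phi_n(x,y):=\psi_n(x)-\lambda^2\int_{s\in S_X}\psi_n(s)\,d\sigma_X[x,y](s),
\end{align*}
together with the analogous $\Phi_\infty$. Since $\sigma_X[x,y]$ is a probability measure, the dominated convergence theorem yields $\Phi_n\to\Phi_\infty$ pointwise with the uniform bound $|\Phi_n|\leq(1+\lambda^2)M$.

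Integrating $\Phi_n$ against the probability measure $\nu_{2t+2}^1$ preserves the bound $(1+\lambda^2)M$, and then
\begin{align*}
\sum_{t\geq 0}\lambda^{2t+1}(1+\lambda^2)M=\frac{\lambda(1+\lambda^2)M}{1-\lambda^2}<\infty
\end{align*}
furnishes an $\ell^1$-dominant in $t$. By dominated convergence for series, the left-hand side of the identity applied to $\psi_n$ converges to the same expression applied to $\psi$. The right-hand side passes to the limit by a single application of dominated convergence against the probability kernel $\sigma_X^0[y_0]$ integrated against $\sigma_Y^0$. Matching the two limits, together with the identity for simple $\psi_n$ from step one, yields the proposition.

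The only substantive issue is the double interchange of pointwise limit, infinite sum, and integral, and this is routine because $\psi$ is bounded and $\lambda<1$ supplies geometric decay. The asymmetry of the ``$Y$ moves first'' setting relative to the ``$X$ moves first'' setting is absorbed entirely into Lemma \ref{lem:mainLemmaStrictII} (the outer factor $\lambda$ and the averaging of $\sigma_X^0[y_0]$ against $\sigma_Y^0$ on the right-hand side), so no additional analytical input beyond the lemma is required to close the argument.
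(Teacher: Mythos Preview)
Your proposal is correct and follows essentially the same approach as the paper: the paper simply states that the result follows from Lemma~\ref{lem:mainLemmaStrictII} and the dominated convergence theorem, with details identical to the proof of Proposition~1 of \citep{mcavoy:PNAS:2016}. Your write-up faithfully supplies exactly those omitted details (linearity to simple functions, pointwise approximation with a uniform bound, and dominated convergence for both the integrals and the geometric series), so there is no substantive difference.
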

\begin{proof}
The result follows from Lemma \ref{lem:mainLemmaStrictII} and the dominated convergence theorem. We do not include the details here; the argument is the same as the proof of Proposition 1 of \citep{mcavoy:PNAS:2016}.
\end{proof}

We are now in a position to prove the second of our main results for strictly-alternating games:
\begin{strictYTheorem}[Autocratic strategies for strictly-alternating games in which $Y$ moves first]
Suppose that
\begin{linenomath}
\begin{align}\label{sieq:conditionStrictII}
\Big( \alpha &f_{X}\left(x\right) + \beta f_{Y}\left(x\right) + \gamma \Big) + \lambda\Big( \alpha g_{X}\left(y\right) + \beta g_{Y}\left(y\right) + \gamma \Big) + \left(\frac{1-\lambda^{2}}{\lambda}\right)\Big( \alpha g_{X}\left(y_{0}\right) + \beta g_{Y}\left(y_{0}\right) + \gamma \Big) \nonumber \\
&= \psi\left(x\right) - \lambda^{2}\int\limits_{s\in S_{X}}\psi\left(s\right)\,d\sigma_{X}\left[x,y\right]\left(s\right) - \left(1-\lambda^{2}\right)\int\limits_{s\in S_{X}}\psi\left(s\right)\,d\sigma_{X}^{0}\left[y_{0}\right]\left(s\right)
\end{align}
\end{linenomath}
holds for some bounded $\psi$ and for each $x\in S_{X}$ and $y_{0},y\in S_{Y}$. Then, if player $X$ moves second, the pair $\left(\sigma_{X}^{0}\left[y_{0}\right],\sigma_{X}\left[x,y\right]\right)$ allows $X$ to enforce the equation $\alpha\pi_{X}+\beta\pi_{Y}+\gamma =0$ for every strategy of player $Y$.
\end{strictYTheorem}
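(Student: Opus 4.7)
The plan is to mimic the proof of Theorem~\ref{thm:strictX} using Proposition~\ref{prop:mainPropStrictII} in place of Proposition~\ref{prop:mainPropStrictI}, with extra bookkeeping to absorb the two new features introduced when $Y$ moves first: (i) player $X$'s initial distribution $\sigma_X^0[y_0]$ now depends on $y_0$, and (ii) the first-round payoff contributions $g_X(y_0)$ and $g_Y(y_0)$ sit outside the two-round sums in Eqs.~(\ref{sieq:objectiveXStrictII})--(\ref{sieq:objectiveYStrictII}).

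First, abbreviating $A(x) := \alpha f_X(x) + \beta f_Y(x) + \gamma$ and $B(y) := \alpha g_X(y) + \beta g_Y(y) + \gamma$, I would combine Eqs.~(\ref{sieq:objectiveXStrictII})--(\ref{sieq:objectiveYStrictII}) with the identity $\alpha u_X(x,y) + \beta u_Y(x,y) = A(x) + \lambda B(y) - (1+\lambda)\gamma$ together with the elementary identity $(1-\lambda) + (1-\lambda^2)\sum_{t\ge 0}\lambda^{2t+1} = 1$. A routine rearrangement absorbs all stray constants $\gamma$, yielding
\begin{align*}
\alpha\pi_X + \beta\pi_Y + \gamma &= (1-\lambda)\int_{S_Y} B(y_0)\,d\sigma_Y^0(y_0) + (1-\lambda)\sum_{t=0}^{\infty}\lambda^{2t+1}\int_{S_X\times S_Y}\bigl[A(x) + \lambda B(y)\bigr]\,d\nu_{2t+2}^1(x,y).
\end{align*}

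The heart of the proof is then to integrate the hypothesis~(\ref{sieq:conditionStrictII}) against any joint measure on $(x,y,y_0)$ whose $(x,y)$-marginal is $\nu_{2t+2}^1$ and whose $y_0$-marginal is $\sigma_Y^0$; since both sides decouple as a function of $(x,y)$ plus a function of $y_0$, only these marginals matter, so the choice of joint measure is pure bookkeeping. After multiplying by $\lambda^{2t+1}$ and summing over $t$, the identity $(1-\lambda^2)\sum_{t\ge 0}\lambda^{2t+1} = \lambda$ converts the $\tfrac{1-\lambda^2}{\lambda}B(y_0)$ term on the LHS into exactly $\int B(y_0)\,d\sigma_Y^0(y_0)$, and the $-(1-\lambda^2)\int\psi\,d\sigma_X^0[y_0]$ term on the RHS into exactly $-\lambda\iint\psi\,d\sigma_X^0[y_0]\,d\sigma_Y^0(y_0)$. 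Proposition~\ref{prop:mainPropStrictII} applied to the $(x,y)$-dependent part of the RHS contributes $+\lambda\iint\psi\,d\sigma_X^0[y_0]\,d\sigma_Y^0(y_0)$, so the two $\psi$-integrals cancel, leaving $\sum_{t\ge 0}\lambda^{2t+1}\int[A(x)+\lambda B(y)]\,d\nu_{2t+2}^1 + \int B(y_0)\,d\sigma_Y^0 = 0$; multiplying by $(1-\lambda)$ gives $\alpha\pi_X+\beta\pi_Y+\gamma=0$.

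The main (and essentially only) obstacle is coefficient-matching: one must verify that the prefactors $\tfrac{1-\lambda^2}{\lambda}$ and $(1-\lambda^2)$ in the hypothesis have been tuned precisely to compensate for the geometric-series factor produced by the $\lambda^{2t+1}$-weighted sum, relative to the factors of $1$ and $(1-\lambda^2)$ used in Theorem~\ref{thm:strictX}. I expect no further analytic content, since the dominated-convergence telescope underlying this proof is already packaged in Lemma~\ref{lem:mainLemmaStrictII} and Proposition~\ref{prop:mainPropStrictII}.
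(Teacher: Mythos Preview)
Your proposal is correct and follows essentially the same route as the paper's proof: both integrate the hypothesis~(\ref{sieq:conditionStrictII}) over $y_{0}$ against $\sigma_{Y}^{0}$, weight by $\lambda^{2t+1}$ and sum, and then invoke Proposition~\ref{prop:mainPropStrictII} so that the $\psi$-integrals cancel and the leftover $B(y_{0})$ term matches the first-round contribution in Eqs.~(\ref{sieq:objectiveXStrictII})--(\ref{sieq:objectiveYStrictII}). Your abbreviations $A(x)$, $B(y)$ and your explicit tracking of the geometric-series identity $(1-\lambda^{2})\sum_{t\geqslant 0}\lambda^{2t+1}=\lambda$ make the coefficient-matching more transparent than in the paper's presentation, but the argument is the same.
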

\begin{proof}
If Eq. (\ref{sieq:conditionStrictII}) holds, then, for any initial action of player $Y$, $\sigma_{Y}^{0}$, we have
\begin{linenomath}
\begin{align}
\Big( \alpha &f_{X}\left(x\right) + \beta f_{Y}\left(x\right) + \gamma \Big) + \lambda\Big( \alpha g_{X}\left(y\right) + \beta g_{Y}\left(y\right) + \gamma \Big) \nonumber \\
&\quad + \left(\frac{1-\lambda^{2}}{\lambda}\right)\int\limits_{y_{0}\in S_{Y}}\Big[ \alpha g_{X}\left(y_{0}\right) + \beta g_{Y}\left(y_{0}\right) + \gamma \Big]\,d\sigma_{Y}^{0}\left(y_{0}\right) \nonumber \\
&= \psi\left(x\right) - \lambda^{2}\int\limits_{s\in S_{X}}\psi\left(s\right)\,d\sigma_{X}\left[x,y\right]\left(s\right) - \left(1-\lambda^{2}\right)\int\limits_{y_{0}\in S_{Y}}\int\limits_{s\in S_{X}}\psi\left(s\right)\,d\sigma_{X}^{0}\left[y_{0}\right]\left(s\right)\,d\sigma_{Y}^{0}\left(y_{0}\right) .
\end{align}
\end{linenomath}
Therefore, by Proposition \ref{prop:mainPropStrictII} and Eqs. (\ref{sieq:objectiveXStrictII}) and (\ref{sieq:objectiveYStrictII}), we see that, for each $\sigma_{Y}^{0}$,
\begin{linenomath}
\begin{align}
\alpha\pi_{X} + &\beta\pi_{Y} + \gamma +\left(1-\lambda\right)\lambda\int\limits_{y_{0}\in S_{Y}}\int\limits_{s\in S_{X}}\psi\left(s\right)\,d\sigma_{X}^{0}\left[y_{0}\right]\left(s\right)\,d\sigma_{Y}^{0}\left(y_{0}\right) \nonumber \\
&\quad -\left(1-\lambda\right)\int\limits_{y_{0}\in S_{Y}}\Big[\alpha g_{X}\left(y_{0}\right) +\beta g_{Y}\left(y_{0}\right) +\gamma\Big]\,d\sigma_{Y}^{0}\left(y_{0}\right) \nonumber \\
&= \left(1-\lambda\right)\sum_{t=0}^{\infty}\lambda^{2t+1}\int\limits_{\left(x,y\right)\in S_{X}\times S_{Y}} \left[ \psi\left(x\right) - \lambda^{2}\int\limits_{s\in S_{X}}\psi\left(s\right)\,d\sigma_{X}\left[x,y\right]\left(s\right) \right] \,d\nu_{2t+2}^{1}\left(x,y\right) \nonumber \\
&\quad -\left(1-\lambda\right)\sum_{t=0}^{\infty}\lambda^{2t+1}\left( \left(\frac{1-\lambda^{2}}{\lambda}\right)\int\limits_{y_{0}\in S_{Y}}\Big[ \alpha g_{X}\left(y_{0}\right) + \beta g_{Y}\left(y_{0}\right) + \gamma \Big]\,d\sigma_{Y}^{0}\left(y_{0}\right) \right) \nonumber \\
&= \left(1-\lambda\right)\lambda\int\limits_{y_{0}\in S_{Y}}\int\limits_{s\in S_{X}}\psi\left(s\right)\,d\sigma_{X}^{0}\left[y_{0}\right]\left(s\right)\,d\sigma_{Y}^{0}\left(y_{0}\right) \nonumber \\
&\quad - \left(1-\lambda\right)\int\limits_{y_{0}\in S_{Y}}\Big[ \alpha g_{X}\left(y_{0}\right) + \beta g_{Y}\left(y_{0}\right) + \gamma \Big]\,d\sigma_{Y}^{0}\left(y_{0}\right) ,
\end{align}
\end{linenomath}
and it follows immediately that $\alpha\pi_{X}+\beta\pi_{Y}+\gamma =0$.
\end{proof}

\section{Randomly-alternating games}\label{si:subsec:random}

In each round of a randomly-alternating game, player $X$ moves with probability $\omega_{X}$ and player $Y$ moves with probability $1-\omega_{X}$ for some $0\leqslant\omega_{X}\leqslant 1$. For $T\geqslant 1$, a time-$T$ history is an element of the space
\begin{linenomath}
\begin{align}
\mathcal{H}^{T} &:= \left(S_{X}\sqcup S_{Y}\right)^{T} ,
\end{align}
\end{linenomath}
where $S_{X}\sqcup S_{Y}$ denotes the disjoint union of the action spaces of the players, $S_{X}$ and $S_{Y}$. As in \S\ref{si:subsec:strict}, we let $\mathcal{H}^{0}:=\left\{\varnothing\right\}$, where $\varnothing$ indicates the ``empty history." In terms of the space of all possible histories, $\mathcal{H}:=\left\{\varnothing\right\}\sqcup\bigsqcup_{T\geqslant 1}\mathcal{H}^{T}$, behavioral strategies for players $X$ and $Y$, respectively, are maps,
\begin{linenomath}
\begin{subequations}
\begin{align}
\sigma_{X} &: \mathcal{H} \longrightarrow \Delta\left(S_{X}\right) ; \\
\sigma_{Y} &: \mathcal{H} \longrightarrow \Delta\left(S_{Y}\right) .
\end{align}
\end{subequations}
\end{linenomath}
These strategies may be written more compactly as a single map, $\sigma :\mathcal{H}\rightarrow\Delta\left(S_{X}\sqcup S_{Y}\right)$, defined for $h^{T}\in\mathcal{H}^{T}$ and $E\in\mathcal{F}\left(S_{X}\sqcup S_{Y}\right)$ via $\sigma\left[h^{T}\right]\left(E\right) :=\omega_{X}\sigma_{X}\left[h^{T}\right]\left(E\cap S_{X}\right) +\left(1-\omega_{X}\right)\sigma_{Y}\left[h^{T}\right]\left(E\cap S_{Y}\right)$. Furthermore, if $\mathcal{H}_{t}^{T}:=S_{X}\sqcup S_{Y}$ for $0\leqslant t\leqslant T-1$, then these two strategies, $\sigma_{X}$ and $\sigma_{Y}$, together generate a sequence of probability measures, $\left\{\nu_{t}^{0}\right\}_{t\geqslant 0}$, on $\mathcal{H}_{t}^{t+1}=S_{X}\sqcup S_{Y}$ for each $t$, defined via Eqs. (\ref{sieq:muDef}) and (\ref{sieq:nuDef}).

Consider the single-round payoff function for player $X$, $u_{X}:S_{X}\sqcup S_{Y}\rightarrow\mathbb{R}$, defined by
\begin{linenomath}
\begin{align}\label{sieq:singleRoundRandom}
u_{X}\left(s\right) &:= \begin{cases}f_{X}\left(s\right) & s\in S_{X} , \\ g_{X}\left(s\right) & s\in S_{Y} .\end{cases}
\end{align}
\end{linenomath}
The single-round payoff function for player $Y$, $u_{Y}$, is defined by replacing $f_{X}$ by $f_{Y}$ and $g_{X}$ by $g_{Y}$ in Eq. (\ref{sieq:singleRoundRandom}). In a $\left(T+1\right)$-round game (rounds $0$ through $T$), the expected payoff to player $X$ is then
\begin{linenomath}
\begin{align}
\pi_{X}^{T+1} &:= \int\limits_{h^{T+1}\in\mathcal{H}^{T+1}} \left[ \left(\frac{1-\lambda}{1-\lambda^{T+1}}\right)\sum_{t=0}^{T}\lambda^{t}u_{X}\left(h_{t}^{T+1}\right) \right] \,d\sigma\left(h_{\leqslant T-1}^{T+1},h_{T}^{T+1}\right)\cdots\,d\sigma\left(h_{\leqslant 0}^{T+1},h_{1}^{T+1}\right)\,d\sigma\left(\varnothing, h_{0}^{T+1}\right) \nonumber \\
&= \left(\frac{1-\lambda}{1-\lambda^{T+1}}\right)\sum_{t=0}^{T}\lambda^{t}\int\limits_{h^{T+1}\in\mathcal{H}^{T+1}} u_{X}\left(h_{t}^{T+1}\right) \,d\sigma\left(h_{\leqslant T-1}^{T+1},h_{T}^{T+1}\right)\cdots\,d\sigma\left(h_{\leqslant 0}^{T+1},h_{1}^{T+1}\right)\,d\sigma\left(\varnothing, h_{0}^{T+1}\right) \nonumber \\
&= \left(\frac{1-\lambda}{1-\lambda^{T+1}}\right)\sum_{t=0}^{T}\lambda^{t}\int\limits_{h^{t+1}\in\mathcal{H}^{t+1}} u_{X}\left(h_{t}^{t+1}\right) \,d\sigma\left(h_{\leqslant t-1}^{t+1},h_{t}^{t+1}\right)\cdots\,d\sigma\left(h_{\leqslant 0}^{t+1},h_{1}^{t+1}\right)\,d\sigma\left(\varnothing, h_{0}^{t+1}\right) \nonumber \\
&= \left(\frac{1-\lambda}{1-\lambda^{T+1}}\right)\sum_{t=0}^{T}\lambda^{t}\int\limits_{h_{t}^{t+1}\in\mathcal{H}_{t}^{t+1}} u_{X}\left(h_{t}^{t+1}\right) \, d\nu_{t}^{0}\left(h_{t}^{t+1}\right) \nonumber \\
&= \left(\frac{1-\lambda}{1-\lambda^{T+1}}\right)\sum_{t=0}^{T}\lambda^{t}\int\limits_{s\in S_{X}\sqcup S_{Y}} u_{X}\left(s\right) \, d\nu_{t}^{0}\left(s\right) .
\end{align}
\end{linenomath}
Therefore, we define the average payoff of player $X$ to be
\begin{linenomath}
\begin{align}\label{sieq:objectiveXRandom}
\pi_{X} &:= \lim_{T\rightarrow\infty}\pi_{X}^{T+1} = \left(1-\lambda\right)\sum_{t=0}^{\infty}\lambda^{t}\int\limits_{s\in S_{X}\sqcup S_{Y}} u_{X}\left(s\right) \, d\nu_{t}^{0}\left(s\right) .
\end{align}
\end{linenomath}
Similarly, the expected payoff of player $Y$ is
\begin{linenomath}
\begin{align}\label{sieq:objectiveYRandom}
\pi_{Y} &:= \left(1-\lambda\right)\sum_{t=0}^{\infty}\lambda^{t}\int\limits_{s\in S_{X}\sqcup S_{Y}} u_{Y}\left(s\right) \, d\nu_{t}^{0}\left(s\right) .
\end{align}
\end{linenomath}

A memory-one strategy in the context of randomly-alternating games looks slightly different from that of strictly-alternating games. Instead of simply reacting to the previous moves of the players, one also needs to know which player moved last since, in any given round, either $X$ or $Y$ could move (provided $\omega_{X}\neq 0,1$). Therefore, a memory-one strategy for player $X$ consists of an action policy, $\sigma_{X}^{X}\left[x\right]$, when $X$ moves $x$ in the previous round, and a policy, $\sigma_{X}^{Y}\left[y\right]$, when $Y$ moves $y$ in the previous round. More succinctly, we let
\begin{linenomath}
\begin{align}\label{sieq:XYcombinedRandom}
\sigma_{X}\left[s\right] &:= \begin{cases}\sigma_{X}^{X}\left[s\right] & s\in S_{X} , \\ \sigma_{X}^{Y}\left[s\right] & s\in S_{Y} .\end{cases}
\end{align}
\end{linenomath}

One final time, our main technical lemma is an analogue of Lemma 3.1 of \citet{akin:Games:2015}:
\begin{lemma}\label{lem:mainLemmaRandom}
For memory-one strategies, $\sigma_{X}^{X}\left[x\right]$ and $\sigma_{X}^{Y}\left[y\right]$, and $E\in\mathcal{F}\left(S_{X}\right)$, we have
\begin{linenomath}
\begin{align}
\sum_{t=0}^{\infty}\lambda^{t}\int\limits_{s\in S_{X}\sqcup S_{Y}}\Big[\chi_{E}\left(s\right) - \lambda\omega_{X}\sigma_{X}\left[s\right]\left(E\right)\Big]\,d\nu_{t}^{0}\left(s\right) &= \omega_{X}\sigma_{X}^{0}\left(E\right) ,
\end{align}
\end{linenomath}
where $\sigma_{X}\left[s\right]$ is defined via Eq. (\ref{sieq:XYcombinedRandom}).
\end{lemma}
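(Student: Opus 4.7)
The plan is to mimic the telescoping argument used for Lemma \ref{lem:mainLemmaStrictI}, but adapted to the fact that in the randomly-alternating setting the one-step measure $\nu_t^0$ lives on the disjoint union $S_X\sqcup S_Y$ rather than on a product of action spaces. The key is to rewrite both integrands on the left-hand side as evaluations of $\nu_t^0$ (or $\nu_{t+1}^0$) on the set $E\subseteq S_X$ and then collapse the geometric-style series by telescoping.

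First I would observe that, since $E\subseteq S_X$, the indicator integral is immediate: $\int_{s\in S_X\sqcup S_Y}\chi_E(s)\,d\nu_t^0(s)=\nu_t^0(E)$. Second, I would unpack the definition of $\sigma$ on the subset $E\subseteq S_X$. By the compact form $\sigma[h]\left(E\right)=\omega_X\sigma_X[h](E\cap S_X)+(1-\omega_X)\sigma_Y[h](E\cap S_Y)$, we have $\sigma[h](E)=\omega_X\sigma_X[h](E)$. Using the memory-one assumption, $\sigma_X[h^{t+1}]=\sigma_X[h_t^{t+1}]$, and the recursive definition of $\mu_{t+1}$ (which builds $\nu_{t+1}^0$ by integrating the next-move kernel against the previous history), this yields the identity
\begin{align*}
\nu_{t+1}^0(E) \;=\; \int_{s\in S_X\sqcup S_Y}\sigma[s]\left(E\right)\,d\nu_t^0(s) \;=\; \omega_X\int_{s\in S_X\sqcup S_Y}\sigma_X[s](E)\,d\nu_t^0(s).
\end{align*}
This is the randomly-alternating counterpart of the two identities used in the proof of Lemma \ref{lem:mainLemmaStrictI}, and it is the step I expect to be the main conceptual hurdle, since it requires carefully threading the memory-one assumption through the definition of $\sigma$ on the disjoint union and noting that the $(1-\omega_X)\sigma_Y$ contribution to $\sigma[s](E)$ vanishes because $E\subseteq S_X$.

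Substituting both identities into the left-hand side yields a telescoping series:
\begin{align*}
\sum_{t=0}^{\infty}\lambda^t\Big[\nu_t^0(E) - \lambda\,\nu_{t+1}^0(E)\Big] \;=\; \sum_{t=0}^{\infty}\Big(\lambda^t\nu_t^0(E) - \lambda^{t+1}\nu_{t+1}^0(E)\Big) \;=\; \nu_0^0(E),
\end{align*}
where the tail term $\lambda^{T+1}\nu_{T+1}^0(E)\to 0$ vanishes because $\nu_{T+1}^0$ is a probability measure and $0<\lambda<1$ (and boundedness of the partial sums is guaranteed so the geometric-style rearrangement is legitimate). Finally, since $E\subseteq S_X$, the initial measure evaluates as $\nu_0^0(E)=\sigma[\varnothing](E)=\omega_X\sigma_X^0(E)$, which is exactly the right-hand side. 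This completes the proof.
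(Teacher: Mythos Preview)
Your proposal is correct and follows essentially the same approach as the paper's proof: both establish the two identities $\int\chi_E\,d\nu_t^0=\nu_t^0(E)$ and $\int\omega_X\sigma_X[s](E)\,d\nu_t^0(s)=\nu_{t+1}^0(E)$, then telescope the resulting series to $\nu_0^0(E)=\omega_X\sigma_X^0(E)$. If anything, you give slightly more justification than the paper does for the second identity, explicitly noting that $E\subseteq S_X$ kills the $(1-\omega_X)\sigma_Y$ contribution and spelling out how the memory-one assumption reduces $\sigma_X[h^{t+1}]$ to $\sigma_X[h_t^{t+1}]$.
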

\begin{proof}
By the definition of the sequence of measures, $\left\{\nu_{t}^{0}\right\}_{t\geqslant 0}$,
\begin{linenomath}
\begin{subequations}
\begin{align}
\int\limits_{s\in S_{X}\sqcup S_{Y}}\chi_{E}\left(s\right)\,d\nu_{t}^{0}\left(s\right) &= \nu_{t}^{0}\left(E\right) ; \\
\int\limits_{s\in S_{X}\sqcup S_{Y}}\omega_{X}\sigma_{X}\left[s\right]\left(E\right)\,d\nu_{t}^{0}\left(s\right) &= \nu_{t+1}^{0}\left(E\right) .
\end{align}
\end{subequations}
\end{linenomath}
Therefore, we see that
\begin{linenomath}
\begin{align}
\sum_{t=0}^{\infty} &\lambda^{t}\int\limits_{s\in S_{X}\sqcup S_{Y}}\Big[\chi_{E}\left(s\right) - \lambda\omega_{X}\sigma_{X}\left[s\right]\left(E\right)\Big]\,d\nu_{t}^{0}\left(s\right) \nonumber \\
&= \sum_{t=0}^{\infty}\lambda^{t}\Big( \nu_{t}^{0}\left(E\right) - \lambda\nu_{t+1}^{0}\left(E\right) \Big) \nonumber \\
&= \nu_{0}^{0}\left(E\right) - \lim_{t\rightarrow\infty}\lambda^{t+1}\nu_{t+1}^{0}\left(E\right) \nonumber \\
&= \nu_{0}^{0}\left(E\right) \nonumber \\
&= \omega_{X}\sigma_{X}^{0}\left(E\right) ,
\end{align}
\end{linenomath}
which completes the proof.
\end{proof}

\begin{proposition}\label{prop:mainPropRandom}
For any bounded, measurable function, $\psi :S_{X}\sqcup S_{Y}\rightarrow\mathbb{R}$, with $\textrm{supp}\,\psi\subseteq S_{X}$,
\begin{linenomath}
\begin{align}
\sum_{t=0}^{\infty}\lambda^{t}\int\limits_{s\in S_{X}\sqcup S_{Y}}\left[\psi\left(s\right) - \lambda\omega_{X}\int\limits_{s'\in S_{X}}\psi\left(s'\right)\,\sigma_{X}\left[s\right]\left(s'\right)\right]\,d\nu_{t}^{0}\left(s\right) &= \omega_{X}\int\limits_{s\in S_{X}}\psi\left(s\right)\,d\sigma_{X}^{0}\left(s\right) .
\end{align}
\end{linenomath}
\end{proposition}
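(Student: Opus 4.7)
The plan is to bootstrap Lemma \ref{lem:mainLemmaRandom} from indicator functions of measurable subsets of $S_X$ to arbitrary bounded measurable $\psi$ with $\mathrm{supp}\,\psi\subseteq S_X$, following the standard measure-theoretic ladder: indicators, then simple functions, then bounded measurable functions, with each step justified by linearity or dominated convergence.

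First, I would note that for $\psi=\chi_E$ with $E\in\mathcal{F}(S_X)$, we have $\int_{s'\in S_X}\chi_E(s')\,\sigma_X[s](s')=\sigma_X[s](E)$, so the identity reduces exactly to Lemma \ref{lem:mainLemmaRandom}. Next, by linearity of the integrand in $\psi$ (the maps $\psi\mapsto\int\psi\,d\sigma_X[s]$, $\psi\mapsto\int\psi\,d\nu_t^0$, $\psi\mapsto\int\psi\,d\sigma_X^0$ are all linear) and linearity of the sum over $t$, the identity holds for every simple function $\psi=\sum_{i=1}^n c_i\chi_{E_i}$ with $E_i\in\mathcal{F}(S_X)$; only finitely many nonzero terms are manipulated here, so no convergence issues arise.

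For the third step, given bounded measurable $\psi$ with $\mathrm{supp}\,\psi\subseteq S_X$ and $M:=\|\psi\|_\infty<\infty$, pick a sequence of simple functions $\psi_n$ supported on $S_X$ with $|\psi_n|\le M$ and $\psi_n\to\psi$ pointwise (standard). The identity holds for each $\psi_n$; I want to pass to the limit $n\to\infty$ on both sides. For the right-hand side, $\omega_X\int\psi_n\,d\sigma_X^0\to\omega_X\int\psi\,d\sigma_X^0$ by dominated convergence (dominator: the constant $M$, which is $\sigma_X^0$-integrable since $\sigma_X^0$ is a probability measure). For the left-hand side, I handle the integral and the series separately. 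For fixed $t$, the integrand
\[
\psi_n(s)-\lambda\omega_X\int_{s'\in S_X}\psi_n(s')\,\sigma_X[s](s')
\]
is bounded in absolute value by $M(1+\lambda\omega_X)$, and $\nu_t^0$ is a probability measure, so dominated convergence yields convergence of the inner integral to the corresponding expression with $\psi$ in place of $\psi_n$. The resulting $t$-th term of the outer series is then bounded uniformly in $n$ by $\lambda^t M(1+\lambda\omega_X)$, which is summable in $t$ since $0<\lambda<1$; hence dominated convergence for series lets me interchange $\lim_n$ and $\sum_t$.

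The argument is essentially routine, and the only place to exercise any care is in identifying the two dominators $M$ (for the $\nu_t^0$-integral) and $\lambda^t M(1+\lambda\omega_X)$ (for the $t$-sum), so that dominated convergence applies at both stages. I would flag that the argument is directly parallel to the proofs of Propositions \ref{prop:mainPropStrictI} and \ref{prop:mainPropStrictII} and to Proposition 1 of \citep{mcavoy:PNAS:2016}, and omit the detailed calculation accordingly.
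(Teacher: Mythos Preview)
Your proposal is correct and follows exactly the approach the paper indicates: the paper's own proof simply states that the result follows from Lemma~\ref{lem:mainLemmaRandom} and the dominated convergence theorem, with the details being the same as Proposition~1 of \citep{mcavoy:PNAS:2016}. You have in fact supplied more detail than the paper does, correctly identifying the indicator-to-simple-to-bounded ladder and the two dominators needed for the inner integral and the outer series.
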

\begin{proof}
The result follows from Lemma \ref{lem:mainLemmaRandom} and the dominated convergence theorem. We do not include the details here; the argument is the same as the proof of Proposition 1 of \citep{mcavoy:PNAS:2016}.
\end{proof}

Proposition \ref{prop:mainPropRandom} allows us to prove our main result for randomly-alternating games:
\begin{randomTheorem}[Autocratic strategies for randomly-alternating games]
If, for some bounded $\psi$,
\begin{linenomath}
\begin{subequations}\label{sieq:randomConds}
\begin{align}
\alpha f_{X}\left(x\right) + \beta f_{Y}\left(x\right) + \gamma &= \psi\left(x\right) - \lambda \omega_{X}\int\limits_{s\in S_{X}}\psi\left(s\right)\,d\sigma_{X}^{X}\left[x\right]\left(s\right) -\left(1-\lambda\right)\omega_{X}\int\limits_{s\in S_{X}}\psi\left(s\right)\,d\sigma_{X}^{0}\left(s\right) ; \label{sieq:randomCondI} \\
\alpha g_{X}\left(y\right) + \beta g_{Y}\left(y\right) + \gamma &= -\lambda \omega_{X}\int\limits_{s\in S_{X}}\psi\left(s\right)\,\sigma_{X}^{Y}\left[y\right]\left(s\right) -\left(1-\lambda\right)\omega_{X}\int\limits_{s\in S_{X}}\psi\left(s\right)\,d\sigma_{X}^{0}\left(s\right) \label{sieq:randomCondII}
\end{align}
\end{subequations}
\end{linenomath}
for each $x\in S_{X}$ and $y\in S_{Y}$, then the strategy $\left(\sigma_{X}^{0},\sigma_{X}^{X}\left[x\right] ,\sigma_{X}^{Y}\left[y\right]\right)$ allows $X$ to enforce the equation $\alpha\pi_{X}+\beta\pi_{Y}+\gamma =0$ for every strategy of player $Y$.
\end{randomTheorem}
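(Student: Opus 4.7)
The plan is to follow the template used for Theorems \ref{thm:strictX} and \ref{thm:strictY}: exploit the unified single-round payoff functions $u_{X},u_{Y}:S_{X}\sqcup S_{Y}\rightarrow\mathbb{R}$ of (\ref{sieq:singleRoundRandom}) to write
\begin{equation*}
\alpha\pi_{X}+\beta\pi_{Y}+\gamma = (1-\lambda)\sum_{t=0}^{\infty}\lambda^{t}\int_{S_{X}\sqcup S_{Y}}\bigl[\alpha u_{X}(s)+\beta u_{Y}(s)+\gamma\bigr]\,d\nu_{t}^{0}(s),
\end{equation*}
using that each $\nu_{t}^{0}$ is a probability measure to absorb the constant $\gamma$ into the $\lambda^{t}$-sum via $\sum_{t\geqslant 0}(1-\lambda)\lambda^{t}=1$. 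This reduces the task to showing that the $\lambda^{t}$-sum of the integrand vanishes for every behavioral strategy of $Y$.

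Next I would extend the given bounded measurable $\psi:S_{X}\rightarrow\mathbb{R}$ to all of $S_{X}\sqcup S_{Y}$ by declaring $\psi\equiv 0$ on $S_{Y}$, so that $\mathrm{supp}\,\psi\subseteq S_{X}$ as required by Proposition \ref{prop:mainPropRandom}. Using the combined kernel $\sigma_{X}[s]$ of (\ref{sieq:XYcombinedRandom}), the two hypotheses (\ref{sieq:randomCondI}) and (\ref{sieq:randomCondII}) then merge into a single case-free identity
\begin{equation*}
\alpha u_{X}(s)+\beta u_{Y}(s)+\gamma = \psi(s)-\lambda\omega_{X}\int_{S_{X}}\psi\,d\sigma_{X}[s]-(1-\lambda)\omega_{X}\int_{S_{X}}\psi\,d\sigma_{X}^{0}
\end{equation*}
valid for every $s\in S_{X}\sqcup S_{Y}$: on $S_{X}$ it is exactly (\ref{sieq:randomCondI}), while on $S_{Y}$ the $\psi(s)$ term vanishes by construction and one recovers (\ref{sieq:randomCondII}).

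Finally I would integrate this identity against $\lambda^{t}\,d\nu_{t}^{0}(s)$ and sum over $t\geqslant 0$. Proposition \ref{prop:mainPropRandom} disposes of the $\psi(s)-\lambda\omega_{X}\int\psi\,d\sigma_{X}[s]$ piece and contributes $\omega_{X}\int\psi\,d\sigma_{X}^{0}$, while the $s$-independent correction term contributes $(1-\lambda)\omega_{X}\int\psi\,d\sigma_{X}^{0}\cdot\sum_{t\geqslant 0}\lambda^{t}=\omega_{X}\int\psi\,d\sigma_{X}^{0}$. These two contributions cancel exactly, so $\sum_{t\geqslant 0}\lambda^{t}\int(\alpha u_{X}+\beta u_{Y}+\gamma)\,d\nu_{t}^{0}=0$, and hence $\alpha\pi_{X}+\beta\pi_{Y}+\gamma=0$ against every behavioral strategy of $Y$.

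The only real subtlety is this \emph{support-extension trick}: one must extend $\psi$ by zero (and not by anything else) in order to combine the asymmetric pair (\ref{sieq:randomCondI})--(\ref{sieq:randomCondII}) into a single identity on $S_{X}\sqcup S_{Y}$, and this particular extension is precisely what matches the support hypothesis of Proposition \ref{prop:mainPropRandom}. Everything else is bookkeeping dictated by the choice of normalization: the $\lambda^{t}$ trick that absorbs $\gamma$ is the same device used in the proofs of Theorems \ref{thm:strictX} and \ref{thm:strictY}, and the coefficient $(1-\lambda)\omega_{X}$ in the statement is exactly what is needed so that the Proposition \ref{prop:mainPropRandom} contribution and the $\sigma_{X}^{0}$-correction cancel.
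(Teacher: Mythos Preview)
Your proposal is correct and follows essentially the same approach as the paper's own proof: both combine the two hypotheses into a single identity on $S_{X}\sqcup S_{Y}$ (using that $\psi$ is supported on $S_{X}$), integrate against $\lambda^{t}\,d\nu_{t}^{0}$, sum, and invoke Proposition~\ref{prop:mainPropRandom} so that the $\sigma_{X}^{0}$ contributions cancel. The only cosmetic difference is that the paper moves the $(1-\lambda)\omega_{X}\int\psi\,d\sigma_{X}^{0}$ term to the left-hand side before summing, whereas you leave it on the right and cancel after summing; you are also more explicit about the zero-extension of $\psi$, which the paper builds silently into the support hypothesis of Proposition~\ref{prop:mainPropRandom}.
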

\begin{proof}
If Eq. (\ref{sieq:randomConds}) holds, then, by Proposition \ref{prop:mainPropRandom} and Eqs. (\ref{sieq:objectiveXRandom}) and (\ref{sieq:objectiveYRandom}),
\begin{linenomath}
\begin{align}
\alpha\pi_{X} + &\beta\pi_{Y} + \gamma + \left(1-\lambda\right)\omega_{X}\int\limits_{s\in S_{X}}\psi\left(s\right)\,d\sigma_{X}^{0}\left(s\right) \nonumber \\
&= \left(1-\lambda\right)\sum_{t=0}^{\infty}\lambda^{t}\int\limits_{s\in S_{X}\sqcup S_{Y}}\left[\psi\left(s\right) - \lambda\omega_{X}\int\limits_{s'\in S_{X}}\psi\left(s'\right)\,\sigma_{X}\left[s\right]\left(s'\right)\right]\,d\nu_{t}^{0}\left(s\right) \nonumber \\
&= \left(1-\lambda\right)\omega_{X}\int\limits_{s\in S_{X}}\psi\left(s\right)\,d\sigma_{X}^{0}\left(s\right) ,
\end{align}
\end{linenomath}
from which it follows that $\alpha\pi_{X}+\beta\pi_{Y}+\gamma =0$, as desired.
\end{proof}

\subsection{Two-point autocratic strategies}\label{si:subsec:twoPointStrategies}

Suppose that $X$ wishes to enforce $\alpha\pi_{X}+\beta\pi_{Y}+\gamma =0$ with
\begin{linenomath}
\begin{subequations}\label{sieq:twoPointRandomDef}
\begin{align}
\sigma_{X}^{0} &= p_{0}\delta_{s_{1}}+\left(1-p_{0}\right)\delta_{s_{2}} ; \\
\sigma_{X}^{X}\left[x\right] &= p^{X}\left(x\right)\delta_{s_{1}}+\Big(1-p^{X}\left(x\right)\Big)\delta_{s_{2}} ; \\
\sigma_{X}^{Y}\left[y\right] &= p^{Y}\left(y\right)\delta_{s_{1}}+\Big(1-p^{Y}\left(y\right)\Big)\delta_{s_{2}}
\end{align}
\end{subequations}
\end{linenomath}
for some $s_{1}$ and $s_{2}$ in $S_{X}$. Consider the function, $\varphi :S_{X}\sqcup S_{Y}\rightarrow\mathbb{R}$, defined by
\begin{linenomath}
\begin{align}
\varphi\left(s\right) &:= \begin{cases}\alpha f_{X}\left(s\right) + \beta f_{Y}\left(s\right) + \gamma & s\in S_{X} ; \\ \alpha g_{X}\left(s\right) + \beta g_{Y}\left(s\right) + \gamma & s\in S_{Y} .\end{cases}
\end{align}
\end{linenomath}
Then, in terms of $\varphi$, it must be the case that
\begin{linenomath}
\begin{subequations}
\begin{align}
p^{X}\left(x\right) &= \frac{\frac{1}{\lambda\omega_{X}}\left(\psi\left(x\right) - \varphi\left(x\right) - \left(1-\lambda\right)\omega_{X}\Big(\psi\left(s_{1}\right) p_{0}+\psi\left(s_{2}\right)\left(1-p_{0}\right)\Big)\right) - \psi\left(s_{2}\right)}{\psi\left(s_{1}\right) - \psi\left(s_{2}\right)} ; \label{sieq:twoPointRandomI} \\
p^{Y}\left(y\right) &= \frac{\frac{1}{\lambda\omega_{X}}\left(- \varphi\left(y\right) - \left(1-\lambda\right)\omega_{X}\Big(\psi\left(s_{1}\right) p_{0}+\psi\left(s_{2}\right)\left(1-p_{0}\right)\Big)\right) - \psi\left(s_{2}\right)}{\psi\left(s_{1}\right) - \psi\left(s_{2}\right)} . \label{sieq:twoPointRandomII}
\end{align}
\end{subequations}
Therefore, provided $0\leqslant p^{X}\left(x\right)\leqslant 1$ and $0\leqslant p^{Y}\left(y\right)\leqslant 1$ for each $x\in S_{X}$ and $y\in S_{Y}$, the two-point strategy defined by Eq. (\ref{sieq:twoPointRandomDef}) allows player $X$ to unilaterally enforce the relationship $\alpha\pi_{X}+\beta\pi_{Y}+\gamma =0$.
\end{linenomath}

\subsection{Deterministic autocratic strategies}\label{si:subsec:deterministic}

Suppose that $X$ wishes to enforce $\alpha\pi_{X}+\beta\pi_{Y}+\gamma =0$ by using a deterministic strategy, which is defined in terms of a reaction function to the previous move of the game. That is, a deterministic memory-one strategy for player $X$ consists of an initial action, $x_{0}\in S_{X}$, and two reaction functions, $r^{X}:S_{X}\rightarrow S_{X}$ and $r^{Y}:S_{Y}\rightarrow S_{X}$. Player $X$ begins by using $x_{0}$ with certainty. If $X$ uses $x$ in the previous round and $X$ moves again, then $X$ plays $r^{X}\left(x\right)$ in the subsequent round. On the other hand, if $Y$ moves $y$ in the previous round and $X$ follows this move, then $X$ plays $r^{Y}\left(y\right)$ in response to $Y$'s action. For a deterministic strategy with these reaction functions, Eq. (\ref{sieq:randomConds}) takes the form
\begin{linenomath}
\begin{subequations}
\begin{align}
\alpha f_{X}\left(x\right) + \beta f_{Y}\left(x\right) + \gamma &= \psi\left(x\right) - \lambda\omega_{X}\psi\left(r^{X}\left(x\right)\right) -\left(1-\lambda\right)\omega_{X}\psi\left(x_{0}\right) ; \\
\alpha g_{X}\left(y\right) + \beta g_{Y}\left(y\right) + \gamma &= -\lambda \omega_{X}\psi\left(r^{Y}\left(y\right)\right) -\left(1-\lambda\right)\omega_{X}\psi\left(x_{0}\right) .
\end{align}
\end{subequations}
\end{linenomath}

\section{Continuous Donation Game}\label{si:subsec:CDG}

The results we give for the continuous Donation Game hold for any benefit and cost functions, $b\left(s\right)$ and $c\left(s\right)$, and any interval of cooperation levels, $\left[0,K\right] =S_{X}=S_{Y}$. For the purposes of plotting feasible regions (Figs. \ref{fig:feasibleRegion} and \ref{fig:regionsStrictIandII}) and for performing simulations (Figs. \ref{fig:twoPointSimulation} and \ref{fig:deterministicSimulation}), we use for benefit and cost functions
\begin{linenomath}
\begin{subequations}
\begin{align}
b\left(s\right) &:= 5\left(1-e^{-2s}\right) ; \\
c\left(s\right) &:= 2s ,
\end{align}
\end{subequations}
\end{linenomath}
respectively, and these functions are defined on the interval $\left[0,2\right] =S_{X}=S_{Y}$ \citep[see][]{killingback:PRSB:1999,killingback:AN:2002}.

\subsection{Strictly-alternating moves}\label{si:subsubsec:CDGstrict}

Fig. \ref{fig:regionsStrictIandII} shows the feasible payoff regions for three values of $\lambda$ in the strictly-alternating game when $X$ or $Y$ moves first.
\begin{figure}
\begin{center}
\includegraphics[scale=0.55]{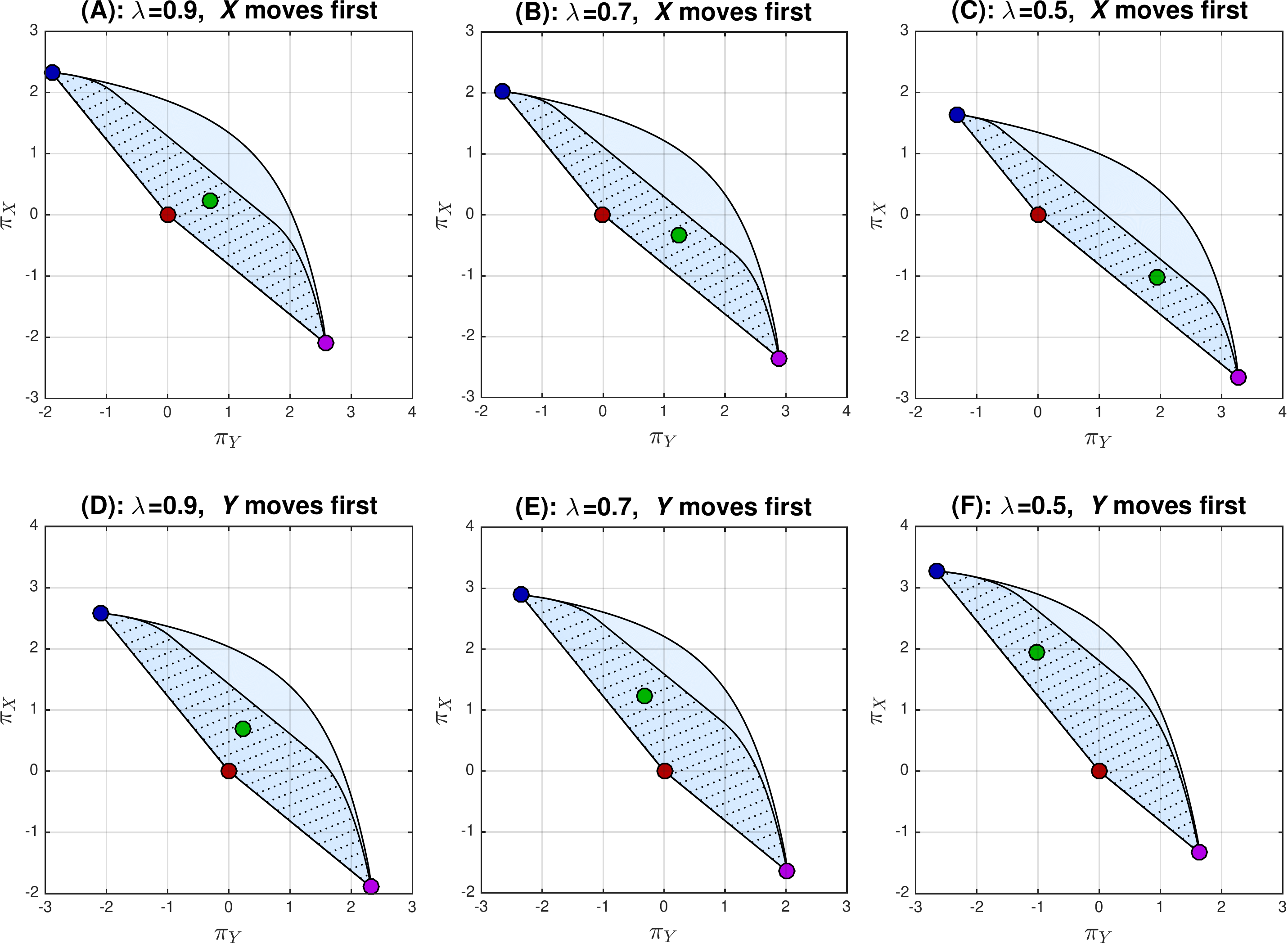}
\end{center}
\caption{Feasible payoff regions for three values of $\lambda$ in the strictly-alternating, continuous Donation Game when player $X$ moves first (top) and player $Y$ moves first (bottom). The shaded region represents the feasible payoffs when $X$ plays a two-point strategy (only $0$ and $K$). As the discounting factor, $\lambda$, gets smaller (i.e. discounting stronger), the first move has more of a pronounced effect on the expected payoffs.\label{fig:regionsStrictIandII}}
\end{figure}
Note that these regions depend on the discounting factor, $\lambda$, due to the payoff asymmetries introduced by the sequential moves even for symmetric interactions. In contrast, in the randomly-alternating, continuous Donation Game, these regions do not depend on $\lambda$.

\subsubsection{Initial actions}

In the main text, we presented two-point autocratic strategies that are concentrated on just $0$ and $K$. Here, we give the technical requirements of the probability of initially cooperating, $p_{0}$.

Player $X$ can unilaterally enforce $\pi_{X}-\kappa_{X}=\chi\left(\pi_{Y}-\kappa_{Y}\right)$ using the reaction probability
\begin{linenomath}
\begin{align}
p\left(y\right) &= \frac{\lambda\left(b\left(y\right) +\chi c\left(y\right)\right) + \left(1+\lambda\right)\left(\chi\kappa_{Y}-\kappa_{X}\right)}{\lambda^{2}\left(\chi b\left(K\right) +c\left(K\right)\right)} - \frac{1-\lambda^{2}}{\lambda^{2}} p_{0} ,
\end{align}
\end{linenomath}
provided her probability of cooperating in the first round, $p_{0}$, satisfies
\begin{linenomath}
\begin{align}\label{sieq:extGenStrictInitial}
\max &\left\{\frac{\!\lambda\left(b\left(K\right)\! +\!\chi c\left(K\right)\!\right) +\left(1+\lambda\right)\!\left(\chi\kappa_{Y}\!-\!\kappa_{X}\right)\!}{\left(1-\lambda^{2}\right)\!\left(\chi\, b\left(K\right) +c\left(K\right)\!\right)}-\frac{\lambda^2}{1-\lambda^2},0\right\} 
\leqslant p_{0} \leqslant \min\left\{\frac{\chi\kappa_{Y}-\kappa_{X}}{\left(1-\lambda\right)\!\left(\chi\, b\left(K\right)\! +c\left(K\right)\!\right)},1\right\} .
\end{align}
\end{linenomath}
Similarly, player $X$ can enforce $\pi_{Y}=\gamma$ using the reaction probability
\begin{linenomath}
\begin{align}
p\left(y\right) &= \frac{\lambda c\left(y\right) + \left(1+\lambda\right)\gamma}{\lambda^{2}b\left(K\right)} - \frac{1-\lambda^{2}}{\lambda^{2}} p_{0} ,
\end{align}
\end{linenomath}
provided $p_{0}$ satisfies
\begin{linenomath}
\begin{align}\label{sieq:equalizerStrictInitial}
\max\left\{\frac{\lambda c\left(K\right) +\left(1+\lambda\right)\gamma}{\left(1-\lambda^{2}\right) b\left(K\right)}-\frac{\lambda^2}{1-\lambda^2},0\right\} \leqslant p_{0} \leqslant \min\left\{\frac{\gamma}{\left(1-\lambda\right) b\left(K\right)},1\right\} .
\end{align}
\end{linenomath}

\subsection{Randomly-alternating moves}\label{si:subsubsec:CDGrandom}

\subsubsection{Extortionate and generous strategies}

Suppose that, via Eq. (\ref{sieq:randomConds}), $X$ can enforce $\pi_{X}=\chi\pi_{Y}-\gamma$ for some $\chi\geqslant 1$ and $\gamma\in\mathbb{R}$. Then, for some bounded function, $\psi :\left[0,K\right]\rightarrow\mathbb{R}$,
\begin{linenomath}
\begin{subequations}
\begin{align}
-c\left(x\right) - \chi b\left(x\right) + \gamma &= \psi\left(x\right) - \lambda \omega_{X}\int\limits_{s\in S_{X}}\psi\left(s\right)\,d\sigma_{X}^{X}\left[x\right]\left(s\right) -\left(1-\lambda\right)\omega_{X}\int\limits_{s\in S_{X}}\psi\left(s\right)\,d\sigma_{X}^{0}\left(s\right) ; \label{sieq:extGenFirst} \\
b\left(y\right) + \chi c\left(y\right) + \gamma &= -\lambda \omega_{X}\int\limits_{s\in S_{X}}\psi\left(s\right)\,\sigma_{X}^{Y}\left[y\right]\left(s\right) -\left(1-\lambda\right)\omega_{X}\int\limits_{s\in S_{X}}\psi\left(s\right)\,d\sigma_{X}^{0}\left(s\right) \label{sieq:extGenSecond}
\end{align}
\end{subequations}
\end{linenomath}
for each $x,y\in\left[0,K\right] =S_{X}=S_{Y}$. Eq. (\ref{sieq:extGenFirst}) implies that
\begin{linenomath}
\begin{align}
\left(1-\omega_{X}\right)\sup\psi \leqslant \gamma \leqslant \chi b\left(K\right) + c\left(K\right) + \left(1-\omega_{X}\right)\inf\psi ,
\end{align}
\end{linenomath}
and Eq. (\ref{sieq:extGenSecond}) implies that
\begin{linenomath}
\begin{align}
-\omega_{X}\sup\psi \leqslant \gamma \leqslant -b\left(K\right) -\chi c\left(K\right) -\omega_{X}\inf\psi .
\end{align}
\end{linenomath}
It follows at once from these inequalities that
\begin{linenomath}
\begin{align}
0 \leqslant \gamma \leqslant \chi\Big(\omega_{X}b\left(K\right) - \left(1-\omega_{X}\right) c\left(K\right)\Big) - \Big(\left(1-\omega_{X}\right) b\left(K\right) -\omega_{X}c\left(K\right)\Big) .
\end{align}
\end{linenomath}
In particular, if $\gamma =\chi\kappa_{Y}-\kappa_{X}$, then it must be true that
\begin{linenomath}
\begin{align}
\kappa_{X} = \kappa_{X} + \left( \chi\kappa_{Y}^{00} - \kappa_{X}^{00} \right) \leqslant \chi\kappa_{Y} \leqslant \kappa_{X} + \left( \chi\kappa_{Y}^{KK} - \kappa_{X}^{KK} \right) ,
\end{align}
\end{linenomath}
which is simply Eq. (\ref{eq:kappaInequalities}) in the main text.

\subsubsection{Equalizer strategies -- own score}\label{si:subsubsec:equalizers}

Player $X$ can ensure that $\pi_{X}=\gamma$ if $\sigma_{X}$ satisfies
\begin{linenomath}
\begin{subequations}\label{eq:setOwn}
\begin{align}
-c\left(x\right) - \gamma &= \psi\left(x\right) - \lambda \omega_{X}\int\limits_{s\in S_{X}}\psi\left(s\right)\,d\sigma_{X}^{X}\left[x\right]\left(s\right) -\left(1-\lambda\right)\omega_{X}\int\limits_{s\in S_{X}}\psi\left(s\right)\,d\sigma_{X}^{0}\left(s\right) ; \label{eq:setOwnFirst} \\
b\left(y\right) - \gamma &= -\lambda \omega_{X}\int\limits_{s\in S_{X}}\psi\left(s\right)\,\sigma_{X}^{Y}\left[y\right]\left(s\right) -\left(1-\lambda\right)\omega_{X}\int\limits_{s\in S_{X}}\psi\left(s\right)\,d\sigma_{X}^{0}\left(s\right) \label{eq:setOwnSecond}
\end{align}
\end{subequations}
\end{linenomath}
for some bounded function, $\psi$, and each $x,y\in\left[0,K\right]$. Eq. (\ref{eq:setOwnFirst}) implies that
\begin{linenomath}
\begin{align}
\gamma &\leqslant -\left(1-\omega_{X}\right)\sup\psi
\end{align}
\end{linenomath}
and Eq. (\ref{eq:setOwnSecond}) implies that
\begin{linenomath}
\begin{align}
\gamma &\leqslant \omega_{X}\sup\psi ,
\end{align}
\end{linenomath}
thus $\gamma\leqslant 0$. Therefore, player $X$ can unilaterally set her own score to at most $0$. However, it should be noted that, in contrast to the continuous Donation Game with simultaneous moves, it is possible for a player to set her own score (to at most $0$) when moves alternate randomly. For example, if $\gamma =0$ and $\psi\left(s\right) =-\frac{1}{\lambda\omega_{X}}b\left(s\right)$, then player $X$ can unilaterally set $\pi_{X}=0$ using
\begin{linenomath}
\begin{subequations}
\begin{align}
\sigma_{X}^{0} &= \delta_{0} ; \\
\sigma_{X}^{X}\left[x\right] &= \Big(1-p^{X}\left(x\right)\Big)\delta_{0} + p^{X}\left(x\right)\delta_{K} ; \\
\sigma_{X}^{Y}\left[y\right] &= \delta_{y} ,
\end{align}
\end{subequations}
\end{linenomath}
where $p^{X}\left(x\right) =\frac{1}{b\left(K\right)}\left(\frac{1}{\lambda\omega_{X}}b\left(x\right) -c\left(x\right)\right)$, provided $\lambda\omega_{X}$ is sufficiently close to $1$. Interestingly, however, if players move with equal probability, $\omega_{X}=1/2$, then player $X$ can never set her own score: Eq. (\ref{eq:setOwn}) implies that $\lambda\omega_{X}\geqslant\frac{b\left(K\right)}{b\left(K\right) +c\left(K\right)}>1/2$, which can never hold for $\omega_{X}=1/2$ and $0\leqslant\lambda\leqslant 1$. Even when a player can set her own score in the continuous Donation Game, this score can be at most $0$; thus, since a player can achieve at least $0$ by defecting in every round, such an equalizer strategy would never be attractive.

\subsubsection{Equalizer strategies -- opponent's score}\label{si:subsubsec:equalizersII}

We saw in \S\ref{subsubsec:randomCDG} that player $X$ can set $\pi_{Y}=\gamma$ for any $0\leqslant\gamma\leqslant\omega_{X}b\left(K\right) -\left(1-\omega_{X}\right) c\left(K\right)$. Here, we show that, using Eq. (\ref{eq:mainEquationRandom}), there are no other payoffs for player $Y$ that $X$ can set unilaterally. Indeed, suppose
\begin{linenomath}
\begin{subequations}
\begin{align}
b\left(x\right) - \gamma &= \psi\left(x\right) - \lambda \omega_{X}\int\limits_{s\in S_{X}}\psi\left(s\right)\,d\sigma_{X}^{X}\left[x\right]\left(s\right) -\left(1-\lambda\right)\omega_{X}\int\limits_{s\in S_{X}}\psi\left(s\right)\,d\sigma_{X}^{0}\left(s\right) ; \label{eq:setOpponentFirst} \\
-c\left(y\right) - \gamma &= -\lambda \omega_{X}\int\limits_{s\in S_{X}}\psi\left(s\right)\,\sigma_{X}^{Y}\left[y\right]\left(s\right) -\left(1-\lambda\right)\omega_{X}\int\limits_{s\in S_{X}}\psi\left(s\right)\,d\sigma_{X}^{0}\left(s\right) \label{eq:setOpponentSecond}
\end{align}
\end{subequations}
\end{linenomath}
for some bounded function, $\psi$, and each $x,y\in\left[0,K\right]$. From Eq. (\ref{eq:setOpponentFirst}), we see that
\begin{linenomath}
\begin{align}
-\left(1-\omega_{X}\right)\inf\psi &\leqslant \gamma \leqslant b\left(K\right) - \left(1-\omega_{X}\right)\sup\psi .
\end{align}
\end{linenomath}
Similarly, Eq. (\ref{eq:setOpponentSecond}) implies that
\begin{linenomath}
\begin{align}
\omega_{X}\inf\psi &\leqslant \gamma \leqslant -c\left(K\right) + \omega_{X}\sup\psi .
\end{align}
\end{linenomath}
These inequalities immediately give $0\leqslant\gamma\leqslant\min\Big\{b\left(K\right) - \left(1-\omega_{X}\right)\sup\psi , -c\left(K\right) + \omega_{X}\sup\psi\Big\}$. Since
\begin{linenomath}
\begin{align}
b\left(K\right) - \left(1-\omega_{X}\right)\sup\psi \leqslant -c\left(K\right) + \omega_{X}\sup\psi &\iff b\left(K\right) + c\left(K\right) \leqslant \sup\psi ,
\end{align}
\end{linenomath}
it follows that $0\leqslant\gamma\leqslant\omega_{X}b\left(K\right) - \left(1-\omega_{X}\right) c\left(K\right) =\kappa_{Y}^{KK}$.

\subsubsection{Initial actions}

Here, we give the technical conditions on $p_{0}$, the probability that $X$ cooperates in the first round, that must be satisfied for her to be able to enforce various linear payoff relationships (extortionate, generous, and equalizer) using a two-point autocratic strategy.

Using the reaction probabilities defined by
\begin{linenomath}
\begin{subequations}
\begin{align}
p^{X}\left(x\right) &= \frac{b\left(x\right) +\chi c\left(x\right) +\chi\kappa_{Y}-\kappa_{X}}{\lambda\omega_{X}\left(\chi +1\right)\!\left(b\left(K\right) +c\left(K\right)\!\right)} - \frac{1-\lambda}{\lambda} p_{0} ; \\
p^{Y}\left(y\right) &= p^{X}\left(y\right) ,
\end{align}
\end{subequations}
\end{linenomath}
player $X$ can enforce $\pi_{X}-\kappa_{X}=\chi\left(\pi_{Y}-\kappa_{Y}\right)$ provided $p_{0}$ satisfies
\begin{linenomath}
\begin{align}\label{sieq:extGenRandomInitial}
\max &\left\{\frac{b\left(K\right) +\chi c\left(K\right) + \chi\kappa_{Y}-\kappa_{X}}{\left(1-\lambda\right)\omega_{X}\!\left(\chi +1\right)\left( b\left(K\right) + c\left(K\right) \right)}-\frac{\lambda}{1-\lambda},\ 0\right\} 
\leqslant p_{0} \leqslant \min\left\{\frac{\chi\kappa_{Y}-\kappa_{X}}{\left(1-\lambda\right)\omega_{X}\!\left(\chi +1\right)\left(b\left(K\right) +c\left(K\right)\right)},\ 1\right\} .
\end{align}
\end{linenomath}
Similarly, player $X$ can ensure $\pi_{Y}=\gamma$ by using the reaction probabilities
\begin{linenomath}
\begin{subequations}
\begin{align}
p^{X}\left(x\right) &= \frac{c\left(x\right) + \gamma}{\lambda\,\omega_{X}\left( b\left(K\right) + c\left(K\right) \right)} - \frac{1-\lambda}{\lambda} p_{0} ; \\
p^{Y}\left(y\right) &= p^{X}\left(y\right)
\end{align}
\end{subequations}
\end{linenomath}
as long as $p_{0}$ falls within the range
\begin{linenomath}
\begin{align}\label{sieq:equalizerRandomInitial}
\max\left\{\frac{c\left(K\right) +\gamma}{\left(1-\lambda\right)\omega_{X}\left(b\left(K\right) +c\left(K\right)\right)} - \frac{\lambda}{1-\lambda} ,0\right\} \leqslant p_{0} \leqslant \min\left\{\frac{\gamma}{\left(1-\lambda\right)\omega_{X}\left(b\left(K\right) +c\left(K\right)\right)} ,1\right\} .
\end{align}
\end{linenomath}

\end{document}